\pdfoutput=1

\documentclass[a4paper,UKenglish,autoref,cleveref]{lipics-v2021}

\pdfoutput=1 
\hideLIPIcs  


\bibliographystyle{plainurl}

\title{
Intuitionistic G\"odel-L\"ob logic, \`a la Simpson: labelled systems and birelational semantics
} 

\titlerunning{Intuitionistic G\"odel-L\"ob logic, \`a la Simpson} 

\author{Anupam Das}{University of Birmingham, United Kingdom
}{a.das@bham.ac.uk}{}{}

\author{Iris van der Giessen}{University of Birmingham, United Kingdom}{i.vandergiessen@bham.ac.uk}{}{}

\author{Sonia Marin}{University of Birmingham, United Kingdom}{s.marin@bham.ac.uk}{}{}

\authorrunning{A.~Das, I.~van der Giessen, and S.~Marin} 

\Copyright{Anupam Das, Iris van der Giessen, and Sonia Marin} 

\ccsdesc[500]{Theory of computation~Proof theory} 

\keywords{provability logic,
proof theory,
intuitionistic modal logic,
cyclic proofs,
non-wellfounded proofs,
proof search,
cut-elimination,
labelled sequents} 

\category{} 

\relatedversion{} 


\funding{This work was partially supported by a UKRI Future Leaders Fellowship, `Structure vs Invariants in Proofs', project reference MR/S035540/1.}

\acknowledgements{We would like to thank Marianna Girlando and Jan Rooduijn for several valuable discussions on the topic. We also thank Marianna for her significant input during our reading group sessions on intuitionistic modal logic that led to preliminary ideas that have resulted in this paper.}

\nolinenumbers 

\EventEditors{John Q. Open and Joan R. Access}
\EventNoEds{2}
\EventLongTitle{42nd Conference on Very Important Topics (CVIT 2016)}
\EventShortTitle{CVIT 2016}
\EventAcronym{CVIT}
\EventYear{2016}
\EventDate{December 24--27, 2016}
\EventLocation{Little Whinging, United Kingdom}
\EventLogo{}
\SeriesVolume{42}
\ArticleNo{23}
\usepackage{latexsym}
\usepackage{xcolor}

\usepackage{mathrsfs}
\usepackage{graphicx}
\usepackage{colonequals}
\usepackage{tikz}
\usetikzlibrary{positioning}
\usetikzlibrary{arrows}

\usepackage[inline]{enumitem}

\usepackage{virginialake}
\vlnosmallleftlabels
\newcommand{\vlderivationauxnc}[1]{#1{\box\derboxone}\vlderivationterm}
\newcommand{\vlderivationnc}{\vlderivationinit\vlderivationauxnc}


\newcommand{\demph}[1]{\textbf{#1}}

\newcommand{\red}[1]{{\color{red} #1}}

\newcommand{\impl}{\to}

\newcommand{\calP}{\mathcal{P}}

\newcommand{\calB}{\mathcal{B}}
\newcommand{\Bmod}{\mathcal{B}}
\newcommand{\Mmod}{\mathcal{M}}
\newcommand{\Kmod}{\mathcal{K}}

\newcommand{\dom}{\textnormal{Dom}}

\renewcommand{\emptyset}{\varnothing}

\renewcommand{\models}{\vDash}
\newcommand{\notmodels}{\nvDash}


\newcommand{\sttr}[2]{#1:#2}

\newcommand{\Prop}{{\sf Pr}}
\newcommand{\Var}{{\sf Var}}
\newcommand{\CL}[1]{#1^+}
\newcommand{\Lab}{\Var}

\newcommand{\STsub}[2]{\sttr #1 #2}


\newcommand{\fv}[1]{\Var(#1)}

\newcommand{\relcxt}[1]{\relats R_{#1}}

\newcommand{\proves}{\vdash}
\newcommand{\nwfproves}{\, \text{$\fontdimen14\textfont2=5.5pt \proves^{\hspace{-.3em} \infty}$} }


\newcommand{\K}{\mathsf K}
\newcommand{\Kfour}{\mathsf{K4}}

\newcommand{\GL}{{\sf GL}}

\newcommand{\IK}{{\sf IK}}

\newcommand{\IKfour}{{\sf IK4}}
\newcommand{\IGL}{{\sf IGL}}
\newcommand{\iGL}{{\sf iGL}}


\newcommand{\limp}{\to}


\newcommand{\relats}[1]{\mathbf{#1}}
\newcommand{\seqar}{\Rightarrow}


\newcommand{\trans}{\mathsf{tr}}

\newcommand{\id}{\mathsf{id}}
\newcommand{\cut}{\mathsf{cut}}

\newcommand{\leftrule}[1]{ #1\text{-}l}
\newcommand{\rightrule}[1]{#1\text{-}r}

\newcommand{\wk}{\mathsf{w}}
\newcommand{\contr}{\mathsf c}

\newcommand{\thinning}{\mathsf{th}}

\newcommand{\kaxiom}{({\sf k})}

\newcommand{\fouraxiom}{({\sf 4})}

\newcommand{\wlobaxiom}{({\sf l\ddot{o}b})}

\newcommand{\system}{\mathsf S}
\newcommand{\nwfprfclass}{\mathbf P}

\newcommand{\labelled}{\ell}
\newcommand{\multi}{\mathsf m}

\newcommand{\labK}{\labelled \K}
\newcommand{\labKfour}{\labelled\Kfour}
\newcommand{\labGL}{\labelled\GL}

\newcommand{\intlabK}{\labelled \IK }
\newcommand{\intlabKfour}{\labelled \IKfour}
\newcommand{\intlabGL}{\labelled \IGL}

    \newcommand{\labIK}{\intlabK}
    \newcommand{\labIGL}{\intlabGL}
    \newcommand{\labIKfour}{\intlabKfour}

\newcommand{\multiintlabKfour}{\multi\intlabKfour}
\newcommand{\multiintlabGL}{\multi\intlabGL}

    \newcommand{\mlIGL}{\multiintlabGL}
    \newcommand{\mlIKfour}{\multiintlabKfour}

\newcommand{\LK}{{\sf LK}}

\newcommand{\LJ}{{\sf LJ}}

\newcommand{\disjunctive}{\lor}
\newcommand{\dlabIKfour}{\disjunctive \labIKfour}
\newcommand{\dlabIGL}{\disjunctive\labIGL}

\renewcommand{\phi}{\varphi}

\newcommand{\calI}{\mathcal{I}}
\newcommand{\calJ}{\mathcal{J}}


\newcommand{\birel}{\mathscr B}
\newcommand{\pred}{\mathscr P}

\newcommand{\birelIGL}{\birel\IGL}
\newcommand{\predIGL}{\pred\IGL}

\newcommand{\Fframe}{\mathcal F}
\newcommand{\M}{\mathcal M}
\newcommand{\interp}[2]{#2^{#1}}

\newcommand{\prover}{\mathrm{P}}
\newcommand{\denier}{\mathrm{D}}


\newcommand{\cutred}{\mathsf{r}}
\newcommand{\downset}[1]{\lfloor #1 \rfloor}


\newcommand{\PA}{\mathrm{PA}}
\newcommand{\HA}{\mathrm{HA}}

\begin{document}

\maketitle

\begin{abstract}
We derive an intuitionistic version of Gödel-Löb modal logic ($\GL$) in the style of Simpson, via proof theoretic techniques. 
We recover a labelled system, $\labIGL$, by restricting a non-wellfounded labelled system for $\GL$ to have only one formula on the right. The latter is obtained using techniques from cyclic proof theory, sidestepping the barrier that $\GL$'s usual frame condition (converse well-foundedness) is not first-order definable.
While existing intuitionistic versions of $\GL$ are typically defined over only the box (and not the diamond), our presentation includes both modalities.

Our main result is that $\labIGL$ coincides with a corresponding semantic condition in birelational semantics: the composition of the modal relation and the intuitionistic relation is conversely well-founded. We call the resulting logic $\IGL$. While the soundness direction is proved using standard ideas, the completeness direction is more complex and necessitates a detour through several intermediate characterisations of $\IGL$. 
\end{abstract}

\section{Introduction}\label{sec:intro}

\emph{G{\"o}del-L{\"o}b logic} ($\GL$) originates in the \emph{provability} reading of modal logic: $\Box$ is interpreted as ``it is provable that'', in an arithmetical theory with suitable coding capacity, inducing its corresponding \emph{provability logic}. 
 L{\"o}b formulated a set of necessary conditions on the provability logic of Peano Arithmetic ($\PA$), giving rise to $\GL$, extending basic modal logic $\K$ by what we now call \emph{L{\"o}b's axiom}:
$\Box(\Box A \limp A) \limp \Box A$. 
Somewhat astoundingly $\GL$ turns out to be \emph{complete} for 
$\PA$'s provability logic, a celebrated result of Solovay~\cite{Solovay76}: all that $\PA$ can prove about its own provability is already a consequence of a relatively simple (and indeed decidable) propositional modal logic.

Proof theoretically L\"ob's axiom represents a form of \emph{induction}.
Indeed, at the level of modal logic semantics, $\GL$ enjoys a correspondence with transitive relational structures that are \emph{terminating}\footnote{Other authors refer to this property as \emph{Noetherian} or \emph{conversely well-founded}, but we opt for this simpler nomenclature.}~\cite{Segerberg71}.
Duly, in computer science, L{\"o}b's axiom 
has inspired several variants of modal type theories that extend simply-typed lambda calculus with some form of \emph{recursion}.
These range, for instance, from the seminal work of Nakano~\cite{Nakano00}, to more recent explorations into guarded recursion~\cite{Birkedal12} and intensional recursion~\cite{Kavvos21}.
Based in type theories, these developments naturally cast $\GL$ in a \emph{constructive} setting, but
little attention was given
to analysing
the induced intuitionistic modal logics.
Indeed, for this reason, \cite{CloustonGore15} has proposed a more foundational basis for studying computational interpretations of $\GL$, by way of a sequent calculus for the logic of the topos of trees.

Returning to the provability reading of modal logic, several constructive variants of $\GL$ have been independently proposed (see \cite{Litak14,vdGiessen22} for overviews). 
An important such logic is $\iGL$ (and its variants) which now enjoys a rich mathematical theory, from semantics (e.g.,~\cite{Litak14}) to proof theory (e.g.,~\cite{vdGiessenIemhoff21, GoreShillito22}).
Interestingly, while $\iGL$ is known to be sound for the provability logic of Heyting Arithmetic ($\HA$), the intuitionistic counterpart of $\PA$, it is not complete. 
The provability logic of $\HA$ has been recently announced by~\cite{Mojtahedi22}, currently under review. 

One shortfall of all the above mentioned approaches is that they do not allow us to recover a bona fide computational interpretation of \emph{classical} $\GL$ along, say, the G\"odel-Gentzen negative translation (GG), a standard way of lifting interpretations of intuitionistic logics to their classical counterparts.
Indeed it was recently observed that the `$\mathsf i$' (or `constructive') traditions of intuitionistic modal logic are too weak to validate the GG translation \cite{DasMarin22,DasMarin23}.

On the other hand modal logic's relational semantics effectively renders it a fragment of usual first-order predicate logic (FOL), the so-called \emph{standard translation}. 
Interpreting this semantics in an intuitionistic meta-theory defines a logic that \emph{does} validate GG, for the same reason that intuitionistic FOL GG-interprets classical FOL.
This is the approach taken (and considerably developed) by Simpson \cite{Simpson94PhD}, building on earlier work of 
Fischer Servi \cite{FischerServi77} and Plotkin and Stirling \cite{plotkin:stirling:86tark}.
The resulting logic $\IK$ (and friends) enjoys a remarkably robust proof theory by way of \emph{labelled systems}, which may be duly seen as a fragment of Gentzen's systems for intuitionistic FOL by way of the standard translation.

\paragraph*{Contribution}
In this work we develop an intuitionistic version of $\GL$, following Simpson's methodology \cite{Simpson94PhD}.
In particular, while logics such as $\iGL$ are defined over only the $\Box$, our logic is naturally formulated over both the $\Box$ and the $\Diamond$. 
A key stumbling block to this end, as noticed already in the classical setting by Negri in \cite{Negri05}, is that $\GL$'s correspondence to terminating relations cannot seemingly be inlined within a standard labelled system: termination is not even FOL-definable.
To side-step this barrier we draw from a now well-developed proof-theoretic approach to (co)induction: \emph{non-wellfounded proofs} (e.g.\ \cite{NiwinskiWalukiewicz96,BrotherstonSimpson11,Simpson17,BaeldeDS16,DasP18,Das20}). 
Such proofs allow infinite branches, and so are (typically) equipped with a \emph{progress condition} that ensures sound reasoning.
Starting from a standard labelled system for transitive relations, $\Kfour$, we recover a labelled calculus $\labGL$ for $\GL$ by allowing non-wellfounded derivations under a typical progress condition.
In fact, our progress condition is precisely the one from Simpson's \emph{Cyclic Arithmetic} \cite{Simpson17,Das20}.
Following (the same) Simpson, we duly recover an intuitionistic version $\labIGL$ of $\GL$ \emph{syntactically} in a standard way: we restrict $\labGL$ to one formula on the right.

At the same time we can recover an intuitionistic version of $\GL$ \emph{semantically} by suitably adapting the \emph{birelational semantics} of intuitionistic modal logics, which combine the partial order $\leq$ of intuitionistic semantics with the accessibility relation $R$ of modal semantics. 
Our semantic formulation
$\birelIGL$ is duly obtained by reading the termination criterion of classical~$\GL$'s semantics intuitionistically: the composition $\leq ; R$ must be terminating. 
Our main result is that these two characterisations, $\labIGL$ and $\birelIGL$, are indeed equivalent, and we duly dub the resulting logic $\IGL$.

The soundness direction
is proved via standard techniques from intuitionistic modal logic and non-wellfounded proof theory.
The completeness direction, on the other hand, is more cumbersome. 
To this end we exercise an intricate combination of proof theoretic techniques, 
necessitating two further (and ultimately equivalent) characterisations of $\IGL$: semantically $\predIGL$, essentially a class of intuitionistic FOL structures, and syntactically $\multiintlabGL$, a \emph{multi-succedent} variant of $\labIGL$.
These formulations facilitate a countermodel construction from failed proof search, inspired by Takeuti's for $\LJ$ \cite{Takeuti75}.
Due to the non-wellfoundedness of proofs we employ a determinacy principle to organise the construction,  a standard technique in non-wellfounded and cyclic proof theory (e.g.\ \cite{NiwinskiWalukiewicz96}).
For the reduction to $\labIGL$ we devise a form of \emph{continuous cut-elimination}, building on more recent ideas in non-wellfounded proof theory, cf.~\cite{BaeldeDS16,DasP18}.
Our `grand tour' of results is visualised in \cref{fig:tour}, also indicating the organisation of this paper.
Due to space constraints, proofs are relegated to appendices.

\begin{figure}[t]
    \centering
    \vspace{1.6em}
    \begin{tikzpicture}
        \node (ILGL) at (0,2.2) {$\labIGL$};
        \node (mILGL) at (0,0) {$\multiintlabGL$};
        \node (pred) at (6,0) {$\predIGL$}; 
        \node (birel) at (6,2.2) {$\birelIGL$}; 
        \draw [->] (mILGL) to node [midway,left] {\shortstack{\cref{thm:completeness}\\(\cref{sec:cut-elim})}} (ILGL) ; 
        \draw [->] (ILGL) to node [above] {\shortstack{\cref{thm:soundness}\\ (\cref{sec:soundness})}} (birel);
        \draw[->] (birel) to node[right]{\shortstack{\cref{prop:birel-sat-implies-pred-sat}\\(\cref{sec:predmodels})}} (pred);
        \draw[->] (pred) to node[above]{\shortstack{\cref{thm:completeness-of-mlIGL}\\(\cref{sec:completeness})}} (mILGL);
    \end{tikzpicture}
    \caption{Summary of our main results, where arrows $\to$ denote inclusions of modal logics.
 }
    \label{fig:tour}
\end{figure}
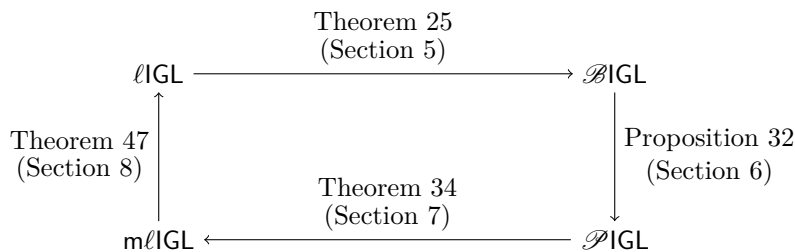

\paragraph*{Other related work}
The
proof theory of $\GL$ is (in)famously complex. 
The first sequent calculus of $\GL$ was considered in~\cite{Leivant81} but its cut-elimination property was only finally settled (positively) in~\cite{GoreRamanayake08} after several attempts~\cite{Valentini83,Sazaki01,Moen01}.
Intuitionistic versions of sequent calculi for~$\GL$ are developed in~\cite{vdGiessenIemhoff21, GoreShillito22} and provide calculi for $\iGL$.
Labelled calculi \cite{Negri05} and \emph{nested calculi}~\cite{Poggiolesi09} have also been proposed.
However all of these calculi are arguably unsatisfactory:
the modality introduction rules are non-standard (which is atypical for labelled calculi) and, in particular, modal rules may change the polarity of formula occurrences, due to the inductive nature of L\"ob's axiom. 
 
It is possible to recover a system for $\GL$ by admitting non-wellfounded proofs in a sequent calculus for $\Kfour$, as observed by Shamkanov~\cite{Shamkanov14}.
An intuitionistic version of this system has been studied by Iemhoff~\cite{Iemhoff16}.
In these works
both the base calculus and the corresponding correctness criterion are bespoke, rather than `recovered' from established foundations.

Non-wellfounded proofs 
originate in the study of modal logics with fixed points, in particular Niwinski and Walukiewicz's seminal work on the
$\mu$-calculus
~\cite{NiwinskiWalukiewicz96}.
These ideas were later inlined into the proof theory of FOL with certain inductive definitions by Brotherston and Simpson (e.g.\ \cite{BrotherstonSimpson11}), a source of inspiration for the present work.
As already mentioned, recent extensions of these ideas to $\PA$ \cite{Simpson17,Das20} and advances on cut-elimination \cite{BaeldeDS16,DasP18} are quite relevant to our development.

\section{Preliminaries on (classical) modal logic}\label{sec:prelims}
Throughout this work we work with a set $\Prop$ of \demph{propositional symbols}, written $p,q,$ etc., which we simultaneously construe as unary predicate symbols when working in predicate logic. 
For the latter we also assume a single binary relation symbol $R$ and a countable set $\Var$ of \demph{(individual) variables}, written $x,y,$ etc.

\subsection{Language and semantics}
\label{subsec:language-and-semantics}

The formulas of modal logic are generated by the following grammar:
\[A \quad \coloncolonequals \quad  p \in \Prop \ \mid\  \bot \ \mid\  A \land A \ \mid\  A \lor A \ \mid\  A \limp A \ \mid\  \Box A \ \mid\  \Diamond A\]
As usual we write $\lnot A := A \limp \bot$, and employ standard bracketing conventions.

Modal formulas are interpreted over \demph{relational frames} $\Fframe=(W,R^\Fframe)$ formed of a non empty set of \demph{worlds} $W$ equipped with an \demph{accessibility relation} $R^\Fframe \subseteq W \times W$.\footnote{We parameterise the relations by the frame or models to be able distinguish it from the fixed binary relation symbol $R$.}
 A \demph{relational model} $\Mmod = (W,R^\M,V)$ is a structure where $(W,R^\M)$ is a frame with a \demph{valuation} $V: W \to \mathcal{P}(\Prop)$.
 Let~$\Mmod = (W,R^\M,V)$ be a relational model.
 For worlds $w\in W$ and formulas $A$ we define the satisfaction judgement $\Mmod, w \models A$ as follows:

\smallskip

\begin{minipage}[t]{.35\textwidth}
	\begin{itemize}
		\item $\Mmod,w \models p$ if $p \in V(w)$; \ 
		\item $\Mmod,w \notmodels \bot$; \ 
	\end{itemize}
\end{minipage}
\begin{minipage}[t]{.55\textwidth}
	\begin{itemize}
		\item $\Mmod,w \models A \land B$ if $\Mmod,w \models A$ and $\Mmod,w \models B$; \ 
		\item $\Mmod,w \models A \lor B$ if $\Mmod,w \models A$ or $\Mmod,w \models B$; \ 
	\end{itemize}
\end{minipage}

\smallskip

\begin{minipage}[t]{.9\textwidth}
	\begin{itemize}
		\item $\Mmod,w \models A \limp B$ if $\Mmod,w \models A$ then $\Mmod,w \models B$; \ 
		\item $\Mmod,w \models \Box A$ if for all $v$ such that $wR^\M v$ we have $\Mmod,v \models A$; \ 
		\item $\Mmod,w \models \Diamond A$ if there exists $v$ such that $wR^\M v$ and $\Mmod,v \models A$; \ 
	\end{itemize}
	\end{minipage}

\smallskip
\noindent
If $\M,w \models A$ for all $w \in W$ we simply write $\M \models A$, and if $(W,R^\M,V) \models A$ for all valuations $V$ based on frame $\Fframe = (W, R^\M)$, we simply write $\Fframe\models A$.

\subsection{Axiomatisations and G\"odel-L\"ob logic}

Turning to syntax, let us now build up the modal logics we are concerned with axiomatically, before relating them to the semantics just discussed. 
The modal logic $\K$ is axiomatised by all the theorems of classical propositional logic (CPL) together with axiom $\kaxiom$ 
and closed under the rules $\sf (mp)$ (\emph{modus ponens}) and $\sf nec$ (\emph{necessitation}) from \cref{fig:axiomatisations}.
The logic $\Kfour$ is defined in the same way by further including the axiom $\fouraxiom$ from \cref{fig:axiomatisations}.

Referring back to the earlier semantics,
the following characterisation is well-known~\cite{Blackburn01Book}:

\begin{theorem}
[$\K$/$\Kfour$ characterisation]
\label{thm:correspondence-k-kfour}
$\K \proves A$ (resp.~$\Kfour \proves A$) iff $A$ is satisfied in all relational frames (resp.~with transitive accessibility relation). 
\end{theorem}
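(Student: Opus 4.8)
The plan is to prove both directions of each biconditional: soundness (the axiomatic system proves only frame-valid formulas) and completeness (every frame-valid formula is provable). I treat $\K$ and $\Kfour$ uniformly, handling the extra axiom $\fouraxiom$ as an increment.

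For \emph{soundness}, I would argue by induction on the length of an axiomatic derivation. First I check that every axiom is frame-valid over the relevant class: the theorems of CPL hold at every world by the boolean clauses of the satisfaction relation; axiom $\kaxiom$, namely $\Box(A\limp B)\limp(\Box A\limp\Box B)$, is validated by unfolding the $\Box$-clause at an arbitrary world $w$ and reasoning about all $R^\M$-successors $v$; and for the $\Kfour$ case, axiom $\fouraxiom$, namely $\Box A\limp\Box\Box A$, is validated precisely when $R^\M$ is transitive, since its truth at $w$ requires that every successor-of-a-successor of $w$ is itself a successor of $w$. Then I verify the rules preserve validity: $\sf (mp)$ is immediate from the $\limp$-clause, and $\sf nec$ preserves frame-validity because $\M\models A$ means $A$ holds at every world, hence in particular at every $R^\M$-successor of any world, giving $\M\models\Box A$. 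Since frame-validity is preserved through the derivation, everything provable is valid on the corresponding class.

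For \emph{completeness} I would use the standard \emph{canonical model} construction. I build the canonical model $\Mmod^c=(W^c,R^c,V^c)$ whose worlds $W^c$ are the maximal $\K$-consistent (resp.\ $\Kfour$-consistent) sets of formulas, with $w R^c v$ iff $\{A : \Box A\in w\}\subseteq v$, and $p\in V^c(w)$ iff $p\in w$. The crux is the \emph{Truth Lemma}: $\Mmod^c,w\models A$ iff $A\in w$, proved by induction on $A$, where the $\Box$ case in the forward and backward directions relies on the Lindenbaum lemma (every consistent set extends to a maximal consistent one) and on the definition of $R^c$. Given the Truth Lemma, if $A$ is \emph{not} a theorem then $\{\lnot A\}$ is consistent, so it lies in some $w\in W^c$ with $A\notin w$, whence $\Mmod^c,w\notmodels A$ and $A$ fails on some frame in the class; contrapositively, frame-validity implies provability. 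For the $\Kfour$ refinement I must additionally check that $R^c$ is \emph{transitive}: this follows because $\fouraxiom$ is in every world, so $\Box A\in w$ forces $\Box\Box A\in w$, and chasing the definition of $R^c$ through two steps yields $w R^c u$ whenever $w R^c v R^c u$.

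The main obstacle is the $\Box$ case of the Truth Lemma in its backward (\emph{existence}) direction: to show that if $A\notin w$ with $\Box A$ under scrutiny one can find a suitable successor, I need the \emph{Existence Lemma}, i.e.\ that if $\Box A\notin w$ then the set $\{B : \Box B\in w\}\cup\{\lnot A\}$ is consistent and hence extends (by Lindenbaum) to a world $v$ with $w R^c v$ and $A\notin v$. Establishing this consistency claim is where the axiom $\kaxiom$ and the $\sf nec$ rule do the essential work, and it is the technical heart of the argument; the remaining boolean cases and the transitivity check for $\Kfour$ are routine. Since this is a well-known result~\cite{Blackburn01Book}, I would cite the standard canonical-model development rather than reproduce every detail.
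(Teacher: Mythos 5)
Your proposal is correct: the paper does not prove this theorem itself but cites it as a well-known result from the standard literature~\cite{Blackburn01Book}, and your sketch (soundness by induction on derivations, completeness via the canonical model with the Truth and Existence Lemmas, plus the transitivity check for $\Kfour$ using $\fouraxiom$) is precisely the standard canonical-model argument given in that reference. Nothing is missing, and your closing remark that one would cite rather than reproduce the details matches exactly what the paper does.
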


\begin{figure}[t]
    \centering
    \begin{align*}
         \small
    &\kaxiom : \Box(A \limp B) \limp (\Box A \limp \Box B)
    &&\vliinf{\sf (mp)}{}{B}{A \limp B}{A}
    &\vlinf{\sf (nec)}{}{\Box A}{A}\\
    &\fouraxiom : \Box A \limp \Box \Box A
    &&\wlobaxiom :   \Box(\Box A \limp A) \limp \Box A
    &
    \end{align*}
    \caption{Some modal axioms and rules.}
\label{fig:axiomatisations}
\end{figure}

The main subject matter of this work is an extension of $\K$ by the L\"ob axiom $\wlobaxiom$, in \cref{fig:axiomatisations}, a sort of induction principle: 

\begin{definition}[G{\"o}del-L{\"o}b logic]
\label{def:GodelLob}
$\GL$ is defined by extending $\Kfour$ by the \demph{L\"ob axiom} $\wlobaxiom$
and is closed under $\sf (mp)$ and $\sf (nec)$.\footnote{Alternatively, $\GL$ can be axiomatised by adding the $\wlobaxiom$ axiom to $\K$, as $\fouraxiom$ can be derived from $\wlobaxiom$.}
\end{definition}
Semantically $\wlobaxiom$ says that, as long as worlds satisfy $A$ whenever all its successors satisfy $A$, then $A$ holds universally. 
This amounts to a `reverse' induction principle for the accessibility relation.
Indeed we have an associated characterisation for $\GL$ just like that for $\Kfour$.

We call a frame is \demph{terminating} if its accessibility relation has no infinite path.
\begin{theorem}
[$\GL$ characterisation]
\label{thm:gl-correspondence}
    $\GL \proves A$ iff all transitive terminating frames satisfy $A$.
\end{theorem}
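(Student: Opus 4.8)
The plan is to prove the two directions separately, establishing soundness (the left-to-right direction) and completeness (the right-to-left direction) of $\GL$ with respect to transitive terminating frames.

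\medskip

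\noindent\textbf{Soundness.} First I would show that every theorem of $\GL$ is valid on all transitive terminating frames. Since $\GL$ extends $\Kfour$, by \cref{thm:correspondence-k-kfour} the $\kaxiom$ axiom and the $\fouraxiom$ axiom are already validated on transitive frames, and validity is preserved by $\sf (mp)$ and $\sf (nec)$ (the latter because if $A$ holds at every world, then so does $\Box A$). Thus it remains only to verify that the L\"ob axiom $\wlobaxiom$, namely $\Box(\Box A \limp A) \limp \Box A$, holds at every world of a transitive terminating model. Fix a model $\M = (W, R^\M, V)$ on a transitive terminating frame and a world $w$, and suppose $\M, w \models \Box(\Box A \limp A)$; I must show $\M, w \models \Box A$. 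Suppose for contradiction that some $R^\M$-successor of $w$ fails $A$. Using transitivity, the set of $R^\M$-successors of $w$ that fail $A$ is closed under taking successors in a suitable sense, and by termination (no infinite $R^\M$-path) one can extract a $\subseteq$-minimal such failing world $v$ — or more carefully, an $R^\M$-reachable world $v$ failing $A$ all of whose successors (which by transitivity are also successors of $w$) satisfy $A$. Then $\M, v \models \Box A$, and since $\M, v \models \Box A \limp A$ (as $v$ is a successor of $w$ and $w \models \Box(\Box A \limp A)$, using transitivity to push the premise down to $v$), we get $\M, v \models A$, contradicting the choice of $v$.

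\medskip

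\noindent\textbf{Completeness.} This is the harder direction and the main obstacle. I would prove the contrapositive: if $\GL \not\proves A$, then there is a transitive terminating frame falsifying $A$. The standard route is a \emph{filtration} argument building a \emph{finite} countermodel. First, construct the canonical model for $\GL$ in the usual way, with worlds the maximal $\GL$-consistent sets of formulas and the accessibility relation defined via the $\Box$-modality. Since $\GL \not\proves A$, the set $\{\lnot A\}$ is $\GL$-consistent and extends to a maximal consistent set $w_0$ with $A \notin w_0$. The canonical model is transitive (from $\fouraxiom$) but in general \emph{not} terminating, so it does not directly witness the claim. The key step is to \emph{filtrate} through the finite set of subformulas of $A$ closed under the relevant $\Box$-subformulas: identify worlds agreeing on this finite set, obtaining a finite quotient model. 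One then checks that the filtrated relation can be defined so as to remain transitive, and crucially, that the L\"ob axiom forces it to be \emph{irreflexive} on the quotient, which together with transitivity and finiteness yields termination. The subtle point — and where L\"ob's axiom is genuinely used — is establishing irreflexivity: if a filtrated world $v$ accessed itself, one derives a formula of the form $\Box(\Box B \limp B)$ at $v$ together with $\Box B$ failing, contradicting membership of the L\"ob instance. A truth lemma for the finite filtrated model then shows $A$ fails at the image of $w_0$, completing the construction; since the model is finite, transitive, and irreflexive, it is terminating.

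\medskip

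\noindent I expect the completeness direction to be the principal difficulty, specifically engineering the filtrated accessibility relation to simultaneously preserve transitivity and the truth lemma while forcing irreflexivity via the L\"ob axiom — this is the technical heart of the classical $\GL$ completeness theorem and the one place where the characteristic induction content of $\wlobaxiom$ must be exploited.
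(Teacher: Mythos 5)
The paper offers no proof of \cref{thm:gl-correspondence} at all: it is stated as a known classical characterisation, attributed in effect to Segerberg~\cite{Segerberg71} (just as \cref{thm:correspondence-k-kfour} is quoted from \cite{Blackburn01Book}). Your proposal is therefore a reconstruction of the literature proof rather than a parallel to anything in the paper, and it follows the standard route. Your soundness half is correct and essentially complete: given $w \models \Box(\Box A \limp A)$ and some successor of $w$ falsifying $A$, termination yields among these a world $v$ none of whose own successors falsifies $A$ (otherwise one builds an infinite $R$-path), transitivity guarantees that successors of $v$ are again successors of $w$ so this minimal choice makes sense, and then $v \models \Box A$ together with $v \models \Box A \limp A$ gives the contradiction. (The aside about ``using transitivity to push the premise down to $v$'' is unnecessary: $v$ is a direct successor of $w$.)

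In the completeness half the overall strategy (finite filtration, transitive irreflexive quotient, truth lemma, finiteness plus irreflexivity plus transitivity giving termination) is the classical one, but the mechanism you describe for irreflexivity is not right and would fail if carried out literally. In a filtration, the truth lemma relates membership and satisfaction only for formulas in the finite filter set $\Phi$, and the L\"ob instances you want to invoke at a reflexive point need not belong to $\Phi$; indeed, naive (minimal or transitive-closure) filtrations of the $\GL$ canonical model can perfectly well contain reflexive points, with no contradiction derivable from them. The standard repair (Segerberg, Boolos) is to make irreflexivity hold \emph{by definition} of the quotient relation: one sets $[w]R[v]$ only if ($\Box B \in w$ implies $B, \Box B \in v$, for $\Box B \in \Phi$) \emph{and} there is some $\Box C \in v$ with $\Box C \notin w$; irreflexivity and transitivity are then immediate. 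L\"ob's axiom enters elsewhere, namely in the existence lemma for this restricted relation: if $\Box B \notin w$ one must show $\{\lnot B, \Box B\} \cup \{C, \Box C : \Box C \in w\}$ is consistent, and if it were not, then necessitation, $\kaxiom$, $\fouraxiom$ and $\wlobaxiom$ would yield $\GL \proves \bigwedge_{\Box C \in w} \Box C \limp \Box B$, forcing $\Box B \in w$ — a contradiction. Note that the conjunct $\Box B$ placed in the successor is exactly what witnesses the new irreflexivity clause. With that correction — L\"ob used to prove the existence lemma for an engineered relation, not to refute reflexive points after the fact — your plan is the standard proof.
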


\subsection{Labelled calculi and the standard translation}
\label{subsec:labelled-calc}

The relational semantics of modal logic may be viewed as a bona fide fragment of predicate logic. 
Recalling the predicate language we fixed at the start of the section,
the \demph{standard translation} is defined as follows: for individual variables $x$ and modal formulas $A$ we define the predicate formula $\sttr x A$ by:

\smallskip

\begin{minipage}[t]{.55\textwidth}
	\begin{itemize}
		\item $\sttr x p$ is $p(x)$; \ 
		\item $\sttr x \bot$ is $\bot$; \ 
		\item $\sttr x A\star B$ is $(\sttr x A)\star (\sttr x B)$ for $\star \in \{\lor, \land, \limp\}$; \ 
	\end{itemize}
\end{minipage}
\begin{minipage}[t]{.35\textwidth}
	\begin{itemize}
		\item $\sttr x \Diamond A$ is $\exists y (xRy \land (\sttr y A)) $ ; \ 
		\item $\sttr x \Box A$ is $\forall y (xRy \limp (\sttr y A))$.
	\end{itemize}
\end{minipage}

\medskip

This induces a well-behaved proof theory as a fragment of usual first-order predicate systems~\cite{Negri05}, adaptable to many extensions under correspondence theorems such as \cref{thm:correspondence-k-kfour}.

A \demph{(labelled) sequent} is an expression $\relats R, \Gamma \seqar \Delta$, where $\relats R$ is a set of \demph{relational atoms}, i.e.\ formulas of form $xRy$, and $\Gamma$ and $\Delta$ are multisets of \demph{labelled formulas}, i.e.\ formulas of form $x:A$. 
We sometimes refer to the variable $x$ in $x:A$ as a \demph{label}. 
We write $\Lab(S)$ for the subset of labels/variables that occur in a given sequent $S$. Similarly so for $\Lab(\relats R)$, etc.

Notationally, we have identified labelled formulas with the standard translation at the beginning of this section.
This is entirely suggestive, as we can now easily distil systems for modal logics of interest by appealing to the sequent calculus $\LK$ for first-order predicate logic.
In this vein, the labelled calculus $\labK$ for modal logic $\K$ is given in \Cref{fig:labelled-k}. 
From here, under the aforementioned correspondence between $\Kfour$ and transitive frames, we have a system $\labKfour$ for $\Kfour$ that extends $\labK$ by the \demph{transitivity rule}:
    \[
    \vlinf{\trans}{}{\relats R, xRy,yRz , \Gamma \seqar \Delta}{\relats R, 
    xRy,yRz,
    xRz, \Gamma \seqar \Delta}
    \]

For a labelled system $\system$ we write $\system \proves \relats R, \Gamma \seqar \Delta$, if there is a proof of $\relats R, \Gamma \seqar \Delta$ using the rules from $\system$. We write $\system \proves x : A$, or even $\system \proves A$, to mean $\system \proves \emptyset \seqar x : A$.
Almost immediately from \cref{thm:correspondence-k-kfour} and metatheorems for predicate logic, we have:

\begin{proposition}
[Soundness and completeness]\label{prop:sound-compl-labIK}
    $\labK \proves x:A$ (resp.~$\labKfour \proves x:A$) iff $\K \proves A$ (resp.~$\Kfour \proves A$).
\end{proposition}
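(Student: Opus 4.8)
The plan is to reduce the claim entirely to the classical characterisation (\cref{thm:correspondence-k-kfour}) together with the soundness and completeness of Gentzen's sequent calculus $\LK$ for classical first-order predicate logic, using the standard translation as the bridge. The key observation is that $\labK$ is \emph{by design} nothing more than the restriction of $\LK$ to the fragment of predicate formulas arising as standard translations, with relational atoms $xRy$ as the only atomic formulas involving $R$, and $\labKfour$ adds exactly the rule $\trans$ that inlines the transitivity axiom $\forall x\forall y\forall z\,(xRy \land yRz \limp xRz)$.

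First I would make precise the correspondence between labelled derivability and first-order derivability. For the $\K$ case, I would argue that $\labK \proves \emptyset \seqar x:A$ holds iff $\LK \proves \seqar \STsub{x}{A}$, i.e.\ iff $\STsub{x}{A}$ is a classical first-order theorem over the pure relational signature. The forward direction is immediate since every rule of $\labK$ is a sound $\LK$-rule under the translation (this is the point of having identified labelled formulas with the standard translation). The backward direction is the place where some care is needed: a general $\LK$-proof of $\STsub{x}{A}$ need not stay inside the image of the translation, so one must check that cut-free (or suitably normalised) $\LK$-proofs of such sequents can be taken to respect the subformula/label discipline of $\labK$. For the $\Kfour$ case I would correspondingly show $\labKfour \proves \emptyset \seqar x:A$ iff $\STsub{x}{A}$ is a first-order consequence of the transitivity axiom, observing that the rule $\trans$ is exactly a sound-and-invertible-enough way of discharging that single relational axiom inside the calculus.

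Next I would splice in the semantic correspondence. A relational frame $\Fframe = (W,R^\Fframe)$ is literally a first-order structure for the signature $\{R\} \cup \Prop$, and a relational model $\M$ is such a structure together with the unary-predicate interpretation given by $V$. By a routine induction on $A$ one checks that $\M, w \models A$ iff $\M \models \STsub{w}{A}$ in the first-order sense; hence $A$ is satisfied in all (transitive) frames iff $\STsub{x}{A}$ is valid in all first-order structures (satisfying the transitivity axiom). Combining this with $\LK$-completeness and the derivability correspondence of the previous paragraph yields $\labK \proves x:A$ iff $A$ is valid in all frames, and $\labKfour \proves x:A$ iff $A$ is valid in all transitive frames. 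Finally, applying \cref{thm:correspondence-k-kfour} rewrites the right-hand sides as $\K \proves A$ and $\Kfour \proves A$ respectively, which is exactly the statement.

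The main obstacle I anticipate is the backward direction of the derivability correspondence: controlling an arbitrary first-order proof so that it descends to a genuine $\labK$/$\labKfour$ proof, rather than relying on $\LK$-machinery (quantifier rules on variables, equality, or relational atoms) that has no labelled counterpart. The cleanest route, which I would take, is to sidestep a direct proof-transformation argument and instead route through semantics: prove soundness of $\labK$ (resp.\ $\labKfour$) directly against the class of all (resp.\ transitive) frames, prove completeness by a standard saturation/canonical-model or countermodel-from-failed-proof-search argument internal to the labelled calculus, and only invoke \cref{thm:correspondence-k-kfour} at the very end. This keeps the whole argument within the labelled framework and avoids the delicate normalisation of $\LK$-proofs — at the cost of re-proving labelled soundness and completeness, which are by now entirely standard for these systems.
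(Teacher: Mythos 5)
Your proposal is correct, but note that the paper does not actually prove this proposition at all: it is stated as following ``almost immediately'' from \cref{thm:correspondence-k-kfour} and ``metatheorems for predicate logic'', i.e.\ it is treated as a known result (essentially Negri's), with $\labK$ and $\labKfour$ viewed as fragments of first-order sequent calculi. Your concern about the backward direction of the derivability correspondence --- that an arbitrary $\LK$-proof of $\sttr x A$ need not descend to a proof respecting the labelled discipline --- is genuine, and it is precisely why one does not argue by proof translation; your decision to sidestep it is the right instinct. However, your chosen workaround (re-proving completeness of the labelled calculus by a saturation/countermodel construction) is heavier than necessary. The cheapest completeness argument, and the one this paper itself uses when it extends the result to $\labGL$ in \cref{prop:completeness-labgl}, is to simulate the Hilbert system inside the labelled calculus: derive the CPL tautologies and the axiom $\kaxiom$ (and $\fouraxiom$, using $\trans$), then close under necessitation via $\rightrule\Box$ and under modus ponens via $\cut$. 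Combined with the soundness half --- your direct induction on labelled proofs, or soundness of predicate logic restricted to the fragment --- and \cref{thm:correspondence-k-kfour}, this yields the proposition in a few lines. The trade-off is real, though: your countermodel route costs more work but delivers \emph{cut-free} completeness, whereas the Hilbert-simulation route is short but makes essential use of $\cut$; for the purposes of this proposition either suffices.
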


\begin{figure}[t]
    \emph{Identity and cut:}
\[
\small
\vlinf{\id}{}{\relats R, x:p \seqar  x:p}{}
\qquad
\vliinf{\cut}{}{\relats R, \Gamma, \Gamma'\seqar \Delta, \Delta'}{\relats R, \Gamma \seqar \Delta, x:A}{\relats R, \Gamma', x:A \seqar \Delta'}
\]

\smallskip

\begin{minipage}[t]{.6\textwidth}
    \emph{Structural rules:}
\[
\small
\begin{array}{cc}
    \vlinf{\leftrule \wk}{}{\relats R, \Gamma, x:A \seqar \Delta}{\relats R, \Gamma \seqar \Delta} 
    &
     \vlinf{\leftrule \contr}{}{\relats R, \Gamma, x:A \seqar \Delta}{\relats R, \Gamma, x:A, x:A \seqar \Delta}
     \\
     \noalign{\smallskip}
    \vlinf{\rightrule \wk}{}{\relats R, \Gamma \seqar \Delta, x:A}{\relats R, \Gamma \seqar \Delta} 
     &
     \vlinf{\rightrule \contr}{}{\relats R, \Gamma \seqar \Delta, x:A}{\relats R, \Gamma \seqar \Delta, x:A, x:A}
\end{array}
\]
\end{minipage}
\begin{minipage}[t]{.35\textwidth}
    \emph{Relational structural rule:}
\[
\small
\vlinf{\thinning}{}{\relats R, \relats R', \Gamma \seqar \Delta}{\relats R, \Gamma \seqar \Delta}
\]
\end{minipage}

\smallskip

\emph{Propositional logical rules:}
\[
\small
\vlinf{\leftrule \bot}{}{\relats R, x:\bot, \Gamma \seqar \Delta }{}
\qquad
\text{\color{black!40} (no right rule for $\bot$)}
\]
\[
\small
\vliinf{\leftrule \limp}{}{\relats R, \Gamma,\Gamma' , x:A \limp B \seqar \Delta, \Delta'}{\relats R, \Gamma \seqar \Delta, x:A}{\relats R, \Gamma' , x:B \seqar \Delta'}
\qquad
\vlinf{\rightrule\limp}{}{\relats R, \Gamma \seqar \Delta, x:A\limp B}{\relats R, \Gamma, x:A \seqar \Delta, x:B}
\]
\[
\small
\begin{array}{cc}
    \vlinf{\leftrule\land}{i\in \{0,1\}}{\relats R , \Gamma, x:A_0 \land A_1 \seqar \Delta }{\relats R, \Gamma, x:A_i \seqar \Delta} &  
     \vliinf{\leftrule \lor}{}{\relats R, \Gamma, x:A\lor B \seqar \Delta}{\relats R, \Gamma, x:A \seqar \Delta}{\relats R, \Gamma, x:B \seqar \Delta}
    \\
    \noalign{\smallskip}
        \vlinf{\rightrule\lor}{i \in \{0,1\}}{\relats R, \Gamma \seqar \Delta, x:A_0\lor A_1}{\relats R,\Gamma \seqar \Delta, x:A_i}
&          
\vliinf{\rightrule\land}{}{\relats R, \Gamma \seqar \Delta,  x:A\land B}{\relats R, \Gamma \seqar \Delta, x:A}{\relats R, \Gamma \seqar \Delta, x:B}
\end{array}
\]

\smallskip

\emph{Modal logical rules:}
\[
\small
\begin{array}{cc}
    \vlinf{\leftrule \Diamond}{\text{$y $ fresh}}{\relats R, \Gamma , x:\Diamond A\seqar \Delta}{\relats R, xRy, \Gamma, y:A \seqar \Delta} &  
    \vlinf{\rightrule \Diamond}{}{\relats R, xRy, \Gamma \seqar \Delta , x:\Diamond A }{\relats R, xRy, \Gamma \seqar \Delta, y:A}
    \\
    \noalign{\smallskip}
    \vlinf{\rightrule \Box}{\text{$y$ fresh}}{\relats R, \Gamma \seqar \Delta, x:\Box A}{\relats R, xRy, \Gamma \seqar \Delta, y:A}
     & 
    \vlinf{\leftrule \Box}{}{\relats R , x Ry, \Gamma, x:\Box A  \seqar \Delta}{\relats R, xRy,  \Gamma, y:A  \seqar \Delta}
\end{array}
\]
    \caption{The standard labelled calculus $\labK$ for modal logic $\K$. 
    }
    \label{fig:labelled-k}
\end{figure}

Naturally systems for many other modal logics can be readily obtained, when they correspond to simple frame properties, by adding further relational (structural) rules~\cite{Negri05}.
Indeed, let us call 
a labelled calculus \emph{standard} if it does not extend $\labK$ by any new logical or (non-relational) structural rules, nor any new logical axioms. 
This terminology is suggestive, since it forces the left and right introduction rules for modalities to coincide with those induced by the \emph{standard} translation.
As remarked by Negri in~\cite{Negri05}, typical standard calculi cannot be complete $\GL$, as termination of a relation is not even first-order definable.
We shall sidestep this barrier in the next section by making use of \emph{non-wellfounded} systems.

\section{Recovering a proof theoretic account for \GL}\label{sec:gl}

In another branch of the proof theory literature, structural treatments of induction and well-foundedness have been developed in the guise of \emph{non-wellfounded} and \emph{cyclic} proofs, e.g.~\cite{NiwinskiWalukiewicz96,BrotherstonSimpson11,BaeldeDS16,BerardiTatsuta17,Simpson17,Das20}. 
Here non-wellfoundedness in proofs allows for inductive reasoning, and soundness is ensured by some global correctness condition.
By incorporating these ideas into modal proof theory, one can design a non-wellfounded proof system for $\GL$~\cite{Shamkanov14}.
In this section we recover a \emph{standard} labelled calculus for $\GL$, in the sense of the preceding discussion.
Our presentation is based on the correctness condition for \emph{Cyclic Arithmetic} in~\cite{Simpson17,Das20}.

\subsection{A standard calculus for \GL, via non-wellfounded proofs}
In what follows, we consider systems $\system$ that will typically be some fragment of the system $\labKfour$ given earlier.
The definitions we give apply to all `non-wellfounded' systems of this work.

\begin{definition}
    [Preproofs]
    A \demph{preproof} in a system $\system$ is a possibly infinite derivation generated from the rules of $\system$.
\end{definition}

As preproofs may, in particular, be non-wellfounded, they may conclude fallacious theorems, so we require a correctness criterion.
Appealing to the correspondence of $\GL$ over transitive terminating relations, we may directly import a `trace' condition first used in \cite{Simpson17}.

\begin{definition}
    [Traces and proofs]
    Fix a preproof $P$ and an infinite branch $(S_i)_{i<\omega}$.
    A \demph{trace} along $(S_i)_{i<\omega}$ is a sequence of variables $(x_i)_{i<\omega}$ such that either:

    \noindent
    \begin{enumerate*}
        \item $x_{i+1} = x_i$; or, \ 
        \item\label{item:progress} $x_iRx_{i+1}$ appears in $S_i$ (then $i$ is a \demph{progress point} of $(x_i)_{i<\omega}$).
    \end{enumerate*}

    \noindent
    A trace is \demph{progressing} if it is not ultimately constant, i.e.\ if \ref{item:progress} above applies infinitely often.
    A branch $(S_i)_{i<\omega}$ is {progressing} if it has a progressing trace, and a preproof $P$ is {progressing}, or simply is a \demph{$\infty$-proof}, if each infinite branch has a progressing trace.

    \noindent
    We write $\system \nwfproves \, \relats R, \Gamma \seqar \Delta$ if there is a $\infty$-proof in $\system$ of the sequent $\relats R,\Gamma \seqar \Delta$.
\end{definition}

\begin{example}
[L\"ob and contra-L\"ob]
        \label{ex:loeb}
An example of a $\infty$-proof of $\wlobaxiom$ in $\labKfour$ is given in \cref{fig:infty-proofs-lkfour}, left.
Here we have used bullets $\bullet$ to identify identical subproofs, up to the indicated renamings of variables.
The preproof is indeed progressing: it has only one infinite branch, whose {progress points} are coloured red and trigger at each iteration of the $\bullet$-loop.
Notice that each sequent has only one formula on the right-hand side (RHS).

Another example of a $\infty$-proof in $\labKfour$ is the \emph{contraposition} of $\wlobaxiom$, given in \cref{fig:infty-proofs-lkfour}, right.
We have merged several steps, and omitted some routine structural steps and initial sequents, a convention we shall continue to employ throughout this work.
Again the preproof is indeed progressing, by the same argument as before.
Notice, this time, that there are two formulas in the RHS of the premiss of $\bullet$.
Indeed, by basic inspection of proof search, there is no cut-free $\infty$-proof avoiding this feature.
\end{example}
\begin{figure}
    \centering
     \[
    \small
    \vlderivation{
    \vlin{\rightrule \limp}{}{\seqar x: \Box (\Box p \limp p) \limp \Box p }{
    \vlin{\rightrule \Box}{}{x: \Box (\Box p \limp p) \seqar x:\Box p}{
    \vlin{\leftrule \contr}{\bullet}{\red{xRy} , x:\Box (\Box p \limp p) \seqar y:p}{
    \vlin{\leftrule \Box}{}{xRy, x:\Box (\Box p \limp p),x:\Box (\Box p \limp p)\seqar y:p}{
    \vlin{\leftrule \limp}{}{xRy, x:\Box (\Box p \limp p), y:\Box p \limp p\seqar y:p}{
        \vlin{\rightrule \Box}{}{xRy, x: \Box (\Box p \limp p) \seqar y:\Box p}{
        \vlin{\trans}{}{xRy,\red{yRz},x:\Box (\Box p \limp p) \seqar z:p}{
        \vlin{\leftrule \contr}{\bullet[z/y]}{xRz,x:\Box (\Box p \limp p) \seqar z:p}{\vlhy{\vdots}}
        }
        }
    }
    }
    }
    }
    }
    }
    \quad
    \vlderivation{
    \vlin{\rightrule\limp}{}{\seqar x:\Diamond p \limp \Diamond (p \land \Box \lnot p)}{
    \vlin{\leftrule\Diamond}{}{x:\Diamond p \seqar \Diamond (p \land \Box \lnot p)}{
    \vlin{\rightrule \Diamond }{\bullet}{\red{xRy}, y:p \seqar x: \Diamond (p \land \Box \lnot p)}{
    \vlin{\rightrule \land}{}{xRy, y:p \seqar x: \Diamond (p \land \Box \lnot p), y:p \land \Box \lnot p}
    {
        \vlin{\rightrule \lnot, \rightrule \Box}{}{xRy, y:p \seqar x: \Diamond (p \land \Box \lnot p), y:\Box \lnot p }{
        \vlin{\trans}{}{xRy, \red{yRz}, y:p, z:p \seqar x:\Diamond (p \land \Box \lnot p)}{
        \vlin{\rightrule \Diamond}{\bullet[z/y]}{xRz, z:p \seqar x:\Diamond (p \land \Box \lnot p)}{\vlhy{\vdots}}
        }
        }
    }
    }
    }
    }
    }
    \]
    \caption{An $\infty$-proof in $\labKfour$ of L\"ob's axiom, left, and its contraposition, right. Identity steps on $y:p$ above the $\leftrule\limp$ step, left, and the $\rightrule \land$ step, right, are omitted for space considerations.}
    \label{fig:infty-proofs-lkfour}
\end{figure}

\begin{remark}
    [On regularity]
    In non-wellfounded  proof theory, special attention is often paid to the
    subset of \emph{regular} preproofs, which may be written as finite (possibly cyclic) graphs, as in \cref{ex:loeb} above. 
Nonetheless these will play no role in the present work, as we are purely concerned with logical and proof theoretic investigations, not with effectivity.
\end{remark}
 
The progress condition is invariant under expansion of relational contexts in a preproof:
We may omit consideration of the thinning rule $\thinning$ (\cref{fig:labelled-k}) when reasoning about $\infty$-proofs:

\begin{observation}
    \label{obs:thinning}
    $\thinning$ is eliminable $\infty$-proofs of $\system$.
\end{observation}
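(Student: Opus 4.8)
The goal is to show that the thinning rule $\thinning$ is eliminable from $\infty$-proofs of $\system$, i.e.\ that whenever $\system \nwfproves \relats R, \relats R', \Gamma \seqar \Delta$ via a preproof using $\thinning$, we can obtain an $\infty$-proof of the same sequent that avoids $\thinning$. The plan is to push every $\thinning$ instance upward toward the leaves, absorbing the extra relational atoms $\relats R'$ into the contexts of the rules above, until it either disappears or reaches an axiom where it can be deleted outright. Concretely, I would define a transformation that, given any $\infty$-proof, produces a $\thinning$-free preproof by a local rule-permutation argument carried out coinductively (since proofs are non-wellfounded, we build the output proof corecursively, matching the shape of the input).

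First I would observe that the only role of $\thinning$ is to remove relational atoms from a sequent; since relational atoms never appear as principal formulas of any logical rule and are only ever read (not consumed) by the side conditions of the modal and transitivity rules, an instance of $\thinning$ commutes with every other rule of $\system$. The key step is therefore a case analysis permuting a $\thinning$ step above each possible rule $\rho$ immediately preceding it. In each case, the relational atoms $\relats R'$ being thinned are simply carried up into the premisses of $\rho$; for rules with freshness side conditions ($\rightrule\Box$, $\leftrule\Diamond$) one checks that the fresh variable can still be chosen fresh with respect to the (smaller) relational context, which is immediate, and for $\rightrule\Diamond$, $\leftrule\Box$, $\trans$ one checks that the relational atom triggering the rule is not among those being thinned (if it were, we can instead absorb that atom and delete the redundant thinning). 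Since a $\thinning$ immediately above an initial sequent $\id$ or $\leftrule\bot$ can be erased (these are zero-premiss), every upward permutation strictly decreases the distance of the thinning from the root or removes it, so the corecursive definition is productive.

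The step I expect to be the main obstacle is verifying that this local permutation preserves the \emph{progress} condition globally, which is what makes the resulting preproof an actual $\infty$-proof and not merely a preproof. The transformation does not alter the sequence of logical rules along any branch, nor does it alter which relational atoms $xRy$ are present when a trace step is taken --- it only removes atoms that are never used by any trace on that branch (or, in the absorption case, relocates a thinning without touching trace steps). Hence there is a branch-preserving and progress-point-preserving correspondence between infinite branches of the original proof and of the transformed proof: every progressing trace in the input lifts to a progressing trace in the output along the corresponding branch. Making this precise requires tracking how the coinductive construction induces a bijection (or at least a progress-reflecting map) on infinite branches, and confirming that the invariant ``the atoms removed by thinning are disjoint from those witnessing progress'' is maintained throughout; this is the content that turns a routine rule-permutation into a genuine elimination result for non-wellfounded proofs.
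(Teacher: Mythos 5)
Your construction is correct and is essentially the paper's own proof: the paper performs your upward permutation globally in one step, deleting every thinning step and replacing each relational context with the union of all relational contexts beneath it, which is exactly the limit of your corecursive procedure. Be aware, though, that several of your side remarks have the direction of the transformation backwards --- eliminating $\thinning$ only ever \emph{enlarges} relational contexts (it re-adds, above the thinning, the atoms that thinning had discarded bottom-up), so contexts never shrink and no atoms are ever removed; consequently the progress condition is preserved for free (every trace step and progress point of the original proof survives verbatim, so your ``main obstacle'' and the proposed disjointness invariant are non-issues), while the one genuine --- if routine --- check is the one you wave away: a variable introduced as fresh by $\rightrule\Box$ or $\leftrule\Diamond$ above a thinning may occur in the absorbed atoms $\relats R'$, and must then be renamed throughout the subproof above.
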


\subsection{Soundness and completeness}
Let us now argue that our notion of $\infty$-proof for $\labKfour$ is sound and complete for $\GL$.
The most interesting part is soundness, comprising a contradiction argument by infinite descent as is common in non-wellfounded proof theory, relying on the characterisation result of \cref{thm:gl-correspondence}:

\begin{proposition}
[Soundness]
\label{prop:soundness-of-labGL}
If $\labKfour \nwfproves x:A$ then $\GL \proves A$.    
\end{proposition}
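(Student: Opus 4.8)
The plan is to reason semantically, via the characterisation of \cref{thm:gl-correspondence}: since $\GL \proves A$ iff $A$ is satisfied in all transitive terminating frames, it suffices to show that whenever $\labKfour \nwfproves x:A$, the formula $A$ holds in every such frame. I would argue by contradiction. Suppose towards a contradiction that there is a transitive terminating model $\M = (W, R^\M, V)$ and a world $w_0$ with $\M, w_0 \notmodels A$, and fix an $\infty$-proof $P$ of $\emptyset \seqar x:A$. Say that an assignment $\rho \colon \Var \to W$ \emph{refutes} a sequent $\relats R, \Gamma \seqar \Delta$ if $\rho(u)\, R^\M\, \rho(v)$ for every $uRv \in \relats R$, and $\M, \rho(u) \models B$ for every $u:B \in \Gamma$, while $\M, \rho(u) \notmodels B$ for every $u:B \in \Delta$. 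The root $\emptyset \seqar x:A$ is refuted by the assignment sending $x \mapsto w_0$.

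The core is a local refutation lemma: for each rule of $\labKfour$, if the conclusion is refuted by some $\rho$, then some premiss is refuted by an extension $\rho'$ of $\rho$ agreeing with $\rho$ on all variables of the conclusion. I would check this rule by rule, the instructive cases being: for the branching rules ($\leftrule\limp$, $\leftrule\lor$, $\rightrule\land$, $\cut$) one uses the semantic clause of the principal connective to \emph{select} a premiss that remains refuted (e.g.\ for $\rightrule\land$, refuting $x:A\land B$ forces $A$ or $B$ to fail, and we pick the corresponding branch); for the existential-witness rules $\leftrule\Diamond$ and $\rightrule\Box$ one extends $\rho$ by mapping the fresh variable $y$ to a witnessing successor of $\rho(x)$ (which exists precisely because $x:\Diamond A$ is satisfied, resp.\ $x:\Box A$ is refuted); and for $\trans$ one uses transitivity of $R^\M$ to see the added atom $xRz$ is automatically validated. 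Crucially, the axioms $\id$ and $\leftrule\bot$ have \emph{no} refutable conclusion. Hence, starting from the refuted root and repeatedly applying the lemma, I would construct an infinite branch $(S_i)_{i<\omega}$ of $P$ together with refuting assignments $\rho_i$, each extending the previous, whose union $\rho$ refutes every $S_i$ simultaneously. (Thinning may be ignored throughout by \cref{obs:thinning}.)

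Finally I would invoke the progress condition. Since $P$ is an $\infty$-proof, the branch $(S_i)$ carries a progressing trace $(x_i)_{i<\omega}$: at each progress point $i$ the atom $x_iRx_{i+1}$ occurs in $S_i$, so $\rho(x_i)\, R^\M\, \rho(x_{i+1})$, while at non-progress points $x_{i+1}=x_i$ and $\rho(x_{i+1})=\rho(x_i)$. Collapsing the constant stretches between the infinitely many progress points $i_0 < i_1 < \cdots$ yields an infinite chain $\rho(x_{i_0})\, R^\M\, \rho(x_{i_1})\, R^\M\, \rho(x_{i_2})\, R^\M \cdots$ in $\M$, contradicting termination of $R^\M$. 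This refutes the existence of the countermodel, so $A$ is valid over all transitive terminating frames and $\GL \proves A$ by \cref{thm:gl-correspondence}. The main obstacle I anticipate is the bookkeeping in the branch construction, namely ensuring the refuting assignments cohere under the fresh-variable side conditions so that a single global $\rho$ refutes the whole branch, and then correctly turning the progressing trace into a genuine infinite $R^\M$-path rather than a path with spurious stationary steps.
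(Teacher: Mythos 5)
Your proposal is correct and follows essentially the same route as the paper's own proof: a contradiction argument via \cref{thm:gl-correspondence}, using refuting assignments and local soundness of the rules to build an infinite branch of refuted sequents, then turning the progressing trace into an infinite $R^\M$-path contradicting termination. The details you flag as potential obstacles (coherence of assignments across fresh variables, collapsing stationary steps of the trace) are exactly the bookkeeping the paper handles with its conditions $\rho_i(z)=\rho_{i+1}(z)$ on shared variables and the dichotomy at progress points.
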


On the other hand, thanks to \cref{ex:loeb} and the known completeness of $\labKfour$ for $\Kfour$, we can use $\cut$-rules to derive:
\begin{proposition}
    [Completeness]
    \label{prop:completeness-labgl}
    If $\GL \proves A$ then $\labKfour \nwfproves x:A$.
\end{proposition}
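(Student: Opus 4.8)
The plan is to simulate, inside $\labKfour$, the Hilbert-style derivation witnessing $\GL\proves A$, proceeding by induction on its length. Recall that $\GL$ is generated from the theorems of CPL together with the axioms $\kaxiom$, $\fouraxiom$ and $\wlobaxiom$, closed under $\sf (mp)$ and $\sf (nec)$. For every axiom other than $\wlobaxiom$ the formula is already a theorem of $\Kfour$, so by \cref{prop:sound-compl-labIK} there is a finite $\labKfour$-proof of $\seqar x:A$; since a finite derivation has no infinite branches it is vacuously progressing, hence a bona fide $\infty$-proof. For the Löb axiom itself we simply invoke the explicit $\infty$-proof of $\wlobaxiom$ exhibited in \cref{ex:loeb}. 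It then remains to handle the two closure rules.

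For $\sf (mp)$, suppose by the induction hypothesis that $\labKfour\nwfproves x:B\limp A$ and $\labKfour\nwfproves x:B$, witnessed by $\infty$-proofs $\pi_1$ and $\pi_2$. Using the (standard, finitely derivable) general identity $x:A\seqar x:A$, a single $\Limpl$ step applied to $\pi_2$ and this identity yields an $\infty$-proof of $x:B\limp A\seqar x:A$; a $\cut$ of this against $\pi_1$ on the labelled formula $x:B\limp A$ then produces an $\infty$-proof of $\seqar x:A$. For $\sf (nec)$, suppose $\labKfour\nwfproves x:B$ via $\pi$. Renaming $x$ to a fresh variable $y$ throughout $\pi$ gives an $\infty$-proof of $\seqar y:B$; weakening the relational context by $xRy$ yields an $\infty$-proof of $xRy\seqar y:B$; and a final $\Rbox$ step (legitimate since $y$ is fresh) delivers $\seqar x:\Box B$, as required.

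The one point requiring genuine care is that each of these constructions preserves the progress condition, i.e.\ maps $\infty$-proofs to $\infty$-proofs. This follows from three routine observations: (i) grafting finitely many rule instances below existing $\infty$-proofs does not affect progressivity, since every infinite branch of the result is, past its finite initial segment, an infinite branch of one of the component proofs and therefore carries a progressing trace; (ii) a uniform renaming of variables is a bijection preserving exactly which relational atoms occur in each sequent, hence preserves traces and their progress points verbatim; and (iii) uniformly adding relational atoms to the context along a branch only ever creates further progress points, never destroys existing ones, so any progressing trace of the original lifts to one of the weakened proof---this is precisely the invariance already noted before \cref{obs:thinning}. I expect (i), the splicing argument for $\sf (mp)$, to be the only step needing an explicit branch analysis; the remaining bookkeeping is immediate.
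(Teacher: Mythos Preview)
Your proposal is correct and follows essentially the same approach as the paper's own proof: handle the $\Kfour$ axioms via completeness of $\labKfour$ for $\Kfour$ (\cref{prop:sound-compl-labIK}), invoke \cref{ex:loeb} for $\wlobaxiom$, and close under $\sf (mp)$ and $\sf (nec)$ using $\cut$ and $\rightrule\Box$ respectively. The paper's version is considerably terser and does not spell out the progress-preservation argument you give in (i)--(iii), but the underlying strategy is identical.
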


These results motivate the following notation:
\begin{definition}
    $\labGL$ is the class of $\infty$-proofs of $\labKfour$.
\end{definition}

Henceforth for a class of $\infty$-proofs $\nwfprfclass$ we may write simply $\nwfprfclass \proves S$ if $\nwfprfclass$ contains a $\infty$-proof of the sequent $S$.
For instance, writing $\labGL \proves S$ is the same as $\labKfour \nwfproves S$.

\section{Recovering intuitionistic versions of \GL\ from syntax and semantics}\label{sec:igl}

In this section we propose two intuitionistic versions of $\GL$, in the style of Simpson, respectively by consideration of the proof theory and semantics of $\GL$ discussed in the previous sections. Later sections are then devoted to proving the equivalence of these two notions. 

\subsection{An intuitionistic \GL, via syntax}

Following Gentzen, it is natural to define intuitionistic calculi based on their classical counterparts by restricting sequents to one formula on the RHS. 
However, when implementing this restriction to different starting calculi for modal logic $\K$ (based on sequents, nested sequents, labelled sequents) one ends up with different intuitionistic variants of $\K$~(see e.g. \cite{DasMarin22}).
The restriction of the ordinary sequent calculus for $\K$ defines a logic which is not compatible with the standard translation in the way we described for $\K$.
On the other hand labelled calculi are well designed for this purpose.

\begin{definition}
    The system $\intlabK$ (resp.~$\intlabKfour$) is the restriction of $\labK$ (resp.~$\labKfour$) to sequents in which exactly one formula occurs on the RHS.
\end{definition}

Note that the rules $\rightrule \wk$ and $\rightrule \contr$ cannot be used in $\intlabKfour$, by the singleton restriction on the RHS of a sequent.
In the style of Simpson~\cite{Simpson94PhD}, 
we can from here duly recover a standard intuitionistic analogue of $\GL$, by restricting $\labGL$ to $\infty$-proofs with only singleton RHSs.

\begin{definition}
    We write $\intlabGL$ for the class of $\infty$-proofs of $\intlabKfour$.
\end{definition}

\begin{example}
    [L\"ob, revisited]
    \label{ex:loeb-igl}
    Recalling \cref{ex:loeb} earlier, note that the $\infty $-proof of L\"ob's axiom in \cref{fig:infty-proofs-lkfour}, left, indeed satisfies the singleton RHS restriction, and so $\labIGL \proves \wlobaxiom$.
    On the other hand its contraposition, right, does not satisfy this restriction. 
    We will see shortly that it is indeed \emph{not} a theorem of $\labIGL$. 
\end{example}

\subsection{An intuitionistic \GL, via semantics}

To give our semantic version of intuitionistic $\GL$, we must first recall models of intuitionistic modal logic.
\emph{Birelational semantics} include models $\Bmod$ with two relations, the intuitionistic~$\leq$ and the modal~$R^\Bmod$ (parameterised with $\Bmod$ to distinguish from the fixed predicate symbol $R$ used in this paper), which duly gives it the capacity to model intuitionistic modal logics. 
\begin{definition}[Birelational semantics]
\label{dfn:birel-semantics}
    A \demph{birelational frame} $\Fframe$ is a triple~$(W,\leq,R^\Fframe)$, where~$W$ is a nonempty set of \demph{worlds} equipped with a partial order~$\leq$ and an \demph{accessibility relation}~$R^\Fframe \subseteq W \times W$. We require the following frame conditions:
    \begin{itemize}
        \item[\normalfont (F1)] If $w\leq w'$ and $wR^\Fframe v$, then there exists $v'$ such that $v \leq v'$ and $w'R^\Fframe v'$.
        
        \item[\normalfont (F2)] If $wR^\Fframe v$ and $v \leq v'$, then there exists $w'$ such that $w \leq w'$ and $w'R^\Fframe v'$.
    \end{itemize}    
    \noindent
    A \demph{birelational model} is a tuple~$(W,\leq,R^\Bmod,V)$, where~$(W,\leq,R^\Bmod)$ is a birelational frame and~$V$ is a \demph{valuation}~$ W \to \calP(\Prop)$ that is monotone in~$\leq$, i.e., $w \leq w'$ implies $V(w) \subseteq V(w')$.

    Let~$\Bmod = (W,\leq,R^\Bmod,V)$ be a birelational model.
    For worlds $w\in W$ and formulas $A$ we define the satisfaction judgement $\Bmod, w \models A$ as follows:

\smallskip

\begin{minipage}[t]{.4\textwidth}
    \begin{itemize}
    \item $\Bmod,w \models p$ if $p \in V(w)$; \ 
    \item $\Bmod,w \notmodels \bot$; \ 
\end{itemize}
\end{minipage}
\begin{minipage}[t]{.5\textwidth}
\begin{itemize}
    \item $\Bmod,w \models A \land B$ if $\Bmod,w \models A$ and $\Bmod,w \models B$; \ 
    \item $\Bmod,w \models A \lor B$ if $\Bmod,w \models A$ or $\Bmod,w \models B$; \ 
\end{itemize}
\end{minipage}

\begin{minipage}[t]{.9\textwidth}
\begin{itemize}
    \item $\Bmod,w \models A \limp B$ if for all $w' \geq w$, if $\Bmod,w' \models A$ then $\Bmod,w' \models B$; \ 
    \item $\Bmod,w \models \Box A$ if for all $w' \geq w$ and for all $v$ such that $w'R^\Bmod v$ we have $\Bmod,v \models A$; \ 
    \item $\Bmod,w \models \Diamond A$ if there exists $v$ such that $wR^\Bmod v$ and $\Bmod,v \models A$; \ 
\end{itemize}
\end{minipage}
\medskip

\noindent
We write $\Bmod \models A$ if $\Bmod, w \models A$ for all $w \in W$.
\end{definition} 

\begin{lemma}[Monotonicity lemma, \cite{Simpson94PhD}]
\label{lem:monotonicity}
    Let $\Bmod = (W,\leq,R^\Bmod,V)$ be a birelational model. For any formula $A$ and $w,w' \in W$, if $w \leq w'$ and $\Bmod, w \models A$, then $\Bmod, w' \models A$.
\end{lemma}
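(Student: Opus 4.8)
The plan is to prove the statement by structural induction on the formula $A$, with induction hypothesis (IH) that monotonicity holds for all proper subformulas. The only `external' facts I expect to need are the frame conditions (F1) and (F2) and the monotonicity of the valuation $V$.

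For the base cases: if $A = p$ is atomic, then $\Bmod, w \models p$ means $p \in V(w)$, and since $w \leq w'$ gives $V(w) \subseteq V(w')$ by monotonicity of $V$, we get $p \in V(w')$, i.e.\ $\Bmod, w' \models p$. The case $A = \bot$ is vacuous since $\Bmod, w \notmodels \bot$. For the propositional connectives $\land$ and $\lor$, the result follows immediately by applying the IH to the immediate subformulas, since these clauses are interpreted pointwise at the world in question.

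For $A = B \limp C$ and $A = \Box B$, no appeal to the IH is even needed: both satisfaction clauses already quantify universally over all $\geq$-successors of the current world, so monotonicity reduces to transitivity of $\leq$. For instance, if $\Bmod, w \models B \limp C$ and $w \leq w'$, then for any $w'' \geq w'$ we have $w'' \geq w$ by transitivity, whence $\Bmod, w'' \models B$ implies $\Bmod, w'' \models C$ by the hypothesis at $w$; thus $\Bmod, w' \models B \limp C$. The $\Box$ case is analogous, using that its successors are taken over all $w'' \geq w$ together with their $R^\Bmod$-successors.

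The one genuinely interesting case, and the main obstacle, is $A = \Diamond B$, precisely because its satisfaction clause is \emph{not} upward closed by construction: it only examines the $R^\Bmod$-successors of the current world $w$. Suppose $\Bmod, w \models \Diamond B$ and $w \leq w'$; so there is some $v$ with $w R^\Bmod v$ and $\Bmod, v \models B$. Here I would invoke frame condition (F1) with $w \leq w'$ and $w R^\Bmod v$ to obtain a witness $v'$ with $v \leq v'$ and $w' R^\Bmod v'$. Since $\Bmod, v \models B$ and $v \leq v'$, the IH yields $\Bmod, v' \models B$, and together with $w' R^\Bmod v'$ this gives $\Bmod, w' \models \Diamond B$, as required. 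The role of (F1) is exactly to transport the existential witness along $\leq$, and it is the only place in the argument where it is used; (F2) plays no role in this direction.
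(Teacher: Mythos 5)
Your proof is correct: structural induction on $A$, with the atomic case handled by monotonicity of $V$, the $\limp$ and $\Box$ cases following from transitivity of $\leq$ alone (since those clauses already quantify over all $\geq$-successors), and frame condition (F1) transporting the existential witness in the $\Diamond$ case, is exactly the standard argument. The paper does not reprove this lemma but cites Simpson's thesis, where the proof is precisely this induction, so your attempt matches the intended proof; your observation that (F2) is not needed here is also accurate.
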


A soundness and completeness theorem is recovered in~\cite{Simpson94PhD}, similarly to the classical case (\cref{prop:sound-compl-labIK} and \cref{thm:correspondence-k-kfour}).

\begin{theorem}
[\cite{Simpson94PhD}]
    $\labIK \proves x : A$ (resp.~$\labIKfour \proves x : A$) iff $A$ is satisfied in all birelational models (resp.~with transitive accessibility relation) 
\end{theorem}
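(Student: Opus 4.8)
The plan is to prove the soundness-and-completeness theorem for $\labIK$ (resp.\ $\labIKfour$) against birelational models (resp.\ with transitive $R^\Bmod$) by exploiting the now-familiar reading of labelled sequents as first-order formulas via the standard translation, exactly as was done in the classical case to obtain \cref{prop:sound-compl-labIK}. The key observation is that the birelational frame conditions (F1) and (F2), the monotonicity of the valuation, and the Kripke clauses for $\limp$, $\Box$, $\Diamond$ are precisely the intuitionistic first-order content one obtains by interpreting the standard translation in an intuitionistic meta-theory. Since $\labIK$ is by definition the single-succedent (i.e.\ intuitionistic) restriction of $\labK$, the theorem should follow by specialising the Gentzen-style correspondence between single-succedent labelled derivations and Kripke models for intuitionistic first-order logic. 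This is Simpson's result, so I would primarily be reconstructing his argument in the present notation.

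For \textbf{soundness}, I would proceed by induction on the structure of a given $\labIK$-proof of $\emptyset \seqar x:A$. Reading a single-succedent labelled sequent $\relats R, \Gamma \seqar x_0 : A_0$ semantically as: ``in every birelational model $\Bmod$ and every assignment of worlds to the labels that respects all relational atoms in $\relats R$ (i.e.\ $x R y \in \relats R$ implies the world for $x$ is $R^\Bmod$-related to the world for $y$) and satisfies every labelled formula in $\Gamma$, the formula $A_0$ holds at the world assigned to $x_0$'', I would check that each rule of $\labIK$ preserves validity under this reading. The interesting cases are $\rightrule\Box$ and $\rightrule\limp$, where the eigenvariable freshness and the universal quantification over $\leq$-successors in the $\models$-clauses interact: here frame conditions (F1)/(F2) and \cref{lem:monotonicity} are exactly what licenses the step, so I would invoke them at those points. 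The $\leftrule\Box$ and $\Diamond$ cases are routine given the clauses. Transitivity of $R^\Bmod$ validates the $\trans$ rule, handling the $\labIKfour$ case.

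For \textbf{completeness}, I would argue contrapositively by a canonical/countermodel construction. Given an unprovable $x:A$, I would build a birelational model falsifying $A$ out of saturated (prime) single-succedent labelled sequents, taking the intuitionistic order $\leq$ to be inclusion of theories and $R^\Bmod$ to be induced by the relational atoms, then verify that this construction satisfies (F1), (F2), and valuation monotonicity, and that a suitable truth lemma holds. Closing under the usual saturation moves while respecting the one-formula-on-the-right discipline is what forces frame conditions (F1)/(F2) to hold. For the transitive variant one additionally closes the relational atoms under transitivity.

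The main obstacle I anticipate is entirely in the completeness direction: ensuring that the frame built from saturated sequents genuinely validates (F1) and (F2), since these ``zig-zag'' conditions are not automatic and must be engineered into the saturation procedure (one typically needs a careful back-and-forth closure, adding fresh labels and relational atoms to witness the required $v'$ and $w'$). The soundness direction, by contrast, is the ``standard ideas'' part and should go through by a direct rule-by-rule induction. Since this is exactly the content established in \cite{Simpson94PhD}, I would ultimately appeal to that reference for the full construction rather than redeveloping the saturation machinery from scratch.
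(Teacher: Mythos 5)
The paper itself contains \emph{no proof} of this statement: it is imported wholesale from Simpson's thesis, which is why it carries the citation \cite{Simpson94PhD} and nothing else in the text or appendices. So your proposal is really being compared against Simpson's argument, and against the analogous machinery this paper builds for the $\GL$ case. On that score your soundness half is exactly right: it matches both Simpson's proof and the argument the paper spells out in \cref{sec:soundness} for $\labIGL$ --- interpretations of labelled sequents, rule-by-rule local soundness (\cref{prop:localsoundness}), with the lifting lemma (\cref{lem:liftinglemma}, itself borrowed from Lemma~8.1.3 of \cite{Simpson94PhD}) and monotonicity (\cref{lem:monotonicity}) invoked precisely at $\rightrule\limp$ and $\rightrule\Box$. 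Your completeness half, however, takes a genuinely different route from Simpson's. You propose a direct canonical/saturation construction of a birelational model from unprovable single-succedent sequents; Simpson instead factors through the standard translation: unprovability in the labelled system yields an intuitionistic first-order (Kripke predicate) countermodel, which is then converted into a birelational model --- Section~8.1.1 of the thesis, the very construction this paper reuses for \cref{prop:birel-sat-implies-pred-sat}. The paper's completeness development for $\GL$ mirrors that indirect route (predicate models plus a \emph{multi-succedent} calculus, \cref{thm:completeness-of-mlIGL}), precisely because proof search in the single-succedent calculus is hampered by non-invertible rules ($\rightrule\limp$, $\rightrule\Box$, and the single-succedent $\leftrule\limp$). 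Two consequences for your sketch: first, the worlds of the countermodel cannot literally be saturated sequents --- each \emph{label} of a sequent denotes a world, so worlds must be (saturation stage, label) pairs, which is exactly what the predicate-model detour makes explicit; second, the (F1)/(F2) obstacle you flag as the main difficulty dissolves in that formulation, since relational atoms persist along the intuitionistic order ($R_w \subseteq R_{w'}$ whenever $w \leq w'$), so both zig-zag conditions hold with the evident witnesses. Your plan is therefore viable in outline, but the engineering you worry about is most cleanly discharged by factoring through predicate models, as both Simpson and this paper do; your concluding appeal to \cite{Simpson94PhD} is in any case exactly what the paper itself does.
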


We want to introduce a birelational counterpart of the transitive and terminating models of $\GL$. 
The class of models that we will introduce here is different from the birelational semantics known for logic $\iGL$ \cite{Litak14,ArdeshirMojtahedi18}
which only require termination of the accessibility relation.
Due to the \emph{global} nature of $\Box$-evaluation in 
\cref{dfn:birel-semantics}, we here
require the \emph{composition} of $\leq$ and $R^\Bmod$ to be terminating.

\begin{definition}
    $\birelIGL$ is the class of birelational models~$\Bmod = (W,\leq,R^\Bmod,V)$ such that:
    \begin{itemize}
        \item $R^\Bmod$ is {transitive}; and,
        \item $(\leq ; R^\Bmod)$ is {terminating}, i.e., there are no infinite paths $x_1 \leq y_1 R^\Bmod x_2 \leq y_2 R^\Bmod x_3 \dots$.
    \end{itemize}
    For a formula $A$, we write $\birelIGL \models A$ to mean that $\Bmod \models A$ for all $\Bmod \in \birelIGL$.
\end{definition}

\begin{example}
[Contra-L\"ob, revisited]
\label{ex:contraloeb-igl}
    Recalling \cref{ex:loeb,ex:loeb-igl} 
    we indeed have that the contraposition of L\"ob's axiom, $\Diamond p \limp \Diamond(p \wedge \Box \neg p)$, is not valid in $\birelIGL$. 
    It is falsified at
    world $w_1$ of the following $\birelIGL$-model $\Bmod$ where we assign $p$ to all worlds. In the picture we omit transitive (and reflexive) edges of $R^\Bmod$ (and $\leq$).
    \begin{center}
        \begin{tikzpicture}
        \node (x1) at (0,0) {$w_1$};
        \node (x2) at (1.5,0) {$w_2$};
        \node (y1) at (0,1.5) {$v_1$}; 
        \node (y2) at (1.5,1.5) {$v_2$}; 
         \node (y3) at (3,1.5) {$v_3$}; 
        \draw [->] (x1) to node [midway,below] {$R^\Bmod$} (x2) ; 
        \draw [->] (y1) to node [midway,below] {$R^\Bmod$} (y2) ; 
        \draw [->] (y2) to node [midway,below] {$R^\Bmod$} (y3) ; 
        \draw [->] (x1) to node [midway,left] {$\leq$} (y1) ; 
        \draw [->] (x2) to node [midway,left] {$\leq$} (y2) ; 
    \end{tikzpicture}
    \end{center}    
\end{example}

\section{Soundness}
\label{sec:soundness}

The main result of this section is the soundness of $\labIGL$ with respect to $\birelIGL$ (\cref{thm:soundness,cor:soundness-specialised-to-theorems}). 
First we must extend the notion of satisfaction in birelational models to labelled sequents.

Let us fix a sequent $S = (\relats R, \Gamma \seqar x:A)$ and a birelational model $\Bmod = (W,\leq,R^\Bmod,V)$ for the remainder of this section.
    An \demph{interpretation} of $S $ into $\Bmod $ is a function~$\calI : \Lab(S) \to W$ such that $\calI(x)R^\Bmod\calI(y)$ whenever $xRy \in \relats R$.
    We write
    \[
    \Bmod, \calI \models \relats R , \Gamma \seqar x : A 
    \quad \text{ if } \quad 
    \Bmod, \calI(x) \models A \text{ when }
    \Bmod, \calI(y) \models B \text{ for all } y : B \in \Gamma.
    \]
    If $\Bmod, \calI \models S$ for all interpretations~$\calI$ of~$S$ into~$\Bmod$, we simply write $\Bmod \models S$, 
    and if $\Bmod \models S$ for all models~$\Bmod \in \birelIGL$, we simply write $\birelIGL \models S$.
See Appendix~\ref{app-soundness} for a proof of the following.

\begin{observation}
\label{lem:models-form-seq}
    $\Bmod \models \emptyset , x:B_1, \ldots, x:B_n \seqar x : A$ iff $\Bmod \models (B_1 \land \ldots \land B_n) \limp A$.
\end{observation}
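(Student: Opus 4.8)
The plan is to prove the biconditional in Observation~\ref{lem:models-form-seq} by unfolding both sides against the definition of satisfaction for labelled sequents fixed just above, together with the clause for $\limp$ in the birelational satisfaction relation (Definition~\ref{dfn:birel-semantics}) and the Monotonicity Lemma (\cref{lem:monotonicity}). Recall that $\Bmod \models S$ abbreviates ``$\Bmod, \calI \models S$ for every interpretation $\calI$''. Since the sequent in question uses only the single label $x$ and carries an empty set of relational atoms, an interpretation $\calI$ is just an unconstrained choice of a world $\calI(x) = w \in W$; there are no relational atoms to satisfy. So the left-hand side says precisely: for every $w \in W$, if $\Bmod, w \models B_j$ for all $j$, then $\Bmod, w \models A$. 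I would state this reduction first, as it strips the labelled machinery down to a plain quantification over worlds.

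Next I would handle the right-hand side. By the $\land$-clause, $\Bmod, w \models B_1 \land \dots \land B_n$ iff $\Bmod, w \models B_j$ for each $j$, so it suffices to treat $B := B_1 \land \dots \land B_n$ as a single antecedent and show: for all $w$, [if $\Bmod, w \models B$ then $\Bmod, w \models A$] iff $\Bmod \models B \limp A$, the latter meaning $\Bmod, w \models B \limp A$ for all $w$. Unfolding the $\limp$-clause, $\Bmod, w \models B \limp A$ means: for all $w' \geq w$, if $\Bmod, w' \models B$ then $\Bmod, w' \models A$.

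For the two directions: the ($\Leftarrow$) direction is immediate, since taking $w' = w$ (using reflexivity of $\leq$) specialises the $\limp$-clause to exactly the implication asserted on the left. The ($\Rightarrow$) direction is where the only subtlety lies: I assume that for every world the implication $\Bmod, w \models B \Rightarrow \Bmod, w \models A$ holds, and I must verify $\Bmod, w \models B \limp A$ for an arbitrary $w$, i.e.\ for every $w' \geq w$ with $\Bmod, w' \models B$ conclude $\Bmod, w' \models A$. But this is just an instance of the assumed implication at the world $w'$, so no appeal to monotonicity is even needed here — the quantification over all worlds on the left already covers every $w' \geq w$. (Monotonicity would be the natural tool were the left-hand side quantifying only over a single designated world, but here it is global.)

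The main ``obstacle'' is therefore not mathematical depth but bookkeeping: making sure the abbreviation $\Bmod \models S$ is correctly expanded to a universal quantifier over interpretations, and confirming that with no relational atoms and a single label the interpretations range freely over all of $W$, so that the global universal on the left matches the global universal hidden inside the $\limp$-clause on the right. I would finish by chaining the two reductions: $\Bmod \models (\emptyset, x{:}B_1,\dots,x{:}B_n \seqar x{:}A)$ iff ($\forall w$: $\bigwedge_j \Bmod,w\models B_j \Rightarrow \Bmod,w\models A$) iff ($\forall w$: $\Bmod,w\models B \Rightarrow \Bmod,w\models A$) iff $\Bmod \models B \limp A$, which is the claim.
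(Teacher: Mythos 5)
Your proof is correct and takes essentially the same route as the paper's own: both directions are pure definition-unfolding, where left-to-right instantiates the sequent's interpretation at an arbitrary $w' \geq w$ satisfying the conjunction (so, as you correctly note, monotonicity is never needed), and right-to-left specialises the $\limp$-clause at $w' = w$ via reflexivity of $\leq$. No gaps.
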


In order to prove local soundness of $\labIKfour$ rules, we use a lifting lemma similarly to the one in~\cite{Simpson94PhD} whose proof relies on the tree-like structure of $\relats R$. 

\begin{definition}[(Quasi-)tree-like]
    $\relats R$ is a \demph{tree} if there is $x_0 \in \Lab(\relats R)$ 
    such that for each $x \in \Lab(\relats R)$, $x\not=x_0$,  
    there is a unique sequence 
    $x_0Rx_1, x_1Rx_2, \dots, x_mRx \in \relats R$. 
    $\relats R$ is a \demph{quasi-tree} if 
    there is some 
    $\relats R_0 \subseteq \relats R \subseteq \CL{\relats R_0}$ where $\relats R_0$ is a tree and $\CL{\relats R_0}$ denotes the transitive closure of $\relats R_0$.
    Sequent $S$ is \demph{(quasi-)tree-like} if either $\relats R = \emptyset$ and $\Lab(\Gamma) \subseteq \{x\}$, or $\relats R$ is a (resp., quasi-)tree and $\Lab(\Gamma) \cup \{ x\} \subseteq \Lab(\relats R)$.
\end{definition}

\begin{lemma}[Lifting lemma]\label{lem:liftinglemma}       
    Suppose $S$ is quasi-tree-like. 
    Let $\calI$ be interpretation of~$S$ into $\Bmod$, 
    $x \in \Lab(S)$ and~$w \geq \calI(x)$.
    There is an interpretation~$\calI'$ of~$S$ into~$\Bmod$ such that $\calI'(x) = w$ and for all~$y \in \Lab(S)$ we have~$\calI'(y) \geq \calI(y)$.
\end{lemma}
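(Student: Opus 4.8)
The plan is to exploit the frame conditions (F1) and (F2) to propagate the prescribed lift of $x$ throughout the tree underlying $\relats R$, using the tree structure to control the order of the construction. First I would dispatch the degenerate case $\relats R = \emptyset$, where $\Lab(S) \subseteq \{x\}$ and we simply set $\calI'(x) := w$, so that $\calI'(x) = w$ and $\calI'(x) \geq \calI(x)$ trivially hold with no relational atoms to respect. Otherwise, writing $\relats R_0 \subseteq \relats R \subseteq \CL{\relats R_0}$ for the witnessing tree, I would note that it suffices to build $\calI'$ respecting every edge of $\relats R_0$: since $R^\Bmod$ is transitive, any such $\calI'$ automatically respects $\CL{\relats R_0}$, and hence $\relats R$. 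Observe also that $x$ is a node of $\relats R_0$, because $\Lab(\relats R) = \Lab(\relats R_0)$, and that $\calI$ already respects $\relats R_0 \subseteq \relats R$.

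The construction then proceeds in two phases, up then down. In the \emph{upward} phase, let $x_0 R x_1, \dots, x_{m-1}R x_m$ be the unique path in $\relats R_0$ from the root $x_0$ to $x = x_m$, and set $\calI'(x_m) := w$. Travelling towards the root, at the edge $x_i R x_{i+1}$ I have $\calI(x_i)\, R^\Bmod\, \calI(x_{i+1})$ and $\calI(x_{i+1}) \leq \calI'(x_{i+1})$, so (F2) supplies some $\calI'(x_i) \geq \calI(x_i)$ with $\calI'(x_i)\, R^\Bmod\, \calI'(x_{i+1})$. This defines $\calI'$ along the entire path, in particular lifting the root to $\calI'(x_0) \geq \calI(x_0)$, and records $R^\Bmod$-relatedness across each path edge.

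In the \emph{downward} phase I extend $\calI'$ to the remaining nodes by induction on tree-depth, keeping the path values fixed: for any off-path node $y$ with parent $p$ (so $pRy \in \relats R_0$), the value $\calI'(p)\geq\calI(p)$ is already defined, and from $\calI(p)\, R^\Bmod\, \calI(y)$ condition (F1) yields some $\calI'(y) \geq \calI(y)$ with $\calI'(p)\, R^\Bmod\, \calI'(y)$. By construction $\calI'(a)\, R^\Bmod\, \calI'(b)$ holds for every edge $aRb \in \relats R_0$ (path edges from the upward phase, the rest from (F1)), so $\calI'$ is a genuine interpretation of $S$ with $\calI'(x)=w$ and $\calI'(y)\geq\calI(y)$ for all $y$, as required. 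The main thing to get right is the coordination of the two phases: because $x$ need not be the root, the lift must first be pushed up to $x_0$ via (F2) and only then propagated back down via (F1), and one must check that the values assigned along the path in the upward phase are exactly consistent with the downward propagation — which they are, precisely because the upward phase already establishes $R^\Bmod$ along the path.
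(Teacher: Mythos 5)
Your proposal is correct, and its overall skeleton matches the paper's: dispatch the degenerate case, pass from the quasi-tree $\relats R$ to its supporting tree $\relats R_0$, lift along the tree, and then invoke transitivity of $R^\Bmod$ to see that an interpretation respecting $\relats R_0$ also respects $\relats R \subseteq \CL{\relats R_0}$. The difference is in how the tree case is handled: the paper treats it as a black box, citing the lifting lemma for tree-like sequents from Simpson's thesis (Lemma~8.1.3), whereas you reprove that lemma from scratch via the two-phase construction --- pushing the lift of $x$ up to the root with (F2), then propagating it back down to all remaining nodes with (F1), by induction on depth. Your version is thus self-contained (and essentially reconstructs Simpson's own argument, including the correct observation that the phases are consistent because the upward phase already secures $R^\Bmod$-relatedness along the root-to-$x$ path), at the cost of being longer; the paper's is shorter but rests on an external reference. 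One point both arguments share and that deserves emphasis: the reduction from $\relats R$ to $\relats R_0$ genuinely needs $R^\Bmod$ to be transitive (a quasi-tree may contain composite edges, and without transitivity the lemma can fail), so the lemma is really being proved for the transitive models relevant to the soundness argument for $\birelIGL$, even though the section fixes $\Bmod$ as an arbitrary birelational model.
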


Let us employ some conventions on $\infty$-proofs.
Note that for every inference rule of $\labIKfour$, 
except for $\thinning$ and $\cut$, the premiss(es) 
are quasi-tree-like whenever the conclusion is.
By \cref{obs:thinning} we shall duly assume that~$\thinning$ is not used.
Note that this forces relational contexts to be \emph{growing}, bottom-up: the relational context of a sequent always contains those below it. 
If a cut-formula $z:C$ has label $z$ that does not occur in the conclusion, we may safely rename $z$ to a variable that does. 
Thus we may assume that any $\infty$-proof in $\labIGL$ of $x : A$ has only quasi-tree-like sequents in it. 

\begin{proposition}[Local soundness]\label{prop:localsoundness}
    Suppose $S$ is quasi-tree-like
    Let $\calI$ be an interpretation of $S$ into $\Bmod$ such that $\Bmod, \calI \notmodels S$. 
    For any inference step 
    of $\labIKfour\setminus\{ \thinning\}$ that $S$ concludes,   
    there is a premiss $S'$ 
    and an interpretation $\calI'$ of $S'$ into $\Bmod$ such that $\Bmod, \calI' \notmodels S'$ and $\calI'(z) \geq \calI(z)$ for all $z \in \Lab(S)$.
\end{proposition}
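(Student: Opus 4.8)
The plan is to proceed by a case analysis on the last inference rule that $S$ concludes, throughout unfolding the definition of falsification: $\Bmod, \calI \notmodels (\relats R, \Gamma \seqar x : A)$ means precisely that $\Bmod, \calI(y) \models B$ for every $y : B \in \Gamma$ while $\Bmod, \calI(x) \notmodels A$. The rules $\id$ and $\leftrule\bot$ are immediate: their conclusions cannot be falsified (they would force $\Bmod, \calI(x) \models p$ and $\Bmod, \calI(x) \notmodels p$, resp.\ $\Bmod, \calI(x) \models \bot$), so the statement holds vacuously as the hypothesis $\Bmod, \calI \notmodels S$ is unsatisfiable.

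For the bulk of the rules I can simply take $\calI' = \calI$ (so that $\calI'(z) \geq \calI(z)$ holds by reflexivity of $\leq$), choosing the appropriate premiss. For the propositional rules this is a routine unfolding of the local semantic clauses: for $\leftrule\land$ both conjuncts hold at $\calI(x)$ so either premiss is falsified; for $\rightrule\land$ one conjunct fails so the corresponding premiss is falsified; dually for $\leftrule\lor,\rightrule\lor$; and $\leftrule\wk, \leftrule\contr$ are trivial. For $\leftrule\limp$ (and symmetrically $\cut$) I split on whether $\Bmod, \calI(x) \models A$: if not, the left premiss $\relats R, \Gamma \seqar x:A$ is falsified by $\calI$; if so, then since $\Bmod, \calI(x) \models A \limp B$ and $\calI(x) \geq \calI(x)$ we obtain $\Bmod, \calI(x) \models B$, so the right premiss is falsified by $\calI$. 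The modal rules $\leftrule\Box, \leftrule\Diamond, \rightrule\Diamond$ are handled similarly, now exploiting the relational atoms of $\relats R$: for $\leftrule\Box$, instantiating the clause for $\Box A$ at $w' = \calI(x) \geq \calI(x)$ and $v = \calI(y)$ (valid since $xRy \in \relats R$ forces $\calI(x) R^\Bmod \calI(y)$) yields $\Bmod, \calI(y) \models A$; for $\rightrule\Diamond$, the failure of $\Diamond A$ at $\calI(x)$ applied to the witness $\calI(y)$ gives $\Bmod, \calI(y) \notmodels A$; and for $\leftrule\Diamond$ the existential witness $v$ for $\Diamond A$ at $\calI(x)$ is used to interpret the fresh label $y$, setting $\calI'(y) = v$ and $\calI' = \calI$ elsewhere. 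The transitivity rule $\trans$ is the one place I rely on transitivity of $R^\Bmod$ (which holds for $\Bmod \in \birelIGL$): from $\calI(x) R^\Bmod \calI(y)$ and $\calI(y) R^\Bmod \calI(z)$ I obtain $\calI(x) R^\Bmod \calI(z)$, so $\calI$ is already an interpretation of the premiss, which it still falsifies.

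The genuinely interesting cases, and the main obstacle, are $\rightrule\limp$ and $\rightrule\Box$, where the negated succedent produces a fresh world above $\calI(x)$ and I can no longer keep $\calI$ fixed at $x$. For $\rightrule\limp$ with conclusion $\relats R, \Gamma \seqar x:A\limp B$, falsification gives some $w \geq \calI(x)$ with $\Bmod, w \models A$ and $\Bmod, w \notmodels B$. Since $S$ is quasi-tree-like I apply the Lifting Lemma (\cref{lem:liftinglemma}) at $x$ and $w$ to obtain an interpretation $\calI'$ of $S$ with $\calI'(x) = w$ and $\calI'(z) \geq \calI(z)$ for all $z$; as $\rightrule\limp$ adds no relational atom and no label, $\calI'$ also interprets the premiss $\relats R, \Gamma, x:A \seqar x:B$. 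By the Monotonicity Lemma (\cref{lem:monotonicity}) every $y:C \in \Gamma$ is still satisfied at $\calI'(y) \geq \calI(y)$, while $\calI'(x) = w$ gives $\Bmod, \calI'(x) \models A$ and $\Bmod, \calI'(x) \notmodels B$, so the premiss is falsified. For $\rightrule\Box$ with conclusion $\relats R, \Gamma \seqar x:\Box A$, falsification gives $w \geq \calI(x)$ and $v$ with $w R^\Bmod v$ and $\Bmod, v \notmodels A$; I again lift at $x$ to $w$ via \cref{lem:liftinglemma}, then extend the resulting interpretation to the fresh label $y$ by $\calI'(y) := v$. This is a legitimate interpretation of the premiss $\relats R, xRy, \Gamma \seqar y:A$ because the only new relational atom $xRy$ is validated by $w R^\Bmod v$, and by \cref{lem:monotonicity} the context $\Gamma$ remains satisfied; since $\calI'(y) = v$ gives $\Bmod, \calI'(y) \notmodels A$, the premiss is falsified, and $\calI'(z) \geq \calI(z)$ on $\Lab(S)$ holds by the Lifting Lemma. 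The crux in these two cases is exactly that lifting $\calI$ to sit above $w$ at $x$ must be carried out consistently across the whole (quasi-)tree of relational atoms, which is what \cref{lem:liftinglemma} guarantees, while \cref{lem:monotonicity} ensures the left context survives the lift.
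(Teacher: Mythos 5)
Your proof is correct and follows essentially the same route as the paper's: a case analysis in which most rules reuse $\calI$ directly (splitting on the cut/implication formula for $\cut$ and $\leftrule\limp$, and extending by the existential witness for $\leftrule\Diamond$), while $\rightrule\limp$ and $\rightrule\Box$ are handled via the Lifting Lemma (\cref{lem:liftinglemma}) together with the Monotonicity Lemma (\cref{lem:monotonicity}). You in fact spell out a few routine cases ($\id$, $\leftrule\bot$, the propositional rules, $\cut$) that the paper's proof leaves implicit, but the underlying argument is identical.
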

    The proof uses some direct calculations in most cases and the lifting lemma (\cref{lem:liftinglemma})
    to handle rules $\rightrule \limp$ and $\rightrule \Box$.
    We give some cases in \cref{app-soundness}.
    From here, as in the classical setting for $\labGL$, we can employ a contradiction argument by infinite descent to conclude:

\begin{theorem}[Soundness]\label{thm:soundness}
    Suppose $S$ is quasi-tree-like. Then
    $\labIGL \proves S$ implies~$\birelIGL \models S$.
\end{theorem}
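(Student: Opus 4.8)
The plan is to argue by infinite descent, exactly mirroring the classical soundness argument for $\labGL$ (\cref{prop:soundness-of-labGL}) but now interleaving the intuitionistic order $\leq$ with $R^\Bmod$. Suppose, for contradiction, that $\labIGL \proves S$, witnessed by an $\infty$-proof $P$ of $S =: S_0$, and yet $\birelIGL \notmodels S$. Then there is a model $\Bmod \in \birelIGL$ and an interpretation $\calI_0$ of $S_0$ into $\Bmod$ with $\Bmod, \calI_0 \notmodels S_0$. Recall that, by \cref{obs:thinning} and the conventions preceding \cref{prop:localsoundness}, we may assume $P$ is $\thinning$-free and every sequent occurring in it is quasi-tree-like. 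First I would build, corecursively along $P$, an infinite \emph{false branch} $(S_i)_{i<\omega}$ together with interpretations $\calI_i$ of $S_i$ into $\Bmod$ such that $\Bmod, \calI_i \notmodels S_i$ and $\calI_{i+1}(z) \geq \calI_i(z)$ for all $z \in \Lab(S_i)$. The key point enabling the iteration is that an initial sequent can never be falsified: for $\id$ the shared formula $x{:}p$ would have to be both satisfied and unsatisfied at $\calI(x)$, and for $\leftrule\bot$ we would need $\calI(x) \models \bot$, both impossible. Hence each falsified $S_i$ is the conclusion of a genuine inference step of $P$, and \cref{prop:localsoundness} supplies a premiss $S_{i+1}$ with an interpretation $\calI_{i+1}$ having the two stated properties; iterating yields the branch.

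Since $P$ is progressing, this infinite branch carries a progressing trace $(x_i)_{i<\omega}$, that is, one with infinitely many progress points. Writing $w_i := \calI_i(x_i)$, I would read off the behaviour of the sequence $(w_i)$ in $\Bmod$ from the two trace clauses. At a non-progress step we have $x_{i+1} = x_i$, so the monotonicity $\calI_{i+1}(z) \geq \calI_i(z)$ at $z = x_i \in \Lab(S_i)$ gives $w_i = \calI_i(x_i) \leq \calI_{i+1}(x_i) = w_{i+1}$. At a progress step the atom $x_i R x_{i+1}$ occurs in $S_i$, so $x_{i+1} \in \Lab(S_i)$ and, by the definition of an interpretation, $\calI_i(x_i)\, R^\Bmod\, \calI_i(x_{i+1})$; applying monotonicity at $z = x_{i+1}$ then yields $\calI_i(x_{i+1}) \leq \calI_{i+1}(x_{i+1}) = w_{i+1}$, so altogether $w_i\, R^\Bmod\, \calI_i(x_{i+1}) \leq w_{i+1}$.

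Finally I would extract an infinite $(\leq; R^\Bmod)$-path in order to contradict the termination hypothesis on $\Bmod \in \birelIGL$. Listing the progress points as $i_0 < i_1 < \cdots$ and setting $v_k := \calI_{i_k}(x_{i_k+1})$, note that every step strictly between two consecutive progress points is a non-progress step, so the $\leq$-edges above collapse by transitivity of $\leq$ into $v_k \leq w_{i_k+1} \leq \cdots \leq w_{i_{k+1}}$, whence $v_k \leq w_{i_{k+1}}$. Stitching these together with the progress edges $w_{i_{k+1}}\, R^\Bmod\, v_{k+1}$ produces the infinite path
\[ v_0 \leq w_{i_1}\, R^\Bmod\, v_1 \leq w_{i_2}\, R^\Bmod\, v_2 \leq w_{i_3}\, R^\Bmod\, \cdots \]
which is precisely of the form $x_1 \leq y_1 R^\Bmod x_2 \leq y_2 R^\Bmod x_3 \cdots$ forbidden by $(\leq; R^\Bmod)$ being terminating. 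This contradiction establishes $\birelIGL \models S$.

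The main obstacle I anticipate is the bookkeeping in this last extraction step: one must verify that the $\leq$- and $R^\Bmod$-edges compose in exactly the right order, that the $\leq$-segments accumulated between successive progress points genuinely collapse (using both transitivity of $\leq$ and the fact that intermediate steps are non-progress), and that the resulting path matches the exact shape $\leq; R^\Bmod$ demanded by the termination condition of $\birelIGL$. Once local soundness and the progress condition are in hand, everything else is routine.
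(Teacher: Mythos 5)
Your proposal is correct and follows essentially the same route as the paper's proof: contradiction via local soundness (\cref{prop:localsoundness}) to build a falsified branch with $\leq$-monotone interpretations, then using the progressing trace to extract an infinite $(\leq;R^\Bmod)$-path contradicting termination. The only differences are presentational — you rule out initial sequents upfront to force the branch to be infinite (the paper handles the finite case separately), and you spell out the final path-extraction bookkeeping that the paper leaves terse — neither of which changes the argument.
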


\begin{corollary}
\label{cor:soundness-specialised-to-theorems}
    If $\labIGL \proves x : A$ then $\birelIGL \models A$.
\end{corollary}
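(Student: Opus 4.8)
The plan is to specialise the soundness theorem (\cref{thm:soundness}) to the single sequent $S = (\emptyset \seqar x:A)$, since the corollary is exactly the case of that theorem in which there are no relational atoms and no antecedent labelled formulas. First I would check that this $S$ satisfies the hypothesis of \cref{thm:soundness}, i.e.\ that it is quasi-tree-like: here $\relats R = \emptyset$ and $\Lab(\Gamma) = \emptyset \subseteq \{x\}$, which is precisely the first disjunct of the definition of (quasi-)tree-like. So the theorem applies.

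Recalling the convention that $\labIGL \proves x:A$ abbreviates $\labIGL \proves \emptyset \seqar x:A$, the hypothesis of the corollary gives $\labIGL \proves S$. Applying \cref{thm:soundness} then yields $\birelIGL \models S$, that is, $\Bmod \models \emptyset \seqar x:A$ for every $\Bmod \in \birelIGL$.

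It then remains only to unwind $\Bmod \models \emptyset \seqar x:A$ into the desired $\Bmod \models A$. Fixing $\Bmod \in \birelIGL$, I would note that $\Lab(S) = \{x\}$, so an interpretation $\calI$ of $S$ into $\Bmod$ amounts to a single choice of world $w = \calI(x) \in W$, and every world of $W$ arises in this way. For such an $\calI$ the judgement $\Bmod, \calI \models \emptyset \seqar x:A$ collapses to $\Bmod, \calI(x) \models A$, because the antecedent ``$\Bmod, \calI(y) \models B$ for all $y:B \in \Gamma$'' is vacuously true when $\Gamma = \emptyset$. Quantifying over all interpretations $\calI$ — equivalently, over all $w \in W$ — gives $\Bmod, w \models A$ for every $w$, which is by definition $\Bmod \models A$; since $\Bmod \in \birelIGL$ was arbitrary, this is $\birelIGL \models A$. (This last step is the degenerate $n = 0$ instance of \cref{lem:models-form-seq}, reading the empty conjunction as $\top$.)

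Since all the substantive content is carried by \cref{thm:soundness} (and ultimately by the local soundness of \cref{prop:localsoundness} together with the infinite-descent argument), I expect no genuine obstacle in this corollary. The only two points needing any care are confirming that the empty-context sequent falls under the quasi-tree-like definition, and correctly tracking the vacuously satisfied antecedent when $\Gamma$ is empty; both are routine.
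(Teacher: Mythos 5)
Your proposal is correct and is exactly the argument the paper intends (the corollary's proof is left implicit there): specialise \cref{thm:soundness} to the quasi-tree-like sequent $\emptyset \seqar x:A$, then unwind the satisfaction judgement for an empty antecedent, i.e.\ the $n=0$ case of \cref{lem:models-form-seq}. Both of the points you flag as needing care are handled correctly, so there is nothing to add.
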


The remainder of this work is devoted to proving the converse result.
The sections that follow structure the proof into the three parts according to the arrows indicated in \cref{fig:tour}.

\section{From birelational models to Kripke predicate models}
\label{sec:predmodels}
Towards our countermodel construction in the next section we turn to \emph{predicate models}, essentially via the standard translation, whose internal structure is richer than that of birelational models, thus providing useful invariants for the sequel. 

\begin{definition}[Predicate models]
    A \demph{Kripke structure} is a tuple 
    \begin{itemize}
        \item $W$ is a non-empty set of \emph{worlds} partially ordered by $\leq$;
        \item $\{D_w\}_{w \in W}$ is a family of non-empty \demph{domains}, such that $D_w \subseteq D_{w'}$ whenever $w \leq w'$;
        \item $\{\Prop_w\}_{w\in W}$ is a family of mappings $\Prop_w : \Prop \to \mathcal P (D_w)$ such that for each $p \in \Prop$, $\Prop_w(p) \subseteq \Prop_{w'}(p)$ whenever $w \leq w'$;
        \item $\{ R_w \}_{w \in W}$ is a family of relations $R_w \subseteq D_w \times D_w$ such that $R_w \subseteq R_{w'}$ whenever $w \leq w'$.
    \end{itemize}
\end{definition}

\begin{definition}[Environment]
    Let $\Kmod$ be a Kripke structure. 
    A \demph{$w$-environment} is a function $\rho:\Var \to D_w$. 
\end{definition}
    Note that a $w$-environment is also a $w'$-environment for any $w' \geq w$. 
\begin{definition}[Satisfaction]
    For modal formula $A$, structure $\Kmod$ and $w$-environment $\rho$, we inductively define the judgement $\Kmod, w \models^\rho \sttr x A$ as follows, where $\rho[x:=d]$ is the map that sends variable $x$ to $d$ and agrees with $\rho$ on all other variables:
\begin{itemize}
    \item $\Kmod,w \models^\rho x : p$ if $\rho(x) \in \Prop_w(p)$; 
    \item $\Kmod,w \notmodels^\rho \bot$;
    \item $\Kmod,w \models^\rho \sttr x A \wedge B$ if $\Kmod,w \models^\rho \sttr x A$ and $\Kmod,w \models^\rho \sttr x B$;
    \item $\Kmod,w \models^\rho \sttr x A \vee B$ if $\Kmod,w \models^\rho \sttr x A$ or $\Kmod,w \models^\rho \sttr x B$;
    \item $\Kmod,w \models^\rho \sttr x A \limp B$ if for all $w' \geq w$, if $\Kmod,w' \models^\rho \sttr x A$ then  $\Kmod,w' \models^\rho \sttr x B$;
    \item $\Kmod,w \models^\rho x : \Box A$ if for all  $w' \geq w$ and $d \in D_{w'}$, if $\rho(x)R_{w'}d$, then $\Kmod,w' \models^{\rho[y:=d]} \sttr y A$;
    \item $\Kmod,w \models^\rho x : \Diamond A$ if there exists $d \in D_w$ such that $\rho(x)R_{w}d$ and $\Kmod,w \models^{\rho[y:=d]} \sttr y A$.
\end{itemize}
We write $\Kmod \models \sttr x A$ if $\Kmod, w \models^\rho \sttr x A$ for all worlds $w$ and $w$-environments $\rho$.
\end{definition}

The monotonicity lemma also holds in Kripke structures.

\begin{lemma}[Monotonicity lemma]
\label{lem:monotonicityPred}
    Let $\Kmod$ be a Kripke structure.
    If $w \leq w'$ and $\Kmod, w \models^\rho x : A$, then $\Kmod, w' \models^\rho x : A$.
\end{lemma}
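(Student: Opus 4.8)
The plan is to argue by a routine induction on the structure of the modal formula $A$, with $w \leq w'$ fixed throughout, drawing on the three monotonicity conditions built into the definition of a Kripke structure (for the families $\{\Prop_w\}$, $\{D_w\}$ and $\{R_w\}$). Before starting I would record the small well-formedness point noted just above the statement: since $w \leq w'$ gives $D_w \subseteq D_{w'}$, every $w$-environment $\rho$ is also a $w'$-environment, so the target judgement $\Kmod, w' \models^\rho x:A$ even makes sense.

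For the base cases I would treat $A=p$ using monotonicity of $\{\Prop_w\}$: from $\rho(x)\in\Prop_w(p)$ and $\Prop_w(p)\subseteq\Prop_{w'}(p)$ we get $\rho(x)\in\Prop_{w'}(p)$; the case $A=\bot$ is vacuous. The connective cases $A = B\wedge C$ and $A = B\vee C$ then fall out immediately by applying the induction hypothesis to the immediate subformulas.

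I expect the cases $A = B\limp C$ and $A=\Box B$ to need no induction hypothesis whatsoever: their satisfaction clauses already quantify over all $\leq$-successors of the evaluation world, and since $w\leq w'$ the successors of $w'$ are among those of $w$, so the universal statement at $w$ restricts directly to the one at $w'$. This is the usual reason persistence is automatic for these connectives in intuitionistic Kripke semantics, and I would simply spell out the restriction of the relevant quantifier in each case.

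The only case that genuinely exercises the remaining structure is $A=\Diamond B$, and this is where I would be most careful. From $\Kmod,w\models^\rho x:\Diamond B$ I obtain a witness $d\in D_w$ with $\rho(x)R_w d$ and $\Kmod,w\models^{\rho[y:=d]} y:B$; I then relocate this witness to $w'$ using $D_w\subseteq D_{w'}$ (so $d\in D_{w'}$) and $R_w\subseteq R_{w'}$ (so $\rho(x)R_{w'}d$), and apply the induction hypothesis to $B$ with the environment $\rho[y:=d]$ to obtain $\Kmod,w'\models^{\rho[y:=d]}y:B$, yielding $\Kmod,w'\models^\rho x:\Diamond B$. There is no deep obstacle anywhere in the argument; the one point warranting attention is that $\rho[y:=d]$ must still be a legitimate environment at $w'$, which again follows from domain monotonicity.
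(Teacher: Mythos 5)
Your proof is correct and is exactly the argument the paper intends: the paper states this lemma without proof, treating it as the routine analogue of the birelational monotonicity lemma (\cref{lem:monotonicity}), and your structural induction is that routine argument. You also correctly identify the only points of substance — that the $\limp$ and $\Box$ cases need no induction hypothesis (transitivity of $\leq$ suffices, since their clauses quantify over all $\leq$-successors), while the $\Diamond$ case is where the monotonicity of $\{D_w\}$ and $\{R_w\}$ and the induction hypothesis are genuinely used.
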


To capture transitivity in Kripke structures, it is sufficient to require each $R_w$ to be transitive, which can be considered as a local condition on Kripke structures. However interpreting termination requires us to consider the interactions between $\leq$ and $R_w$ similarly to
the previous section. 

Let $\Kmod$ be a Kripke structure. 
    We write $D_W$ for the set of ordered pairs of the form $(w,d)$ with $w \in W$ and $d \in D_w$. We define the two binary relations $\leq_{D_W}, R_{D_W} \subseteq D_W \times D_W$ as:
    \begin{itemize}
        \item $(w,d) \leq_{D_W} (w',d')$ iff $w \leq w'$ and $d = d'$;
        \item $(w,d) R_{D_W} (w',d')$ iff $w = w'$ and $d R_{w} d'$.
    \end{itemize}

\begin{definition}
Write $\predIGL$ for the class of Kripke structures $\Kmod$
satisfying the following:
\begin{itemize}
    \item for all $w \in W$, $R_w$ is \emph{transitive}; and
    \item relation $(\leq_{D_W} ; R_{D_W})$ is \emph{terminating}, i.e., there are no infinite paths $(w_1, d_1) \leq_{D_W} (w_2,d_1) R_{D_W} (w_2, d_2) \leq_{D_W} (w_3, d_2) R_{D_W} (w_3,d_3) \dots$.
\end{itemize}
We write $\predIGL \models A$ to mean that $\Kmod \models \sttr x A$ for all $\Kmod \in \predIGL$ and any $x \in \Lab$.
\end{definition}

The same construction converting a
predicate structure into a 
birelational model
from \cite[Section 8.1.1]{Simpson94PhD} can be used to prove the following result. 

\begin{proposition}
\label{prop:birel-sat-implies-pred-sat}
If $\birelIGL \models A$ then $\predIGL \models A$.
\end{proposition}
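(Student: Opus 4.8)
The plan is to argue by contraposition, using the pair-construction of \cite[Section 8.1.1]{Simpson94PhD} that turns a Kripke predicate structure into a birelational model. So suppose $\predIGL \notmodels A$; then there are a structure $\Kmod = (W, \leq, \{D_w\}, \{\Prop_w\}, \{R_w\}) \in \predIGL$, a world $w_0$, a $w_0$-environment $\rho_0$ and a label $x$ with $\Kmod, w_0 \notmodels^{\rho_0} x:A$. From $\Kmod$ I would build the birelational model $\Bmod = (D_W, \leq_{D_W}, R_{D_W}, V)$, whose worlds are the pairs $(w,d)$ of a world and a domain element, ordered and related exactly by the relations $\leq_{D_W}$ and $R_{D_W}$ already defined before the statement, and with valuation $V(w,d) := \{ p \in \Prop : d \in \Prop_w(p)\}$. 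The goal is to show $\Bmod \in \birelIGL$ and that $A$ fails at $(w_0, \rho_0(x))$, which yields $\birelIGL \notmodels A$.

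First I would check that $\Bmod$ is a legitimate birelational model in $\birelIGL$. That $\leq_{D_W}$ is a partial order and that $V$ is $\leq_{D_W}$-monotone are immediate from the corresponding properties of $\leq$ and the monotonicity $\Prop_w(p) \subseteq \Prop_{w'}(p)$. The frame conditions (F1) and (F2) both reduce to the monotonicity $R_w \subseteq R_{w'}$ for $w \leq w'$: e.g.\ for (F1), given $(w,d) \leq_{D_W} (w', d)$ and $(w,d) R_{D_W} (w, e)$, the witness $(w', e)$ works since $d R_w e$ lifts to $d R_{w'} e$. Transitivity of $R_{D_W}$ follows pointwise from transitivity of each $R_w$ (both $R_{D_W}$-steps stay in a single world), and termination of $(\leq_{D_W} ; R_{D_W})$ is literally the termination clause in the definition of $\predIGL$, so it transfers for free. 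Hence $\Bmod \in \birelIGL$.

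The core step is a truth lemma stating, by induction on $A$, that for all $w \in W$ and $d \in D_w$, $\Bmod, (w,d) \models A$ iff $\Kmod, w \models^\rho x:A$ for some (equivalently any) $w$-environment $\rho$ with $\rho(x) = d$; this is well defined because $\sttr x A$ has only $x$ free, so satisfaction depends on $\rho$ only through $\rho(x)$. The propositional cases are direct from the definition of $V$ and the induction hypothesis. The implication and $\Box$ cases use that $(w',d') \geq_{D_W} (w,d)$ forces $d' = d$ and $w' \geq w$, so the $\geq$-quantification in $\Bmod$ matches the $w' \geq w$ quantification in $\Kmod$; the $\Diamond$ case matches because neither semantics quantifies over $\geq$ at a diamond, and $(w,d) R_{D_W} (v,e)$ forces $v = w$ and $d R_w e$. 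Feeding $\Kmod, w_0 \notmodels^{\rho_0} x:A$ through the lemma gives $\Bmod, (w_0, \rho_0(x)) \notmodels A$, completing the contrapositive.

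I expect the only real work to be the modal cases of the truth lemma and the verification of (F1)/(F2): these are where the two relations $\leq_{D_W}$ and $R_{D_W}$ must be made to interact correctly, and where the monotonicity $R_w \subseteq R_{w'}$ is essential. Everything else is bookkeeping, and the crucial termination condition is obtained by design rather than by argument.
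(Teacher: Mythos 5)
Your proposal is correct and follows essentially the same route as the paper's proof: the same pair construction $(D_W, \leq_{D_W}, R_{D_W}, V)$ from \cite[Section 8.1.1]{Simpson94PhD}, the same verification that transitivity and termination transfer, and the same truth lemma relating $\Bmod, (w,d) \models A$ to $\Kmod, w \models^{\rho[x:=d]} \sttr x A$. The only difference is that you phrase the argument contrapositively while the paper argues directly (for every $\Kmod \in \predIGL$, the induced $\Bmod_\Kmod$ lies in $\birelIGL$, so validity transfers), which is an immaterial repackaging of the same proof.
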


\section{Completeness of a multi-succedent calculus via determinacy}
\label{sec:completeness}

Towards completeness we perform a countermodel construction using a proof search strategy based on an intuitionistic \emph{multi-succedent} calculus. 
This is inspired by analogous arguments for intuitionistic predicate logic, e.g.\ in \cite{Takeuti75}, but adapted to a non-wellfounded setting.
\begin{definition}
    [Multi-succedent intuitionistic calculus]
    The system $\multiintlabKfour$ is the restriction of $\labKfour$ where the $\Box$-right and $\limp$-right rules must have exactly one formula on the RHS.
    We write $\multiintlabGL$ for the class of $\infty$-proofs of $\multiintlabKfour$.
\end{definition}

The remainder of this section is devoted to proving the following completeness result:
\begin{theorem}
\label{thm:completeness-of-mlIGL}
    If $\predIGL \models A$ then $\multiintlabGL \proves x : A$.
\end{theorem}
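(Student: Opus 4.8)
The plan is to argue contrapositively: assuming $\multiintlabGL \not\proves x:A$, I would build a Kripke structure $\Kmod \in \predIGL$ together with a world and environment at which $\sttr x A$ fails, witnessing $\predIGL \not\models A$. The countermodel is to be read off from an exhaustive proof search for $\emptyset \seqar x:A$ in $\multiintlabKfour$, organised as a two-player game in the now-standard fashion of non-wellfounded proof theory~\cite{NiwinskiWalukiewicz96}. First I would fix a fair scheduling discipline that, read bottom-up, eventually decomposes every formula occurrence (retaining principal formulas via contraction for the non-invertible rules $\leftrule{\Box}$ and $\rightrule{\Diamond}$, and saturating under $\trans$), so that the only genuine freedom left in the search is the branching of rules. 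In the game, $\prover$ plays to construct a proof while $\denier$ selects, at each branching rule, the premiss to pursue; a play is the resulting branch. Here $\prover$ wins a play if it closes at an axiom or is infinite and carries a progressing trace; otherwise $\denier$ wins.

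A $\prover$-winning strategy glues into an $\infty$-proof of $\emptyset \seqar x:A$ in $\multiintlabKfour$, so under our assumption $\prover$ has no winning strategy. The winning condition for $\denier$ (the branch is stuck, or infinite and non-progressing) is the complement of an $\omega$-regular condition — existence of a progressing trace is recognised by a nondeterministic B\"uchi automaton tracking a single variable thread along the branch — hence Borel, so the game is determined~\cite{NiwinskiWalukiewicz96}. Therefore $\denier$ has a winning strategy $\sigma$. Running the exhaustive search while letting $\denier$ resolve branching according to $\sigma$ yields a subtree $T_\sigma$ all of whose infinite branches are non-progressing and none of whose leaves are axioms.

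From $T_\sigma$ I would define $\Kmod$ as follows. The worlds $W$ are the sequents of $T_\sigma$ at which a critical right rule — $\rightrule{\limp}$ or $\rightrule{\Box}$, the two rules constrained to a singleton RHS — is applied, together with the root, ordered by $\leq$ as the ancestor relation through $T_\sigma$; this tree of worlds encodes the intuitionistic forcing structure, exactly as in Takeuti's construction for $\LJ$~\cite{Takeuti75}. The domain $D_w$ is the set of labels occurring at $w$, the relation $R_w$ is given by the relational atoms present there, and $\Prop_w(p)$ collects the labels $y$ with $y:p$ on the left at $w$; the upward growth of relational contexts (\cref{obs:thinning}) makes $D_\bullet$, $R_\bullet$ and $\Prop_\bullet$ monotone in $\leq$, and saturation under $\trans$ makes each $R_w$ transitive. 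For termination I would observe that any infinite $(\leq_{D_W};R_{D_W})$-path is monotone in $\leq$, hence runs up a single branch of $T_\sigma$, and that its $R_{D_W}$-steps are precisely progress points of a trace along that branch; such a path would thus furnish a progressing trace, contradicting the choice of $\sigma$. Hence $(\leq_{D_W};R_{D_W})$ is terminating and $\Kmod \in \predIGL$. A routine truth lemma — by induction on $A$, using fairness of the schedule for each connective, the singleton-RHS discipline on $\rightrule{\limp}$ and $\rightrule{\Box}$ for the intuitionistic clauses, and \cref{lem:monotonicityPred} for persistence — shows that every left formula of a world is satisfied there and every right formula is falsified; applied to the root this falsifies $\sttr x A$.

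The main obstacle is the termination argument, i.e.\ certifying $\Kmod \in \predIGL$: one must arrange the worlds, domains and relations so that infinite $(\leq_{D_W};R_{D_W})$-paths correspond \emph{exactly} to progressing traces of the search, thereby converting the non-progress guaranteed by $\denier$'s strategy into converse well-foundedness of the composition. This correspondence is delicate because the $\leq$-steps (moving to a future world while keeping the same label) and the $R_w$-steps (an $R$-atom consumed within a world) must interleave in the countermodel precisely as the ``stay'' and ``progress'' steps of a trace do along a branch. Keeping labels, relational atoms and the world-tree aligned under a fair, contraction-retaining schedule — so that both this termination correspondence and the truth lemma go through in a genuinely non-wellfounded search — is where I expect the real work to lie.
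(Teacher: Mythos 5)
Your overall strategy is the same as the paper's: organise exhaustive proof search for $\multiintlabKfour$ as a Prover/Denier game, use determinacy to extract a winning $\denier$-strategy all of whose infinite branches are non-progressing, define worlds from the saturated sequents where the two critical rules $\rightrule\limp$ and $\rightrule\Box$ fire, and obtain termination of $(\leq_{D_W};R_{D_W})$ by converting any infinite composite path into a progressing trace along a branch of $\denier$'s strategy, contradicting non-progress. The countermodel, monotonicity, transitivity-via-saturation and truth-lemma components all match the paper's construction (\cref{prop:denier-subtree}, \cref{thm:countermodel-construction}); your worlds (sequents at critical rule applications) coincide with the paper's end-of-invertible-phase saturated sequents.

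There is, however, a genuine gap in your determinacy step. You claim the payoff set is Borel because ``existence of a progressing trace is recognised by a nondeterministic B\"uchi automaton tracking a single variable thread along the branch.'' This does not hold up: along an infinite branch of proof search, fresh labels are introduced unboundedly (by $\leftrule\Diamond$ and $\rightrule\Box$), so the alphabet of sequents is infinite and your automaton needs infinitely many states — one for each label it might currently be tracking. The standard fact that B\"uchi-recognisable sets are Borel (Boolean combinations of $\Sigma^0_2$ sets) is specific to \emph{finite-state} automata over a \emph{finite} alphabet; acceptance by an infinite-state nondeterministic B\"uchi automaton is in general a genuinely $\Sigma^1_1$ condition. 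Indeed, deciding whether an infinite, finitely-branching levelled graph admits a path crossing infinitely many marked edges is $\Sigma^1_1$-complete (one can code an arbitrary recursive infinitely-branching tree by spreading each $\omega$-valued choice over unboundedly many ``stay'' steps and marking completion of a coordinate as a progress edge), so there is no reason to expect your winning condition to be Borel. This is exactly why the paper does \emph{not} use Borel determinacy but instead invokes lightface \emph{analytic} determinacy — a principle beyond $\mathrm{ZFC}$, equivalent over $\mathrm{ZFC}$ to the existence of $0^\sharp$ (\cref{prop:determinacy}, citing Harrington). Your argument is repaired by replacing the Borel claim with this appeal to analytic determinacy, at the cost of the extra set-theoretic hypothesis; conversely, if you could actually substantiate a Borel bound on the payoff set, that would strengthen the paper by removing the $0^\sharp$ assumption — but as written the justification is incorrect.
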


Here we informally describe 
the construction of the proof search tree. For a more formal treatment of parts below we also refer to Appendix~\ref{app-countermodel}. During bottom-up proof search we will always proceed according to the three following phases in order of priority:
applications of rule $\trans$, applications of invertible rules (other than $\trans$), and application of non-invertible rules. The first two together we call the \demph{invertible phase}, the other the \demph{non-invertible phase}. 
In fact, invertibility of all rules except $\rightrule\Box$ and $\rightrule\limp$ is guaranteed by applying suitable contractions at the same time.
For instance, we apply the following derivable `macro' rules for 
$\limp $ on the left and $\Diamond$ on the right:
\[\small
\vliinf{\leftrule \limp}{}{\relats R, \Gamma , x:A\limp B \seqar \Delta}{\relats R, \Gamma, x:A\limp B \seqar \Delta, x:A  }{\relats R, \Gamma, x : A \limp B, x:B \seqar \Delta} 
\quad
\vlinf{\rightrule \Diamond}{}{\relats R, xRy,  \Gamma \seqar \Delta, x:\Diamond A}{\relats R, xRy, \Gamma \seqar \Delta, x:\Diamond A, y:A}
\]

By furthermore building weakening into the identity, i.e.\ allowing initial sequents of form $\relats R, \Gamma, x:p \seqar \Delta, x:p$, structural rules become redundant for proof search. In the invertible phase we can apply the rules in any order. The non-invertible phase creates predecessor nodes for each possible rule instance of $\rightrule \limp$ and $\rightrule \Box$.

In order to carry out our countermodel construction to show completeness of $\mlIGL$, we rely on two features of the proof search space that digress from usual countermodel constructions in intuitionistic predicate logic~\cite{Takeuti75}. The first important feature of this strategy is that the invertible phase is always finite and ends in so-called \emph{saturated sequents}, i.e., sequents for which any bottom-up rule application (other than $\id$ and $\leftrule \bot$) yields a premiss that is the same sequent, up to multiplicities (see Appendix~\ref{app-countermodel} for formal definition). Looking ahead to the countermodel, worlds $w$ are defined on the basis of invertible phases and will as a result all have a finite domain $D_w$.

\begin{lemma}
\label{lem:invertiblefinite}
    Following the proof search strategy described above, each invertible phase constructs a finite subtree that has saturated sequents at its leaves.
\end{lemma}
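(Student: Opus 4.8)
The plan is to show that every branch that lives inside a single invertible phase is finite; since each rule has at most two premisses and, at any given sequent, only finitely many rule instances are applicable (fresh labels for $\leftrule\Diamond$ being chosen canonically), the proof-search tree is finitely branching, so by K\"onig's lemma the finiteness of the invertible subtree reduces to ruling out an infinite branch. There are exactly two potential sources of non-termination: the rule $\leftrule\Diamond$, which is the only invertible rule that creates a fresh label (recall $\rightrule\Box$ and $\rightrule\limp$ are non-invertible and deferred, while $\leftrule\Box$ and $\rightrule\Diamond$ reuse existing edges); and the contraction-carrying invertible rules together with $\trans$, which retain their principal formula (or atom) and could a priori fire forever. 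Recall also that, since weakening is built into the identity, structural rules need not be tracked as separate steps.

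First I would bound a priori the number of labels generated within one invertible phase. By the subformula property every labelled formula $z:B$ occurring in the phase has $B \in \ST A$, so only finitely many formulas ever appear. The rule $\leftrule\Diamond$ acts on a left formula $x:\Diamond C$ by introducing a fresh successor $y$ with $xRy$ and $y:C$; crucially, every formula subsequently attached to $y$ is a subformula of $C$, hence of strictly smaller modal depth than $\Diamond C$. Thus along any $R$-chain of labels created by $\leftrule\Diamond$ the maximal modal depth of the attached formulas strictly decreases, bounding the $R$-height of generated labels by the modal depth of $A$. Moreover, once a witness for a pair $(x,\Diamond C)$ is present, the corresponding $\Diamond$-left instance is saturated (this is how the saturation condition must be read for $\leftrule\Diamond$, so it fires at most once per pair and cannot re-fire even if $x:\Diamond C$ is later re-derived). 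Hence the branching at each label is bounded by the number of $\Diamond$-subformulas of $A$, and only finitely many labels, say at most $N$, arise in the phase.

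With the labels bounded by $N$ I would pre-allocate $N$ label-slots and let $U$ be the finite set consisting of all labelled formulas $z:B$ with $B \in \ST A$ over these slots, together with all relational atoms among them. Writing $|U \cap S|$ for the number of elements of $U$ present in a sequent $S$, set $\mu(S) := |U| - |U \cap S| \in \Nat$. Every productive invertible step strictly decreases $\mu$: the propositional rules and $\leftrule\Box, \rightrule\Diamond$ keep their principal formula and, when productive, add a genuinely new labelled subformula; $\trans$ adds a new relational atom among the existing labels; and a productive $\leftrule\Diamond$ adds the atom $xRy$ and the formula $y:C$ over a previously unused slot $y$, strictly increasing the count of present elements of $U$ (even if the principal $x:\Diamond C$ is consumed). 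Since $\mu$ cannot go below $0$, every branch terminates, and a leaf at which no productive step remains is, by definition, either closed by $\id$ or $\leftrule\bot$ or else saturated.

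The main obstacle is precisely the a priori bound on labels: reconciling the label-creating rule $\leftrule\Diamond$, which enlarges the universe of available formulas and atoms, with the existence of a single well-founded measure. The resolution is twofold — formulating the saturation condition for $\Diamond$-left as the existence of a witness, so each pair $(x,\Diamond C)$ contributes at most one fresh successor, and exploiting the strict decrease of modal depth along generated $R$-edges to bound both the height and hence the total number of fresh labels. Once this bound is in place, the remaining termination is the routine finite-saturation argument captured by $\mu$, and combining finite branching with the finiteness of every branch yields the finite subtree with saturated (or closed) leaves.
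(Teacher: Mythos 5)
Your argument breaks at its central claim: that every formula subsequently attached to a fresh label $y$, created by $\leftrule\Diamond$ from $x:\Diamond C$, is a subformula of $C$ and hence of strictly smaller modal depth. This overlooks the interaction of $\leftrule\Box$ with $\trans$: whenever some $x':\Box D$ occurs on the left with $x'Ry$ in the relational context --- and after the priority $\trans$ steps every $R$-ancestor of $y$ is such an $x'$ --- saturation forces $\leftrule\Box$ to attach $D$ to $y$, and $D$ need be neither a subformula of $C$ nor of smaller depth. Concretely, run the invertible phase on $x:\Box\Diamond p,\, x:\Diamond p \seqar x:\bot$ (which arises in honest proof search, e.g.\ below $\seqar x:(\Box\Diamond p \wedge \Diamond p)\limp\bot$): $\leftrule\Diamond$ creates $y_1$ with $xRy_1$ and $y_1:p$; $\leftrule\Box$ adds $y_1:\Diamond p$, of the same depth as the creating formula; $y_1:\Diamond p$ is unsaturated, so $\leftrule\Diamond$ creates $y_2$, $\trans$ adds $xRy_2$, $\leftrule\Box$ adds $y_2:\Diamond p$, and so on forever. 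Modal depth along the chain $y_1,y_2,\dots$ is constant, infinitely many labels arise, so your a priori bound $N$ does not exist and the measure $\mu$, defined over $N$ slots, never gets off the ground.

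The example shows something stronger: no purely local termination argument can establish the lemma as stated, because this invertible phase genuinely is infinite. (Indeed it must be: $\Box\Diamond p \wedge \Diamond p$ is unsatisfiable over $\birelIGL$, and the infinite branch just built carries the progressing trace $x,y_1,y_2,\dots$, so it is a legitimate $\infty$-proof of the sequent.) What is true, and all that the countermodel construction actually uses, is that invertible phases are finite \emph{inside the non-progressing subtree $T$ of \cref{prop:denier-subtree}}. There your other ingredients do suffice: if a phase were infinite then, since every rule applied to an unsaturated formula strictly enlarges the set of labelled formulas and relational atoms present (your measure observation), infinitely many fresh labels would have to be created; the label-creation tree is finitely branching (your witness-uniqueness observation); so K\"onig's lemma yields an infinite $R$-chain of created labels, i.e.\ a progressing trace, contradicting membership in $T$. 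For comparison, the paper's own appendix proof is loose at exactly the point where your proposal fails: it bounds the number of applications of $\leftrule\Box$ and $\leftrule\Diamond$ by the finite number of subformulas, ignoring that the same subformula recurs under ever-new labels. So you followed essentially the paper's route, but the step you (commendably) made explicit is the step that does not hold.
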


Secondly, and perhaps more importantly, we employ a technique from non-wellfounded proof theory to help us organise the countermodel constructed from a failed proof search: we appeal to determinacy of a \emph{proof search game}. To understand the motivation here, a classical countermodel-from-failed-proof search argument proceeds (very roughly) as follows: 
(1) assume a formula is not provable; 
(2) for each rule instance there must be an unprovable premiss; 
(3) continue in this way to construct an (infinite) `unprovable' branch; 
(4) extract a countermodel from this branch. 
In our setting we will need the branch obtained through the process above to be  \emph{not progressing} in order to deduce that the structure we extract is indeed one of $\predIGL$. 
However the local nature of the process above does not at all guarantee that this will be the case. For this, we rely on \emph{(lightface) analytic determinacy}, which is equivalent to the existence of $0^\sharp$ over $\mathrm{ZFC}$ \cite{harrington_1978}, of the corresponding proof search game. 
As we only use it as a tool and it is not the main focus of our work, we refer to \cref{app-countermodel} for more details.

\begin{proposition}
\label{prop:denier-subtree}
    Given an unprovable sequent $S$ there is a subtree $T$ of the proof search space rooted at $S$, closed under bottom-up non-invertible rule application\footnote{I.e.\ if $S_0\in T$ concludes some non-invertible step with premiss $S_1$, then also $S_1\in T$.} such that each infinite branch of $T$ is not progressing.
\end{proposition}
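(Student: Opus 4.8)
The plan is to exhibit $T$ as the tree of plays following a winning strategy in a suitable two-player \emph{proof search game} $G(S)$, and to invoke (lightface) analytic determinacy to guarantee that the relevant player wins. Fix the unprovable sequent $S$. The game is played bottom-up through the proof search space, building a single branch through alternating invertible and non-invertible phases. The two players are the Prover $\prover$ and the Denier $\denier$, with control allocated so that a winning strategy for $\prover$ is precisely an $\infty$-proof: within an invertible phase (which is finite and terminates in a saturated sequent by \cref{lem:invertiblefinite}), whenever a branching invertible rule such as $\leftrule\lor$ or $\rightrule\land$ is applied, $\denier$ chooses which premiss the play continues to, reflecting that a proof must discharge \emph{all} such premisses; at a saturated sequent, $\prover$ chooses one applicable non-invertible rule instance ($\rightrule\Box$ or $\rightrule\limp$), determining its unique premiss, which opens the next invertible phase.

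The payoff is defined so that $\prover$ wins a play iff it reaches a closed leaf (an instance of $\id$ or $\leftrule\bot$) or it is infinite and progressing, and $\denier$ wins otherwise; in particular $\denier$ wins every infinite non-progressing play and every play ending in a stuck non-axiom saturated sequent. Since the fresh variable of $\rightrule\Box$ may be chosen canonically, each position admits only finitely many moves, so $G(S)$ is a game on a finitely branching tree. The condition that a branch carries a progressing trace is existential over traces, hence $\Sigma^1_1$, so $\prover$'s winning set is analytic and the game is determined by (lightface) analytic determinacy, equivalently by the existence of $0^\sharp$, cf.\ \cite{harrington_1978}.

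It then remains to read off $T$ from the winner. If $\prover$ had a winning strategy, then assembling $\prover$'s chosen non-invertible steps over all of $\denier$'s responses yields a preproof of $S$ in which every branch is either closed or progressing, that is an $\infty$-proof, so $\multiintlabGL \proves S$, contradicting the unprovability of $S$. Hence, by determinacy, $\denier$ has a winning strategy $\sigma$. Take $T$ to be the tree of all plays consistent with $\sigma$, rooted at $S$. Because $\sigma$ must respond to \emph{every} move of $\prover$, at each saturated sequent of $T$ the premisses of all applicable non-invertible instances occur, so $T$ is closed under bottom-up non-invertible rule application; and every infinite branch of $T$ is a play won by $\denier$, hence not progressing, as required.

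The main obstacle is the exact correspondence between winning strategies for $\prover$ and $\infty$-proofs. This requires the phase decomposition to be well-defined and the invertible phases to terminate (\cref{lem:invertiblefinite}), the allocation of choices to the two players to mirror precisely how proofs branch — all premisses at invertible steps, a single instance at non-invertible steps — and the payoff to capture the global progress condition verbatim, so that $\prover$-winning plays genuinely assemble into a progressing preproof. Checking that the resulting winning set is analytic, and hence that analytic rather than merely Borel determinacy is the appropriate tool, is the other delicate point.
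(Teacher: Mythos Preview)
Your proposal is correct and follows essentially the same approach as the paper: define a proof search game whose $\prover$-winning strategies are exactly $\infty$-proofs, invoke (lightface) analytic determinacy to conclude that $\denier$ wins from the unprovable $S$, and take $T$ to be the tree of plays consistent with $\denier$'s winning strategy, which is closed under non-invertible steps (since $\denier$ must respond to all $\prover$-moves) and has only non-progressing infinite branches. Your write-up is in fact more explicit than the paper's about the allocation of moves across the two phases and about why a $\prover$-winning strategy assembles into an $\infty$-proof, but the argument is the same.
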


The properties in \cref{lem:invertiblefinite} and \cref{prop:denier-subtree} enable us to construct a countermodel: 

\begin{theorem}[Countermodel construction]
\label{thm:countermodel-construction}
    If $\multiintlabGL \not \proves \relats R,\Gamma \seqar \Delta$, then there is a structure $\Kmod \in \predIGL$ with $w$-environment $\rho$ such that for labelled formulas $x : A$ we have
    \begin{itemize}
        \item $\text{if }x : A \in \Gamma, \text{ then } \Kmod, w \models^\rho \STsub{x}{A}, \text{ and, }$
        \item $\text{if } x : A \in \Delta, \text{ then } \Kmod, w \notmodels^\rho \STsub{x}{A}.$
    \end{itemize}
\end{theorem}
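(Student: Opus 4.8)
The plan is to build a countermodel directly from a failed proof search, in the Takeuti style but organised by the non-wellfounded apparatus already in place. Assuming $\multiintlabGL \not\proves \relats R, \Gamma \seqar \Delta$, I would first invoke \cref{prop:denier-subtree} to fix a subtree $T$ of the proof search space, rooted at this sequent, closed under bottom-up non-invertible rule application and all of whose infinite branches are \emph{not} progressing. By \cref{lem:invertiblefinite} every invertible phase inside $T$ is a finite subtree ending in \emph{saturated} sequents, so the saturated sequents of $T$ form the backbone of the construction and will furnish worlds with finite domains.

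Next I read off a Kripke structure $\Kmod$ from $T$. The worlds $W$ are the saturated sequents $S_w = (\relats R_w, \Gamma_w \seqar \Delta_w)$ occurring in $T$, ordered by the tree (ancestor) order: $w \leq w'$ iff $S_{w'}$ lies above $S_w$ in $T$. For each $w$ set $D_w := \Lab(\relats R_w)\cup\{x_0\}$ (labels of the relational context, plus the root label), $\Prop_w(p) := \{x\in D_w : x{:}p \in \Gamma_w\}$, and $R_w := \{(x,y): xRy\in\relats R_w\}$, and take $\rho$ to be the identity on labels (each label is its own domain element). Since $\thinning$ is suppressed and contractions are built into the invertible rules, the relational and left contexts only grow bottom-up along $T$, giving $D_w\subseteq D_{w'}$, $\Prop_w(p)\subseteq\Prop_{w'}(p)$ and $R_w\subseteq R_{w'}$ whenever $w\leq w'$; thus $\Kmod$ is a genuine Kripke structure (and \cref{lem:monotonicityPred} applies). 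To see $\Kmod\in\predIGL$: each $R_w$ is transitive because $\trans$ is applied with top priority, so $\relats R_w$ is transitively closed. For termination of $(\leq_{D_W};R_{D_W})$ I argue contrapositively: an infinite path $(w_1,d_1)\leq_{D_W}(w_2,d_1)\,R_{D_W}\,(w_2,d_2)\leq_{D_W}(w_3,d_2)\,R_{D_W}\cdots$ selects a chain $w_1\leq w_2\leq\cdots$ and hence (as each $R_w$ is finite and, by freshness of $\rightrule\Box$/$\leftrule\Diamond$ witnesses, acyclic) an infinite branch $\beta$ of $T$, along which $d_1,d_2,\dots$ form a trace whose every $R_{D_W}$-step $d_iR_{w_{i+1}}d_{i+1}$, i.e.\ $d_iRd_{i+1}\in\relats R_{w_{i+1}}$, is a progress point; so $\beta$ progresses, contradicting \cref{prop:denier-subtree}.

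The heart of the argument is a truth lemma proved by induction on $A$: for every world $w$, if $x{:}A\in\Gamma_w$ then $\Kmod,w\models^\rho \STsub{x}{A}$, and if $x{:}A\in\Delta_w$ then $\Kmod,w\notmodels^\rho \STsub{x}{A}$. Atoms are immediate from $\Prop_w$ together with $\id$/$\leftrule\bot$; the $\land,\lor$ cases and the \emph{local} modal cases ($\leftrule\Diamond$, $\rightrule\Diamond$, $\leftrule\Diamond$) follow from saturation of the corresponding invertible (macro) rules and the growth of contexts. The delicate cases are $\rightrule\limp$ and $\rightrule\Box$: when $x{:}A\limp B\in\Delta_w$ or $x{:}\Box A\in\Delta_w$, unprovability of $S_w$ forces the relevant non-invertible successor into $T$ (by closure under non-invertible steps), and its invertible phase yields a saturated $w'\geq w$ witnessing the failure — for $\rightrule\Box$ together with a fresh $R$-successor — at which the induction hypothesis applies. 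Conversely, $x{:}\Box A\in\Gamma_w$ requires the \emph{global} clause over all $w'\geq w$: here I use that $x{:}\Box A$ persists upward and each $w'$ is itself saturated, so re-running $\leftrule\Box$-saturation at every such $w'$ places $d{:}A$ in $\Gamma_{w'}$ for each $xRd\in\relats R_{w'}$, whence the induction hypothesis gives $\Kmod,w'\models^{\rho[y:=d]}\STsub{y}{A}$.

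Finally, the invertible phase from the root preserves all formulas of $\Gamma$ and $\Delta$ (contraction keeps principal formulas), so the denier's chosen root leaf $w_0$ satisfies $\Gamma\subseteq\Gamma_{w_0}$ and $\Delta\subseteq\Delta_{w_0}$; applying the truth lemma at $w_0$ with $\rho=\mathsf{id}$ delivers the claimed countermodel. I expect the two main obstacles to be, first, the termination-from-non-progression step of paragraph two — this is where the non-wellfounded progress condition must be matched exactly to the alternating $\leq_{D_W};R_{D_W}$ paths of $\predIGL$, and where one must confirm the worlds increase without bound (using finiteness and acyclicity of each $R_w$) — and, second, the left-$\Box$ clause of the truth lemma, since the global box semantics compels the induction to re-saturate at arbitrarily high worlds rather than merely at $w$.
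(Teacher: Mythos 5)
Your proposal is correct and follows essentially the same route as the paper's own proof: obtain a non-progressing search tree via \cref{prop:denier-subtree}, use \cref{lem:invertiblefinite} to organise it into finite invertible phases ending in saturated sequents, define the Kripke structure with saturated sequents as worlds, labels as domain elements and the identity environment, derive membership in $\predIGL$ by turning any infinite $(\leq_{D_W};R_{D_W})$-path into a progressing branch, and finish with the same truth lemma by induction on formulas. Your only departures are refinements of detail within that argument --- padding domains to ensure non-emptiness, and explicitly ruling out the eventually-constant-worlds case in the termination argument via finiteness and acyclicity of each $R_w$, a case the paper's own proof passes over silently.
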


From here \cref{thm:completeness-of-mlIGL} easily follows.
Notice that we did not use rule $\cut$ in the proof search strategy so we can actually conclude a stronger result: 
\begin{corollary}
    $\multiintlabGL$ is \emph{cut-free} complete over $\predIGL$.
\end{corollary}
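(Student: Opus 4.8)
The plan is to read a $\predIGL$-structure $\Kmod$ directly off a failed proof search for the sequent, targeting the substantive last theorem (the countermodel construction); the closing cut-free corollary is then immediate. First I would apply \cref{prop:denier-subtree} to the unprovable root $\relats R, \Gamma \seqar \Delta$ to obtain a subtree $T$ of the proof search space, closed under bottom-up non-invertible rule application, all of whose infinite branches fail to progress. By \cref{lem:invertiblefinite} every invertible phase of $T$ is finite and caps off in \emph{saturated} sequents, and I take the \emph{worlds} of $\Kmod$ to be exactly these saturated sequents occurring in $T$. For a world $w$ with associated sequent $\relats R_w, \Gamma_w \seqar \Delta_w$, I set $D_w := \Lab(\relats R_w, \Gamma_w \seqar \Delta_w)$, let $R_w := \{(x,y) : xRy \in \relats R_w\}$, define $\Prop_w(p) := \{x : x{:}p \in \Gamma_w\}$, and let $w \leq w'$ be the ancestor relation induced by following non-invertible $\rightrule\limp$ and $\rightrule\Box$ steps upward through $T$; the environment $\rho$ is the identity on labels. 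Because contexts only grow reading a branch upward (no rule deletes relational atoms, left formulas, or labels, with contraction built into the macro rules), $\leq$ is monotone in all structural components, so $\Kmod$ is a well-defined Kripke structure.

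Next I would verify $\Kmod \in \predIGL$. Transitivity of each $R_w$ is immediate from saturation under $\trans$, which has top priority in the invertible phase. The crux is termination of $(\leq_{D_W}; R_{D_W})$, and this is the one place where the non-progress guarantee of \cref{prop:denier-subtree} is essential. Given a putative infinite path $(w_1,d_1) \leq_{D_W} (w_2,d_1)\, R_{D_W} (w_2,d_2) \leq_{D_W} (w_3,d_2)\, R_{D_W} \dots$, the world-components form an increasing chain $w_1 \leq w_2 \leq \dots$ and hence lie along a single infinite branch $\beta$ of $T$; the domain-components $(d_i)$ thread into a trace along $\beta$, where each $d_i$ persists unchanged through the intervening (finite) invertible phase, and each step $d_i\, R_{w_{i+1}} d_{i+1}$ witnesses an atom $d_i R d_{i+1}$ present in $S_{w_{i+1}}$, furnishing a progress point. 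Infinitely many $R_{D_W}$-steps thus yield a \emph{progressing} trace along $\beta$, contradicting that every infinite branch of $T$ is non-progressing; hence no such path exists. I expect this reconciliation of the non-wellfounded progress condition with the product relation $(\leq_{D_W}; R_{D_W})$ of $\predIGL$ to be the main obstacle.

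The heart of the argument is then a truth lemma: for every world $w$ and labelled formula $x{:}A$, if $x{:}A \in \Gamma_w$ then $\Kmod, w \models^\rho \STsub{x}{A}$, and if $x{:}A \in \Delta_w$ then $\Kmod, w \notmodels^\rho \STsub{x}{A}$, proved by induction on $A$. The atomic and $\bot$ cases use that an unprovable saturated sequent is not initial (no shared atom, no $x{:}\bot$ on the left); the $\land$, $\lor$, $\leftrule\Box$, $\leftrule\Diamond$, and $\rightrule\Diamond$ cases follow from saturation under the corresponding invertible (macro) rule together with the induction hypothesis at the \emph{same} world. The delicate cases are the global connectives. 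For $x{:}A\limp B \in \Gamma_w$ I argue at an arbitrary $w' \geq w$: the formula persists, so saturation at $w'$ forces $x{:}A \in \Delta_{w'}$ or $x{:}B \in \Gamma_{w'}$, and the induction hypothesis converts this into the required implication; $x{:}\Box A \in \Gamma_w$ is analogous, reading off $d{:}A \in \Gamma_{w'}$ for each $R_{w'}$-successor $d$ of $x$. Dually, for $x{:}A\limp B \in \Delta_w$ (resp.\ $x{:}\Box A \in \Delta_w$) I use closure of $T$ under the non-invertible $\rightrule\limp$ (resp.\ $\rightrule\Box$) step, which produces a world $w' \geq w$ refuting the premiss and hence, by induction, a future world at which $A$ holds but $B$ fails (resp.\ an $R_{w'}$-successor refuting $A$).

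Finally, letting $w_0$ be the saturated completion of the root along $T$, upward monotonicity of contexts gives $\Gamma \subseteq \Gamma_{w_0}$ and $\Delta \subseteq \Delta_{w_0}$, so the truth lemma at $w_0$ (with $\rho$ the identity) yields exactly the claimed refutation, proving the countermodel construction. Contraposing gives $\predIGL \models A$ implies $\multiintlabGL \proves x{:}A$; and since the proof search strategy underlying the whole construction never applies $\cut$, the witnessing $\infty$-proofs are cut-free, so $\multiintlabGL$ is cut-free complete over $\predIGL$.
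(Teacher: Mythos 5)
Your proposal is correct and follows essentially the same route as the paper: extract a non-progressing Denier subtree via \cref{prop:denier-subtree}, build worlds from the saturated sequents ending each finite invertible phase (\cref{lem:invertiblefinite}) with identity environments, derive transitivity from $\trans$-saturation and termination of $(\leq_{D_W};R_{D_W})$ by turning any infinite path into a progressing trace (contradiction), prove the truth lemma by induction on formulas with the global cases ($\limp$, $\Box$) handled via persistence/saturation on the left and closure under non-invertible steps on the right, and finally observe that the proof search never invokes $\cut$, yielding cut-free completeness. This matches the paper's proof of \cref{thm:countermodel-construction} and its concluding remark for the corollary, down to the key lemmas invoked.
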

\section{Completeness of $\labIGL$ via (partial) cut-elimination}
\label{sec:cut-elim}

To obtain completeness of $\labIGL$ for $\birelIGL$, we will simulate $\mlIGL$ using cuts in an extension of $\labIGL$ that allows reasoning over disjunctions of labelled formulas. 
We apply a (partial) cut-elimination procedure to eliminate these disjunctions, and then note that any resulting proof is already one of $\labIGL$.

\medskip

\noindent
We shall use metavariables $\phi,\psi$ etc.\ to vary over \emph{disjunctions} of labelled formulas. I.e.\
\(
\phi,\psi, \dots \quad ::= \quad (x:A) \quad | \quad \phi \lor \psi
\).

\begin{definition}
    The system $\dlabIKfour $ is the extension of $\labIKfour$ by duly adapting identity, cut, structural and $\lor $ rules to allow for $\phi$-formulas. In particular it has the following $\lor $ rules:
    \[
    \small
    \vliinf{\leftrule \lor}{}{\relats R, \Gamma, \phi_0 \lor \phi_1 \seqar \psi}{\relats R, \Gamma, \phi_0 \seqar \psi}{\relats R, \Gamma , \phi_1 \seqar \psi}
    \qquad
    \vlinf{\rightrule \lor}{i \in \{0,1\}}{\relats R, \Gamma \seqar \phi_0\lor \phi_1}{\relats R, \Gamma \seqar \phi_i}
    \]
\end{definition}

The \demph{degree} of a formula $(x_1: A_1) \lor \cdots \lor (x_d:A_d)$ is $d$. 
The degree of a cut is the degree $d$ of its cut-formula, in which case we say it is a $d$-cut.
The degree of a preproof is the maximum degree of its cuts (when this is well-defined). 

\begin{proposition}
\label{prop:cut-free-multi-to-dis-lab-cut-sys}
If $\mlIKfour$ has a cut-free $\infty$-proof of $x:A$, $\dlabIKfour$ has~one~of~bounded~degree. 
\end{proposition}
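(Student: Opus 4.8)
The plan is to translate the given cut-free $\mlIKfour$ $\infty$-proof into $\dlabIKfour$ sequent by sequent, sending each multi-succedent sequent $\relats R, \Gamma \seqar \Delta$ to the single-succedent sequent $\relats R, \Gamma \seqar \bigvee \Delta$, where $\bigvee\Delta$ is the $\varphi$-formula obtained by disjoining the finitely many labelled formulas of $\Delta$ (read up to associativity, commutativity, and, via the adapted contraction, idempotence; the degenerate empty succedent, which can only arise under $\leftrule\bot$, I would handle by a fixed convention such as a designated $z:\bot$). First I would fix, for each inference rule of $\mlIKfour$, a finite $\dlabIKfour$ \emph{gadget} deriving the translation of its conclusion from the translations of its premisses. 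The left rules, $\id$ and $\leftrule\bot$ leave the succedent passive, so they lift directly with no cut; and since $\rightrule\Box$ and $\rightrule\limp$ carry a singleton RHS in $\mlIKfour$, at those points $\bigvee\Delta$ is already a single labelled formula and the corresponding $\labIKfour$ rule applies verbatim. The rule $\rightrule\lor$ merely reshapes the succedent disjunction (adding a disjunct) and is absorbed by the adapted right-structural/$\lor$ rules.

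The interesting gadgets are those simulating genuinely multi-succedent behaviour, where a right rule or the left premiss of $\leftrule\limp$ must act while a disjunctive context $\bigvee\Delta$ survives on the right. Since $\dlabIKfour$ adapts only the identity, cut, structural and $\lor$ rules, its $\rightrule\land$ and $\rightrule\Diamond$ can fire only on a succedent that is a single labelled formula of the right shape; to isolate a disjunct while preserving $\bigvee\Delta$ I would therefore use a cut. For $\rightrule\land$, from $\Gamma \seqar \bigvee\Delta \lor (x:A)$ and $\Gamma \seqar \bigvee\Delta \lor (x:B)$ I would derive $\Gamma \seqar \bigvee\Delta \lor (x:A\land B)$ by cutting on $\bigvee\Delta \lor (x:A)$ and, in its right branch, on $\bigvee\Delta\lor(x:B)$, using $\leftrule\lor$ to case-split each disjunct (the $\bigvee\Delta$-disjuncts closing by $\rightrule\lor$ and identity, the principal disjuncts combining by $\rightrule\land$); $\rightrule\Diamond$ is simulated similarly, and the $\leftrule\limp$ step — whose left premiss retains $\Delta$ on the right — is handled by a cut on $\bigvee\Delta\lor(x:A)$ after pre-deriving $\Gamma, x:A\limp B, x:A \seqar \bigvee\Delta$ from its right premiss.

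The main obstacle is the \emph{degree} bound. The cut-formulas of these gadgets are exactly the succedent disjunctions $\bigvee\Delta$ extended by one principal disjunct, so each introduced cut has degree at most $|\Delta|+O(1)$; and because degree counts disjuncts intrinsically rather than the bracketing, no reassociation can lower it. Hence the construction has bounded degree \emph{if and only if} the succedent size $|\Delta|$ is uniformly bounded across the given proof, and establishing such a bound is the crux. I would obtain it from the subformula property of the cut-free $\mlIGL$ proof — every right formula is a labelled subformula $y:B$ with $B\in\mathrm{Sub}(A)$ — together with the fact that the non-invertible rules $\rightrule\Box$ and $\rightrule\limp$ reset the succedent to a singleton, so only a subformula-controlled amount can re-accumulate (by \cref{lem:invertiblefinite}) before the next reset. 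Making this accumulation bound genuinely \emph{uniform}, i.e.\ independent of the depth in the proof and hence of how large the relational context has grown, is the delicate point I expect to absorb most of the work; it may require first normalising the proof, e.g.\ discarding redundant succedent formulas by admissibility of weakening and contraction.

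Finally I would verify that the resulting infinite $\dlabIKfour$ derivation is genuinely progressing, hence an $\infty$-proof. The key observation is that every gadget is finite and leaves the relational atoms and labels untouched, manipulating only the succedent disjunction; consequently any infinite branch of the simulation passes through infinitely many images of original rule instances and projects onto an infinite branch of the original proof. Since the modal rules — the sole sources of progress points $x_iRx_{i+1}$ — are preserved verbatim, a progressing trace along the original branch lifts to one along its image, so progress is inherited from the given $\mlIGL$ $\infty$-proof. Together with the degree bound, this yields the required bounded-degree $\dlabIKfour$ $\infty$-proof.
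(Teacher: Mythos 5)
Your construction is exactly the paper's: replace each succedent $\Delta$ by the disjunction of its distinct labelled formulas, carry over the left rules and the already-singleton $\rightrule\Box$ and $\rightrule\limp$ steps verbatim, simulate the remaining right-hand activity (and the left premiss of $\leftrule\limp$) by finite cut gadgets whose cut formulas are $\bigvee\Delta$ extended by the principal disjunct, and conclude progress because the gadgets manipulate only logical formulas and never relational atoms, so every infinite branch of the new preproof projects onto a branch of the original with the same relational content. Those parts are correct and match the paper's (very terse) proof. (One small slip: $\rightrule\lor$ is \emph{not} absorbed by the adapted $\lor$ rules, since $x\colon(A_0\lor A_1)$ and $(x\colon A_0)\lor(x\colon A_1)$ are distinct $\phi$-formulas; this case needs the same kind of cut gadget, which does not change your degree analysis.)

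The gap is exactly where you put it: the degree bound, which you call the crux and leave unestablished, and your sketched route to it would not close it. First, an arbitrary cut-free $\mlIKfour$ $\infty$-proof need not follow the proof-search discipline of \cref{sec:completeness} at all --- it need not be saturated, may weaken and contract freely, and may have infinite branches that never pass through a $\rightrule\Box$ or $\rightrule\limp$ reset (progress can be sustained by $\leftrule\Diamond$ alone) --- so \cref{lem:invertiblefinite} and phase-based reasoning are simply unavailable without a normalisation theorem you would still have to state and prove. Second, even for the canonical proofs produced by the proof-search strategy, \cref{lem:invertiblefinite} gives only that each invertible phase is \emph{finite}, not that its succedent growth is bounded \emph{uniformly over all phases}: saturation for a right diamond $x\colon\Diamond A_1$ adds one succedent formula $y\colon A_1$ for every $y$ with $xRy \in \relats R$, and the label set grows without bound along infinite branches, so succedent sizes --- hence your cut degrees $|\Delta|+O(1)$ --- are controlled by subformulas \emph{times} labels, not by subformulas alone; your phrase ``only a subformula-controlled amount can re-accumulate'' is unjustified as stated. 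For comparison, the paper's own proof describes precisely your construction in three lines and asserts bounded degree without further argument, with the degree bound implicit in the observation that each inserted cut formula is (up to one extra disjunct) the translation of a succedent of the given proof. So you have isolated the one genuinely delicate point of this proposition, but flagging the crux is not resolving it, and the resolution you propose fails.
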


Moreover, immediately from definitions, we have:
\begin{observation}
\label{prop:dis-lab-sys-only-lab}
    A $\dlabIKfour$ $\infty$-proof containing only labelled formulas is a $\labIKfour$ $\infty$-proof.
\end{observation}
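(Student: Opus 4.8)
The plan is a direct appeal to the definition of $\dlabIKfour$ as an \emph{extension} of $\labIKfour$ together with a rule-by-rule inspection. Fix a $\dlabIKfour$ $\infty$-proof $P$ in which every $\phi$-formula occurring in any of its sequents is in fact a single labelled formula $x:A$, i.e.\ of degree $1$. I would argue that each inference step of $P$ is already an instance of a rule of $\labIKfour$, and that the correctness criterion is literally unchanged, so that $P$ is an $\labIKfour$ $\infty$-proof.

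First I would dispose of the rules that the two systems share outright. The modal and remaining propositional logical rules (those for $\land$, $\limp$, $\Box$, $\Diamond$, $\bot$), together with the relational rules $\trans$ and $\thinning$, act only on labelled formulas and relational atoms and are identical in both systems; any instance of them in $P$ is therefore an $\labIKfour$ instance verbatim. For the rules that $\dlabIKfour$ generalises to $\phi$-formulas, namely identity, $\cut$, and the structural rules, I would observe that their restriction to degree-$1$ $\phi$-formulas coincides exactly with the corresponding $\labIKfour$ rule; since by hypothesis every formula in $P$ has degree $1$, each such instance is again an $\labIKfour$ instance.

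The only genuinely new rules are the $\phi$-level disjunction rules $\leftrule\lor$ and $\rightrule\lor$. Here the key point is that their principal formula is a proper meta-disjunction $\phi_0 \lor \phi_1$, which by definition has degree $\geq 2$ and hence is not a single labelled formula. Consequently no such step can occur in $P$ without contradicting the standing hypothesis that $P$ contains only labelled formulas. (The ordinary disjunction rules acting on a labelled formula $x : A \lor B$ are inherited unchanged from $\labIKfour$ and cause no trouble, since $x:A\lor B$ is syntactically a labelled formula, not a meta-disjunction.) This exhausts the rules of $\dlabIKfour$.

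Finally I would note that the progress condition transfers with no work: the definitions of trace, progress point, and progressing branch refer only to the labels $x_i$ and the relational atoms $x_i R x_{i+1}$ occurring along a branch, none of which is sensitive to whether a right-hand formula is a genuine labelled formula or a meta-disjunction. Hence ``$P$ is progressing as a $\dlabIKfour$ preproof'' and ``$P$ is progressing as an $\labIKfour$ preproof'' are the same statement. I do not anticipate any real obstacle: the claim is essentially bookkeeping, as signalled by its billing as an immediate consequence of the definitions, and the single point deserving explicit mention is that the new $\lor$ rules are excluded precisely because their principal formula has degree at least $2$.
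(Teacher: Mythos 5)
Your proposal is correct and matches the paper, which states this observation as following ``immediately from definitions'' and gives no separate proof; your rule-by-rule inspection (shared rules verbatim, $\phi$-generalised rules collapsing to their degree-$1$ instances, the meta-$\lor$ rules excluded since their principal formula has degree at least $2$, and the progress condition being insensitive to the distinction) is exactly the bookkeeping that justifies that claim.
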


Thus, to conclude completeness of $\labIGL$ for $\birelIGL$, it suffices to eliminate the use of disjunctions of labelled formulas in $\dlabIKfour$ $\infty$-proofs.
We will prove this by a partial cut-elimination procedure, reducing cuts over disjunctions of labelled formulas until they are on labelled formulas.

For the remainder of this section we work only with preproofs without thinning $\thinning$, by \cref{obs:thinning}.
Recall that this means that relational contexts are \emph{growing}, bottom-up.
For a sequent $S$ write $\relcxt S$ for its relational context. I.e.\ if $S$ is $\relats R, \Gamma \seqar \phi$, then $\relcxt S := \relats R$.

\begin{lemma}
    [Invertibility]
    \label{lem:invertibility-of-orleft}
    If $\dlabIKfour $ has a $\infty$-proof $P$ of $ \relats R, \Gamma , \phi_0 \lor \phi_1 \seqar \psi$ then it also has $\infty$-proofs $P_i$ of $ \relats R, \Gamma, \phi_i \seqar \psi$, for $i\in \{0,1\}$.
    Moreover, for each branch 
    $
    (S_i)_{i<\omega}
    $ of $P_i$ there is a branch 
    $
    (S_i')_{i<\omega}
    $ of $P$ such that $\relcxt{S_i} \subseteq \relcxt {S_i'}$ for all $i<\omega$.
\end{lemma}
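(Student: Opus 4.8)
The plan is to define, by corecursion on $P$, a transformation $T_i$ that traces the distinguished occurrence of the disjunction $\phi_0 \lor \phi_1$ upward through $P$ and replaces it everywhere by $\phi_i$, selecting the $i$-th premiss whenever a $\leftrule\lor$ step actually decomposes a traced occurrence. Since contractions may duplicate the traced formula, I would more precisely carry along a finite multiset of \emph{marked} occurrences of $\phi_0 \lor \phi_1$ in each sequent, all of which are replaced by $\phi_i$ in the output. The construction is genuinely coinductive, as the input (hence the output) may be non-wellfounded; formally one defines $T_i$ as the unique map satisfying the local clauses below.

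The case analysis on the bottom rule $r$ of the current subproof is routine. In general the marked occurrences are side formulas that pass unchanged to the premiss(es), so one simply reproduces $r$ on the transformed sequents and recurses. The only special cases are: if $r$ is $\leftrule\lor$ with a marked occurrence principal, output $T_i$ applied to the $i$-th premiss alone, discarding the other premiss and the rule; if $r$ is $\leftrule\contr$ on a marked occurrence, reproduce it on $\phi_i$ (the premiss then carries two marked copies, both replaced); and if $r$ is $\leftrule\wk$ introducing the marked occurrence, weaken in $\phi_i$ instead. In each case local correctness of the reproduced step is immediate, so the output $P_i$ is a preproof of $\relats R, \Gamma, \phi_i \seqar \psi$.

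The remaining, and main, obstacle is to certify that each $P_i$ is a genuine $\infty$-proof, i.e.\ that the progress condition survives the transformation. Here I would exploit that $T_i$ never touches relational atoms or labels: it only rewrites antecedent labelled formulas. Consequently, mapping each node of $P_i$ back to the node of $P$ it was produced from yields, for every branch $(S_i)_{i<\omega}$ of $P_i$, a branch $(S_i')_{i<\omega}$ of $P$ differing only in that each $\phi_i$ stands where a marked $\phi_0 \lor \phi_1$ stood; in particular every relational atom of $S_i$ occurs in $S_i'$, giving the claimed $\relcxt{S_i} \subseteq \relcxt{S_i'}$ (inclusion rather than equality absorbing the depth shift caused by the discarded $\leftrule\lor$ steps). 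Since $P$ is progressing, the branch $(S_i')$ carries a progressing trace; as the discarded $\leftrule\lor$ steps add no relational atoms and alter no labels, and relational contexts only grow bottom-up (thinning being absent by \cref{obs:thinning}), any trace move occurring at a discarded step can be re-timed to an adjacent retained step at which the same relational atom is still present. Only finitely many $\leftrule\lor$ steps can be discarded between consecutive retained steps, as each consumes a marked disjunction replenished only by the retained contraction steps, so this re-timing delays each progress point by a finite amount and hence preserves a progressing trace for $(S_i)$. This simultaneously shows that $P_i$ is an $\infty$-proof and verifies the ``moreover'' clause.
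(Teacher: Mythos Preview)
Your approach is exactly the paper's: a coinductive transformation that replaces every traced ancestor of $\phi_0\lor\phi_1$ by $\phi_i$ and, when such an occurrence is principal for $\leftrule\lor$, retains only the $i$-th premiss. The paper's own proof is three terse lines that leave the contraction/weakening bookkeeping and the progress-preservation argument implicit, so your added detail is in the same spirit and goes somewhat beyond what is actually written there.
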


From here the key cut-reduction for $\phi$-formulas is simply,
\[
\small
\vlderivationnc{
\vliin{\cut}{}{\relats R, \Gamma, \Gamma' \seqar \phi}{
    \vlin{\rightrule \lor}{}{\relats R, \Gamma \seqar \chi_0 \lor \chi_1}{
    \vltr{P}{\relats R, \Gamma, \seqar \chi_i }{\vlhy{\ }}{\vlhy{\ }}{\vlhy{\ }}
    }
}{
    \vltr{Q}{\relats R, \Gamma', \chi_0 \lor \chi_1 \seqar \phi}{\vlhy{\ }}{\vlhy{\ }}{\vlhy{\ }}
}
}
\qquad \leadsto \qquad 
\vlderivationnc{
\vliin{\cut}{}{\relats R, \Gamma, \Gamma' \seqar \phi}{
    \vltr{P}{\relats R, \Gamma \seqar \chi_i }{\vlhy{\ }}{\vlhy{\ }}{\vlhy{\ }}
}{
    \vltr{Q_i}{\relats R, \Gamma' , \chi_i \seqar \phi}{\vlhy{\ }}{\vlhy{\ }}{\vlhy{\ }}
}
}
\]
where $Q_i$ is obtained by \cref{lem:invertibility-of-orleft} above.
Note that this reduction `produces' a cut of lower complexity.
Commutative cut-reduction cases, where the cut-formula is not principal on the left, are standard and always produce.
Note that, thanks to invertibility, we do not consider commutations over the right premiss of a $\phi$-cut. 
However commutative cases may \emph{increase} the heights of progress points; for instance when commuting over a $\leftrule \Box$ step,
\[
\small
\vlderivationnc{
\vliin{\cut}{}{\relats R, xRy, \Gamma, \Gamma', x:\Box A \seqar \phi}{
    \vlin{\leftrule \Box}{}{\relats R, xRy, \Gamma , x:\Box A \seqar \chi}{
    \vltr{P}{\relats R, xRy,\Gamma, y: A \seqar \chi }{\vlhy{\ }}{\vlhy{\ }}{\vlhy{\ }}
    }
}{
    \vltr{Q}{\relats R, xRy, \Gamma' , \chi\seqar \phi }{\vlhy{\ }}{\vlhy{\ }}{\vlhy{\ }}
}
}
\quad \leadsto\quad
\vlderivationnc{
\vlin{\leftrule \Box}{}{\relats R, xRy, \Gamma, \Gamma' , x:\Box A \seqar \phi}{
\vliin{\cut}{}{\relats R, xRy, \Gamma, \Gamma', y:A \seqar \phi}{
    \vltr{P}{\relats R, xRy,\Gamma, y: A \seqar \chi }{\vlhy{\ }}{\vlhy{\ }}{\vlhy{\ }}
}{
    \vltr{Q}{\relats R, xRy, \Gamma' , \chi\seqar \phi }{\vlhy{\ }}{\vlhy{\ }}{\vlhy{\ }}
}
}
}
\]
observe that progress points in $Q$ have been raised.
Thus, to show that the limit of cut-reduction is progressing we will need appropriate invariants, requiring additional notions.

\medskip

A \demph{bar} of a preproof is a (necessarily finite, by K\"onig's Lemma) antichain intersecting each infinite branch. 
Each bar $B$ induces a (necessarily finite) subtree $\downset B $ of nodes beneath (and including) it.
Given a cut-reduction $ P \leadsto_\cutred P'$ we associate to each bar $B$ of $P$ a bar $\mathsf r(B)$ of $P'$ in the natural way. 
In particular we have that $\mathsf r(B)$ satisfies the following properties:

\begin{lemma}
[Trace preservation]
\label{lem:rel-cxt-from-bars-preserved}
   If $P \leadsto_\cutred P'$ and $B$ a bar of $P$, for any sequent $S' \in \mathsf r(B)$ there is a sequent $S\in B$ such that $\relcxt {S'} \supseteq \relcxt S$. 
\end{lemma}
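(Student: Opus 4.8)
The plan is to define $\mathsf{r}(B)$ by tracking how each node of $B$ is relocated by the single local rewrite underlying $P \leadsto_\cutred P'$, and then to verify the claimed inclusion by a case analysis on the shape of that rewrite. Throughout, the governing structural fact is that, since we work without $\thinning$, relational contexts \emph{grow bottom-up}: if $S$ lies below $T$ on a branch then $\relcxt{S} \subseteq \relcxt{T}$. Each reduction modifies only the subderivation rooted at the reduced $\cut$; I write $c$ for its conclusion in $P$ and $c'$ for the identical conclusion in $P'$, so that $\relcxt{c'} = \relcxt{c}$ and every node not strictly above $c$ is shared between $P$ and $P'$.

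First I would dispose of the nodes of $\mathsf{r}(B)$ lying outside the reduced redex: such an $S'$ is literally a node $S \in B$, so $\relcxt{S'} = \relcxt{S}$ and we are done. For a node $S'$ lying \emph{strictly above} $c'$, bottom-up growth gives $\relcxt{S'} \supseteq \relcxt{c'} = \relcxt{c}$, so it suffices to exhibit $S \in B$ with $\relcxt{S} \subseteq \relcxt{c}$, and any node of $B$ at or below $c$ qualifies. By the antichain property the only situation still requiring work is when $B$ has a node \emph{inside} the redex (whence, by incomparability, no node on the spine from the root to $c$); it is there that the reduction-specific analysis is needed, placing $S'$ in the rewritten redex and matching it against the original bar node on the corresponding branch.

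For the commutative reductions the displaced non-principal subderivation (the $Q$ of the displayed cases) is recopied above the permuted rule, and is relationally weakened exactly by the atom that rule introduces read upward, e.g.\ $xRz$ for $\trans$ and $xRy$ for $\leftrule{\Diamond}$, and nothing for the binary propositional rules or $\leftrule{\Box}$. Consequently each node $S'$ of the recopied $Q$ satisfies $\relcxt{S'} \supseteq \relcxt{S}$ for the corresponding node $S$ of $Q$ in $P$, which lies in $B$ precisely when $B$ met that branch inside $Q$; the remaining parts of the redex come from the principal premiss and are unchanged, hence handled as above. This is exactly the earlier observation that such commutations raise progress points while only enlarging relational contexts, so the $\supseteq$ direction is automatic here.

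The delicate case, and the main obstacle, is the principal $\lor$-cut, where $Q$ is replaced by the inverted derivation $Q_i$ supplied by \cref{lem:invertibility-of-orleft}; note that its \emph{moreover} clause records the inclusion $\relcxt{S'} \subseteq \relcxt{S}$ along matched branches, which is the \emph{opposite} of what we want. The resolution I would use is that $\leftrule{\lor}$-inversion is \emph{relationally inert}: it reproduces every non-$\lor$ rule verbatim and only permutes, prunes, and selects at $\lor$-steps, none of which introduce relational atoms. Hence along the matched branch (the one following the selected premiss) the relational atoms are in fact \emph{identical}, sharpening the recorded inclusion to an equality, so that for $S' \in \mathsf{r}(B) \cap Q_i$ the matched node $S \in Q$ belongs to $B$ and satisfies $\relcxt{S'} = \relcxt{S} \supseteq \relcxt{S}$. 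The one point demanding care is to check that $\mathsf{r}$ relocates the bar nodes of $B\cap Q$ to the positions matched by \cref{lem:invertibility-of-orleft}, so that the dominating $S$ is genuinely a member of $B$; granting this, assembling the three cases gives the inclusion for every $S' \in \mathsf{r}(B)$.
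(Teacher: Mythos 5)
Your proposal is correct and takes essentially the same route as the paper: $\mathsf r(B)$ is defined by relocating bar nodes through the local rewrite, and the inclusion is verified by inspecting the cases --- literal identity outside the redex, $\relcxt{S'} \supseteq \relcxt{S}$ in the commutative cases because the displaced subproof $Q$ is at worst relationally weakened (e.g.\ by $xRy$ in the $\leftrule\Diamond$ case), and the principal $\lor$-case via \cref{lem:invertibility-of-orleft}. Your additional observation that the inclusion $\relcxt{S_i} \subseteq \relcxt{S_i'}$ as stated in \cref{lem:invertibility-of-orleft} points the wrong way for this purpose, and that the inversion in fact never shrinks relational contexts node-for-node (so the required $\supseteq$, indeed an equality, holds along matched branches), is a genuine sharpening of a point the paper leaves implicit in its ``by inspection of the cases'' argument.
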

This lemma allows us to keep track of a fixed amount of progress information during cut-elimination, sidestepping the issue that commutative cases raise progress points.
Note that we really need the $\relcxt {S'} \supseteq \relcxt S$ due to the $\leftrule \Diamond$ commutative case:
\[
\toks0={.6}
\small
\vlderivationnc{
\vliin{\cut}{}{\relats R, \Gamma, \Gamma', x:\Diamond A \seqar \phi}{
    \vlin{\leftrule \Diamond}{}{\relats R, \Gamma, x: \Diamond A \seqar \chi}{
    \vltr{P}{\relats R, xRy, \Gamma , y:A \seqar \chi}{\vlhy{\ }}{\vlhy{\ }}{\vlhy{\ }}
    }
}{
    \vltr Q {\relats R, \Gamma' , \chi \seqar \phi}{\vlhy{\ }}{\vlhy{\ }}{\vlhy{\ }}
}
}
\quad \leadsto \quad
\vlderivationnc{
\vlin{\leftrule \Diamond}{}{\relats R, \Gamma, \Gamma' , x:\Diamond A \seqar \phi}{
\vliin{\cut}{}{\relats R, xRy, \Gamma, \Gamma', y:A \seqar \phi}{
    \vltr{P}{\relats R, xRy, \Gamma , y:A \seqar \chi}{\vlhy{\ }}{\vlhy{\ }}{\vlhy{\ }}
}{
    \vltrf {xRy,Q} {xRy,\relats R, \Gamma' , \chi \seqar \phi}{\vlhy{\quad }}{\vlhy{\quad }}{\vlhy{\quad }}{\the\toks0}
}
}
}
\]
where the preproof $xRy,Q$ is obtained from $Q$ by prepending $xRy$ to the LHS of each sequent.
From here we can effectively reduce infinitary cut-elimination to finitary cut-elimination, by eliminating cuts beneath higher and higher bars. 
The step case is given by:

\begin{lemma}
[Productivity]
\label{lem:push-cuts-above-bar}
For any $\infty$-proof $P$ and bar $B$ there is a sequence of cut-reductions $P=P_0 \leadsto_{\cutred_1} \cdots \leadsto_{\cutred_n} P_n$ such that $\cutred_n \cdots \cutred_1 (B)$ has no $d$-cuts beneath it in $P_n$.
\end{lemma}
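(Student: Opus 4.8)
Productivity Lemma (Lemma~\ref{lem:push-cuts-above-bar}) asks us to show that, for any $\infty$-proof $P$ and any bar $B$, a finite sequence of cut-reductions suffices to remove all $d$-cuts (where $d$ is the degree of $P$) from beneath the image of $B$.

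The plan is to reduce the infinitary problem to an essentially finitary one, exploiting the fact that $\downset B$, the subtree beneath $B$, is finite by K\"onig's Lemma. First I would observe that a $d$-cut beneath (the image of) $B$ is a cut of \emph{maximal} degree in $P$, so its cut-formula $\phi_0 \lor \phi_1$ is a genuine disjunction of degree $d \geq 2$. I would then set up a suitable termination measure on the finite subtree $\downset{\cutred_i\cdots\cutred_1(B)}$ that strictly decreases under the cut-reductions displayed earlier in the section. The natural candidate is a multiset ordering: to each topmost $d$-cut beneath the bar (i.e.\ one with no further $d$-cut above it within the subtree) assign the pair consisting of the degree of its cut-formula together with the height of the cut above the bar, and take the multiset of these pairs over all such cuts in the finite region $\downset B$.

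The key steps, in order, would be: (1) locate a topmost $d$-cut within the finite subtree beneath the current bar; (2) apply the appropriate reduction step. If the cut-formula is principal on the left, the principal reduction over $\rightrule\lor$/$\leftrule\lor$ (invoking \cref{lem:invertibility-of-orleft} to produce $Q_i$) replaces the $d$-cut by a cut of \emph{strictly lower} degree, since $\chi_i$ has smaller degree than $\chi_0 \lor \chi_1$. If the cut-formula is not principal on the left, a commutative reduction pushes the cut upward past a logical or relational rule, strictly decreasing the height of the cut above the bar while not increasing the degree. In both cases the chosen measure drops strictly in the multiset ordering over the \emph{finite} region $\downset B$, so well-foundedness of the multiset order on a fixed finite set guarantees that after finitely many steps no $d$-cut remains beneath the image of the bar. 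Throughout I would invoke \cref{lem:rel-cxt-from-bars-preserved} only to ensure the bar image $\cutred_n\cdots\cutred_1(B)$ is well-defined and tracks correctly; its role here is bookkeeping rather than part of the termination argument.

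The main obstacle is that commutative cut-reductions can push a cut upward past rules like $\leftrule\Box$ or $\leftrule\Diamond$, which, as the text emphasises, \emph{raises} progress points and hence threatens to create new $d$-cuts higher up or even move work outside $\downset B$. I must argue carefully that all newly created cuts remain of degree $\leq d$ and, crucially, that the commutations relevant to this lemma stay confined to the finite subtree beneath the (fixed) bar $B$ so that the height-above-bar component is bounded and the measure genuinely well-founded. The delicate point is choosing the measure so that the height increase inherent in commuting over modal rules is dominated either by a degree drop (principal case) or by the fact that a single commutation still reduces the distance to the \emph{top} of the bounded region; restricting attention to $\downset B$ is exactly what makes this finitary and avoids the non-wellfoundedness that plagues the global limit. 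Once productivity is established in this local, bounded form, it can be iterated over higher and higher bars to drive the full continuous cut-elimination in the sequel.
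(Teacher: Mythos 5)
Your proposal takes essentially the same route as the paper, which proves productivity by induction on the number of $d$-cuts in $\downset B$ with a subinduction on the multiset of distances of topmost $d$-cuts from $B$: the principal case (via \cref{lem:invertibility-of-orleft}) strictly lowers the cut degree, while the commutative cases strictly lower the distance to the bar, all confined to the finite region $\downset B$. Your multiset of (degree, distance) pairs is just a repackaging of that nested induction, and your reading of \cref{lem:rel-cxt-from-bars-preserved} as mere bookkeeping here (its real work being the limit argument of \cref{prop:cut-degree-reduction}) also matches the paper's usage.
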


The argument proceeds in a relatively standard way:
    by induction on the number of $d$-cuts in $\downset B$, with a subinduction on the multiset of the distances of topmost $d$-cuts in $\downset B$ from $B$, where the distance is the length of the shortest path from the cut to $B$. 
Applying \cref{lem:push-cuts-above-bar} to higher and higher bars allows us to reduce cut-degrees as required:

\begin{proposition}
 [Degree-reduction]
 \label{prop:cut-degree-reduction}
    For each $\infty$-proof of $\dlabIKfour$ of $x:A$ of degree $d$, there is one of degree $<d$.
\end{proposition}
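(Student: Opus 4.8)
The plan is to prove Proposition~\ref{prop:cut-degree-reduction} by iterating the productivity lemma (\cref{lem:push-cuts-above-bar}) along an exhaustive sequence of bars, and taking a suitable limit of the resulting cut-reduction sequence. Fix an $\infty$-proof $P$ of $x:A$ of degree $d$. I would first choose an increasing, exhaustive sequence of bars $B_0, B_1, B_2, \dots$ of $P$, i.e.\ bars whose induced subtrees $\downset{B_0} \subseteq \downset{B_1} \subseteq \cdots$ exhaust $P$ (for instance the bar of all nodes at depth $n$, truncated to infinite branches). The strategy is then to apply \cref{lem:push-cuts-above-bar} repeatedly: starting from $P$, eliminate all $d$-cuts beneath $B_0$, then beneath the image of $B_1$, and so on, producing a sequence of $\infty$-proofs $P = P^{(0)} \leadsto^* P^{(1)} \leadsto^* P^{(2)} \leadsto^* \cdots$ where $P^{(n)}$ has no $d$-cut beneath (the image of) $B_n$.

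The key point is that this sequence stabilises \emph{locally} below every fixed bar, so its limit $P^{(\infty)}$ is well-defined as a (possibly infinite) derivation: once a finite subtree $\downset{B_n}$ has had its $d$-cuts pushed above $B_n$, later reductions (which act above that image) do not reintroduce $d$-cuts beneath it, so the portion of the derivation below each bar is eventually fixed. Taking the union of these stable finite parts yields $P^{(\infty)}$, which by construction contains no $d$-cut at all, hence has degree $<d$. Two things must be verified about $P^{(\infty)}$: that it is a genuine $\dlabIKfour$ preproof (immediate, as each reduction step is rule-preserving and the limit of locally-stable derivations is itself generated by the rules), and crucially that it is \emph{progressing}, i.e.\ an actual $\infty$-proof.

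The main obstacle is exactly this progress-preservation at the limit, which is why the auxiliary machinery of bars and \cref{lem:rel-cxt-from-bars-preserved} was set up. The difficulty is that commutative cut-reductions raise progress points arbitrarily (as illustrated by the $\leftrule\Box$ and $\leftrule\Diamond$ cases), so one cannot simply track individual traces through the reduction. Instead I would argue contrapositively: given an infinite branch of $P^{(\infty)}$, use the fact that it passes through the stable images of cofinally many bars $\mathsf{r}^n(B_n)$ to pull back, via \cref{lem:rel-cxt-from-bars-preserved}, a corresponding sequence of sequents in the original $P$ whose relational contexts are contained (up to the $\supseteq$ slack accounting for the $\leftrule\Diamond$ case) in those along the limit branch. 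Since relational contexts grow bottom-up and progress is witnessed by the appearance of relational atoms $x_iRx_{i+1}$, the trace preservation lemma guarantees that the progressing trace furnished by the progressing branch of $P$ induces one along the limit branch of $P^{(\infty)}$. Assembling these observations shows every infinite branch of $P^{(\infty)}$ is progressing, completing the proof; the bulk of the technical care lies in making the correspondence between branches and the limit-of-bars construction precise.
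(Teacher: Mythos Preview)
Your approach is exactly the paper's: iterate \cref{lem:push-cuts-above-bar} along an exhaustive sequence of bars, take the limit, and use \cref{lem:rel-cxt-from-bars-preserved} to recover progress. The one place where your sketch is imprecise is the progress argument. The sequents you pull back into $P$ via \cref{lem:rel-cxt-from-bars-preserved} need not lie on a single branch of $P$, so there is no immediate ``progressing branch of $P$'' to invoke. The paper handles this by letting $\relats R$ be the union of the relational contexts along the limit branch, observing that the set of sequents of $P$ with relational context contained in $\relats R$ is downward-closed and hence a subtree (since relational contexts grow bottom-up), and then applying K\"onig's Lemma: this subtree contains arbitrarily high nodes (namely the pulled-back sequents), so it is infinite and has an infinite branch. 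That branch is progressing because $P$ is an $\infty$-proof, and its progressing trace lives entirely inside $\relats R$, hence witnesses progress of the limit branch. This K\"onig step is precisely the ``technical care'' you flag at the end, and it is the only missing ingredient in your outline.
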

Importantly here, the preservation of progress in the limit crucially relies on \cref{lem:rel-cxt-from-bars-preserved}, under K\"onig's Lemma.
Finally by induction on degree we have:

\begin{corollary}
[Partial cut-elimination]
\label{cor:partial-cut-elim}
    If $\dlabIKfour $ has a bounded-degree $\infty$-proof of $x:A$, then it has one containing only labelled formulas.
\end{corollary}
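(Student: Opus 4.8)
The plan is to prove the statement by induction on the degree $d$ of the given $\dlabIKfour$ $\infty$-proof of $x:A$, using \cref{prop:cut-degree-reduction} as the engine of the induction and \cref{prop:dis-lab-sys-only-lab} to close the base case. Since the hypothesis supplies a \emph{bounded}-degree proof, its degree $d$ is a well-defined natural number and this induction is well-founded.

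The inductive step is immediate. Suppose $d \geq 2$. By \cref{prop:cut-degree-reduction} there is a $\dlabIKfour$ $\infty$-proof of $x:A$ of degree $d' < d$, and applying the induction hypothesis to it produces an $\infty$-proof of $x:A$ containing only labelled formulas, as required. The only point to verify is that the object handed back by \cref{prop:cut-degree-reduction} is genuinely a (progressing) $\infty$-proof of the same endsequent $\seqar x:A$, which is precisely what that proposition asserts.

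The remaining work is in the base case: I would show that any $\dlabIKfour$ $\infty$-proof $P$ of $\seqar x:A$ of degree $\leq 1$ — equivalently, one having no cut of degree $\geq 2$, which also covers the cut-free situation — in fact contains only labelled formulas, and hence is a $\labIKfour$ $\infty$-proof by \cref{prop:dis-lab-sys-only-lab}. This follows from a routine top-down induction along $P$ establishing that every sequent is \emph{labelled-only}, i.e.\ its succedent is a single labelled formula and its antecedent contains no proper disjunction $\phi_0 \lor \phi_1$. The root $\seqar x:A$ is labelled-only. Reading a rule instance upward from a labelled-only conclusion, a proper $\phi$-disjunction can be created only by a $\phi$-level $\lor$-rule or by a cut whose cut-formula is a proper $\phi$-disjunction: the former is impossible, since its principal formula $\phi_0 \lor \phi_1$ (of degree $\geq 2$) would already occur in the conclusion; the latter is excluded because every cut of $P$ is a $1$-cut. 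Every remaining rule of $\dlabIKfour$, including $1$-cuts, carries the labelled-only property of its conclusion to each premiss, so all sequents of $P$ are labelled-only, and we may pass to $\labIGL$.

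I expect the corollary itself to be essentially routine, since all of the genuine difficulty is already discharged inside \cref{prop:cut-degree-reduction}, whose proof combines productivity (\cref{lem:push-cuts-above-bar}) with trace preservation (\cref{lem:rel-cxt-from-bars-preserved}) under K\"onig's Lemma to guarantee that progress survives in the limit of infinitary cut-reduction. The only proof-theoretic content internal to the corollary is the base-case bookkeeping above, namely the observation that proper disjunctions of labelled formulas can enter a proof of a labelled endsequent only through cuts of degree $\geq 2$; once these are gone the preproof is literally a $\labIKfour$ object.
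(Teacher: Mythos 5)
Your proposal is correct and follows exactly the paper's route: the paper proves the corollary ``by induction on degree'' using the degree-reduction proposition (\cref{prop:cut-degree-reduction}) and closes with \cref{prop:dis-lab-sys-only-lab}, just as you do. The only difference is that you spell out the base case---that a degree-$\leq 1$ proof of a labelled endsequent is already labelled-only, since proper $\phi$-disjunctions can only enter bottom-up through $\phi$-level $\lor$-rules (whose principal formula would sit in the conclusion) or cuts of degree $\geq 2$---which the paper leaves implicit; this is a harmless and accurate elaboration, not a different argument.
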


From here we have our desired converse to \cref{thm:soundness}, following from \cref{prop:dis-lab-sys-only-lab,prop:birel-sat-implies-pred-sat,prop:cut-free-multi-to-dis-lab-cut-sys},
\cref{thm:completeness-of-mlIGL,cor:partial-cut-elim}:

\begin{theorem}
[Completeness]\label{thm:completeness}
    If $\birelIGL\models A$ then $\labIGL \proves x : A$.
\end{theorem}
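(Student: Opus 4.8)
The plan is to chain together the four arrows of \cref{fig:tour} in sequence, so that \cref{thm:completeness} falls out as the composite implication $\birelIGL \models A \Rightarrow \labIGL \proves x:A$. Concretely, I would assume $\birelIGL \models A$ and trace the formula $A$ around the loop: first pass from birelational models to predicate models, then run the countermodel-from-failed-proof-search argument to obtain a multi-succedent proof, and finally apply partial cut-elimination to descend back into the single-succedent system $\labIGL$. Each of these steps is already isolated as a named result in the excerpt, so the proof of \cref{thm:completeness} is essentially bookkeeping over these lemmas.

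In more detail: from the hypothesis $\birelIGL \models A$, \cref{prop:birel-sat-implies-pred-sat} (proved in \cref{sec:predmodels} via Simpson's predicate-to-birelational conversion) gives $\predIGL \models A$. Next, \cref{thm:completeness-of-mlIGL} — whose proof is the genuine technical heart, relying on the finiteness of invertible phases (\cref{lem:invertiblefinite}), the determinacy-based selection of a non-progressing denier subtree (\cref{prop:denier-subtree}), and the ensuing countermodel construction (\cref{thm:countermodel-construction}) — yields $\multiintlabGL \proves x:A$, i.e.\ a (cut-free) $\infty$-proof in the multi-succedent system. I would then invoke \cref{prop:cut-free-multi-to-dis-lab-cut-sys} to convert this cut-free multi-succedent $\infty$-proof into a bounded-degree $\infty$-proof of $x:A$ in the disjunctive labelled system $\dlabIKfour$. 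The point of passing through $\dlabIKfour$ is that it can simulate the multi-succedent right-hand sides using disjunctions of labelled formulas, at the cost of introducing cuts of bounded degree.

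The final and most delicate step is to remove those disjunctions. Here I would apply \cref{cor:partial-cut-elim}, the partial cut-elimination result, to turn the bounded-degree $\dlabIKfour$ $\infty$-proof of $x:A$ into one containing only genuine labelled formulas (no proper $\phi$-disjunctions). By \cref{prop:dis-lab-sys-only-lab}, such a proof is literally a $\labIKfour$ $\infty$-proof, i.e.\ an element of $\labIGL$, so we conclude $\labIGL \proves x:A$ as required. The substantive work underlying this last move is the productivity argument (\cref{lem:push-cuts-above-bar}) together with the trace-preservation invariant (\cref{lem:rel-cxt-from-bars-preserved}) ensuring that progress is preserved in the limit of the infinitary cut-reduction under K\"onig's Lemma.

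I expect the actual assembly here to be short — it is a one-line composition of the cited results. The real obstacles are hidden inside the components I am allowed to assume: the determinacy step of \cref{prop:denier-subtree} (securing a \emph{non-progressing} branch, so that the extracted structure genuinely lies in $\predIGL$ rather than merely in some larger class), and the limit-of-cut-reduction argument behind \cref{cor:partial-cut-elim} (where commutative cases raise progress points and one must track a fixed budget of progress information via bars). Within the proof of \cref{thm:completeness} itself, the only thing to verify carefully is that the hypotheses line up at each arrow — in particular that \cref{prop:cut-free-multi-to-dis-lab-cut-sys} indeed consumes a \emph{cut-free} multi-succedent proof (which the strategy of \cref{sec:completeness} supplies, since $\cut$ is never used in proof search) and produces a \emph{bounded-degree} input suitable for \cref{cor:partial-cut-elim}.
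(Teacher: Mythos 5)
Your proposal is correct and follows exactly the paper's own argument: the proof of \cref{thm:completeness} is precisely the composition of \cref{prop:birel-sat-implies-pred-sat}, \cref{thm:completeness-of-mlIGL} (with its cut-free strengthening), \cref{prop:cut-free-multi-to-dis-lab-cut-sys}, \cref{cor:partial-cut-elim}, and \cref{prop:dis-lab-sys-only-lab}, in that order. You also correctly identify the one hypothesis that needs checking at the interface, namely that the multi-succedent proof supplied by \cref{sec:completeness} is cut-free so that \cref{prop:cut-free-multi-to-dis-lab-cut-sys} applies and yields a bounded-degree input for partial cut-elimination.
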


\section{Conclusions}\label{sec:concs}

We have
recovered several intuitionistic formulations of $\GL$, both syntactically and semantically, in the tradition of Simpson, on intuitionistic modal logic \cite{Simpson94PhD}, and Simpson, on non-wellfounded proofs \cite{Simpson17}.
We proved the equivalence of all these formulations, cf.~\cref{fig:tour}, motivating the following definition:

\begin{definition}
    $\IGL$ is the modal logic given by any/all of the nodes of \cref{fig:tour}.
\end{definition}

\noindent
Thanks to the methodology we followed, $\IGL$ satisfies Simpson's \emph{requirements} from \cite{Simpson94PhD}. 
$\IGL$  
also interprets (classical) $\GL$ along the G{\"o}del-Gentzen negative translation.

It would be interesting to examine $\IGL$ as a logic of provability, returning to the origins of~$\GL$.
In particular $\IGL$ (even $\IK$) has a \emph{normal} $\Diamond$, distributing over $\lor$, so let us point out that it is not sound to interpret $\Diamond$ as consistency $\lnot \Box \lnot$.
At the same time (effective) model-theoretic readings of the $\Diamond$ stumble on the consequence (already of $\IK$) $\Box(A\limp B) \limp \Diamond A \limp \Diamond B$.
We expect $\IGL$ to rather correspond to the provability logic of some \emph{model} of $\HA$.

To this end it would be pertinent to develop a bona fide \emph{axiomatisation} of $\IGL$. 
As far as we know it is possible that $\IGL$ is simply the extension of $\IKfour $ by L\"ob's axiom but, similarly to other `$\mathsf I$' logics, we expect that further axioms involving $\Diamond$ will be necessary. 
On a related note, we believe that a full cut-elimination result holds for $\labIGL$, in particular by extending our argument for $\dlabIGL$.
Crucial here is that no cut-reductions \emph{delete} progress points, since we do not cut on relational atoms (cf.~a key $\Box$ reduction).
A full development of this is beyond the scope (and allocated space) of this work.

\nocite*

\bibliographystyle{splncs04}
\providecommand{\noopsort}[1]{}


\appendix
\section{Appendix for \cref{sec:gl}}
\label{app:gl}

\begin{proof}
[Proof of \cref{obs:thinning}]
Delete every thinning step and replace each relational context $\relats R$ with the union of all relational contexts beneath it.
\end{proof}

\begin{proof}
[Proof of \cref{prop:soundness-of-labGL}]
First, given a model $\M = (W,\interp \M R,V)$,
an \demph{($\M$-)assignment} for a sequent $S = \relats R, \Gamma \seqar \Delta$ is a function $\rho : \fv S \to W$ such that whenever $xRy$ occurs in $\relats R$ we have $\rho(x)R^\M \rho(y)$.
We write $\Mmod, \rho \models S$ if:
$\Mmod, \rho(x) \models A$ for some $x:A$ in $\Delta$, whenever $\Mmod, \rho(y) \notmodels B$ for all $y:B$ in $\Gamma$.

Suppose otherwise. Let $\pi $ be a $\infty$-proof of $x:A$ in $\labKfour$, and let $\M = (W,\interp \M R,V)$ be a structure in which $\interp \M R$ is transitive yet $\M, w \notmodels A$. We shall show that $\interp \M R$ is not terminating on $W$.

Set $\rho_0(x) = w$. 
By local soundness of the rules, construct an infinite branch $(S_i)_{i<\omega} $ and assignments $\rho_i:\fv {S_i}  \to W$ such that $\M, \rho_i \notmodels  S_i$ always satisfying:
$\rho_{i}(z) = \rho_{i+1}(z)$, whenever $z \in \fv{S_i}\cap \fv{S_{i+1}}$.

Now, let $(x_i)_{i<\omega}$ be a progressing trace along $(S_i)_{i<\omega}$.
By inspection of the rules we have that, at any progress point $x_i$, necessarily $\rho_i(x_i)R^\Mmod\rho_{i+1}(x_{i+1})$, otherwise $x_i = x_{i+1}$ and so $\rho_i(x_i) = \rho_{i+1}(x_{i+1})$ by above.
Thus $(x_i)_{i<\omega}$ induces an infinite path along $\interp \M R$, and so $\interp \M R$ is not terminating.
\end{proof}

\begin{proof}
    [Proof of \cref{prop:completeness-labgl}]
    It is well-known that $\labKfour$ is complete for $\Kfour$, and so derives all its axioms~\cite{Negri05}. 
    Together with the derivation of L\"ob's axiom in \cref{ex:loeb}, completeness follows by closure of $\labGL$ under necessitation (simply by $\rightrule \Box$) and modus ponens (by $\cut$). 
\end{proof}
\section{Appendix for \cref{sec:soundness}}
\label{app-soundness}

\begin{proof}[Proof of Observation \ref{lem:models-form-seq}]
    For left to right, suppose~$v \geq w$ such that~$\Bmod, v \models \bigwedge_{i} B_i$. Then for~$\calI(x)=v$ we have $\Bmod, \calI(x) \models B_i$ for all~$i \leq n$. So by assumption,~$\Bmod, v \models A$ as desired. For the other direction, let~$\calI$ be an interpretation, for which its domain only consists of~$x$. Let~$\Bmod, \calI(x) \models B_i$ for all~$i \leq n$. Then,~$\Bmod, \calI(x) \models \bigwedge_{i} B_i$ and so by assumption~$\Bmod, \calI(x) \models A$.
\end{proof}

\begin{proof}[Proof of Lemma \ref{lem:liftinglemma}]
    In our proof we rely on the lifting lemma in \cite{Simpson94PhD} (Lemma~8.1.3), where $S$ is assumed to be a tree-like sequent. We rely on the fact that each quasi-tree $\relats R$ is supported by a unique tree~$\relats R'$, i.e., $\relats R'$ is a tree and $\Lab(\relats R')=\Lab(\relats R)$. Consider sequent $S = (\relats R' , \Gamma \seqar x : A)$. Note that $\calI$ is also an interpretation of $S'$, so we can now apply Lemma 8.1.3 from \cite{Simpson94PhD} to obtain an interpretation $\calI'$ of $S'$ such that $\calI'(x) = w$ and for all~$y \in \Lab(S)$ we have~$\calI'(y) \geq \calI(y)$. Since $\Bmod$ is transitive, $\calI'$ is also an interpretation for $S$, which concludes the proof.
\end{proof}

\begin{proof}[Proof of Proposition \ref{prop:localsoundness}]
    Let us show the proofs for the rules~$\rightrule\limp, \leftrule\limp, \leftrule\Box, \rightrule\Box, \leftrule\Diamond$, $\rightrule\Diamond$. $\leftrule \wk$, and $\trans$. For $\rightrule \limp$, $\rightrule \Box$ we use the lifting lemma (Lemma \ref{lem:liftinglemma}).

    For rule $\rightrule \limp$, by assumption, $\Bmod, \calI \notmodels \relats R, \Gamma \seqar x : A \limp B$. In other words,  $\Bmod, \calI(y) \models D$ for all~$y : D \in \Gamma$ and $\Bmod, \calI(x) \notmodels A \limp B$. So there exists~$w \geq \calI(x)$ such that $\Bmod, w \models A$ and $\Bmod, w \notmodels B$. By the lifting lemma (Lemma~\ref{lem:liftinglemma}), there is an interpretation~$\calI'$ of~$S$ such that~$\calI'(x)=w$ and~$\calI'(z) \geq \calI(z)$ for all~$z \in \relats \Lab(S)$. By monotonicity (Lemma~\ref{lem:monotonicity}) we have that~$\Bmod, \calI'(y) \models D$ for all~$y : D \in \Gamma$. Also~$\Bmod, \calI'(x) \models A$ and $\Bmod, \calI'(x) \notmodels B$. Therefore,~$\Bmod, \calI' \notmodels \relats R, \Gamma, x : A \seqar x : B$ as desired.

    For rule $\leftrule\limp$, by assumption $\Bmod, \calI \notmodels \relats R, \Gamma_1, \Gamma_2, x : A \limp B \seqar z : C$. So, $\Bmod, \calI(y) \models D$ for all~$y : D \in \Gamma_1 \cup \Gamma_2$,~$\Bmod,\calI(x) \models A \limp B$, and $\Bmod, \calI(z) \notmodels C$. Either $\Bmod, \calI(x) \models A$ or $\Bmod, \calI(x) \notmodels A$. In the latter case, $\Bmod, \calI \notmodels \relats R, \Gamma_1 \seqar x : A$. In the former case, since~$\Bmod,\calI(x) \models A \limp B$, also $\Bmod, \calI(x) \models x: B$, and hence $\Bmod, \calI \notmodels \relats R, \Gamma_2, x : B \seqar z : C$.

    Consider rule $\leftrule \Box$. By assumption we have that $\Bmod, \calI \notmodels \relats R, xRy, \Gamma, x : \Box A  \seqar z : C$. As~$\calI$ is an interpretation we have~$\calI(x)R^\Bmod \calI(y)$. From~$\Bmod,\calI(x) \models \Box A$ it follows that~$\Bmod, \calI(y) \models A$. Hence $\Bmod, \calI \notmodels \relats R, xRy, \Gamma, y : A \seqar z : C$.
    
    For rule $\rightrule \Box$, we have $\Bmod, \calI \notmodels \relats R, \Gamma \seqar x : \Box A$. So there exist worlds $w$ and $v$ such that $\calI(x)\leq wR^\Bmod v$ and $\Bmod, v \notmodels A$. By the lifting lemma we know that there is an interpretation~$\calI'$ of~$S$ satisfying $\calI'(x)=w$ and~$\calI'(z) \geq \calI(z)$ for all~$z \in \Lab(S)$. Let us define an interpretation~$\calJ$ of premiss $\relats R, xRy, \Gamma \seqar y : A$ with fresh label~$y$, that agrees with $\calI'$ on all labels from~$\relats R \cup \{x \}$, and $\calJ(y)=v$. This is a well-defined interpretation by the observation that~$\calJ(x) R^\Bmod \calJ(y)$. By monotonicity (Lemma~\ref{lem:monotonicity}), we have $\Bmod, \calJ' \notmodels \relats R, xRy, \Gamma \seqar y : A$.

    For rule $\leftrule \Diamond$, we have $\Bmod, \calI \notmodels \relats R, \Gamma, x : \Diamond A \seqar z : C$. So there exists a~$v$ such that $\calI(x)R^\Bmod v$ and~$\Bmod, v \models A$. We define a new interpretation~$\calJ$ of tree~$\relats R \cup \{xRy\}$ with fresh label~$y$, that agrees with $\calI$ on all labels from~$\relats R$, and $\calJ(y)=v$. We have~$\Bmod, \calJ \notmodels \relats R, xRy, \Gamma, y : A \seqar z : C$.

    For rule $\rightrule \Diamond$ we have $\Bmod, \calI \notmodels \relats R,xRy, \Gamma \seqar x : \Diamond A$. Since,~$\calI(x)R^\Bmod \calI(y)$ and $\Bmod, \calI(x) \notmodels \Diamond A$, we have $\Bmod, \calI \notmodels \relats R,xRy, \Gamma \seqar y : A$.

    For rule $\leftrule \wk$ we have $\Bmod, \calI \notmodels \relats R, \Gamma, x  :A \seqar z : C$. Since sequent $\relats R, \Gamma, x  :A \seqar z : C$ is assumed to be quasi-tree-like, $\relats R, \Gamma \seqar z : C$ will contain the same labels. So $\calI$ will be an interpretation for $\relats R, \Gamma \seqar \Delta$ and $\Bmod, \calI \notmodels \relats R, \Gamma \seqar z : C$.

    Finally, let us consider rule $\trans$. We have $\Bmod, \calI \notmodels \relats R, xRy,yRz, \Gamma \seqar z : C$. It follows immediately from transitivity of $\calB$ that $\Bmod, \calI \notmodels \relats R, xRy,yRz, xRz, \Gamma \seqar z : C$.
\end{proof}

\begin{proof}
[Proof of \cref{thm:soundness}]
    We proceed by contradiction and we use the fact that rule $\thinning$ is admissible in $\labIGL$. Suppose that~$\labIGL \proves \relats R , \Gamma \seqar x : A$, but suppose that~$\Bmod, \calI \notmodels \relats R , \Gamma \seqar x : A$ for some model~$\Bmod \in \birelIGL$ with interpretation~$\calI$ of~$\relats R$ into $\Bmod$. Let~$\pi$ be the $\infty$-proof of~$\relats R , \Gamma \seqar x : A$ in~$\labIGL$ without the use of the thinning rule~$\thinning$. By the local soundness of the rules (\cref{prop:localsoundness}), we can construct a (possibly) infinite sequence of labelled sequents~$(\relats R_i , \Gamma_i \seqar x_i : A_i)_i$ and interpretations~$\calI_i$ of~$\relats R_i$ into $\Bmod$ such that
    \begin{enumerate}
        \item 
        $(\relats R_0 , \Gamma_0 \seqar x_0 : A_0) = (\relats R , \Gamma \seqar x : A)$ and $\calI_0 = \calI$;
        \item
        $\relats R_{i+1} , \Gamma_{i+1} \seqar x_{i+1} : A_{i+1}$ is a premiss of the rule in~$\pi$ with conclusion $\relats R_i , \Gamma_i \seqar x_i : A_i$;
        \item
        $\Bmod, \calI_i \notmodels \relats R_i ; \Gamma_i \seqar x_i : A_i$; and
        \item
        $\calI_{i+1}(z) \geq \calI_i(z)$ for all $z \in \dom(\calI_i)$. 
    \end{enumerate}
    If the sequence is finite, we end in a sequent $\relats R_n ; \Gamma_n \seqar x_n : A_n$ derived by~$\id$ or~$\leftrule \bot$ which is true in every model by definition, deriving a contradiction with~$\Bmod, \calI_n \notmodels \relats R_n ; \Gamma_n \seqar x_n : A_n$. If the sequence is infinite, it corresponds to an infinite branch in~$\pi$ which has a progressing trace $(y_i)_{i \geq k}$ by definition. So,
    \begin{itemize}
        \item 
        $y_i = y_{i+1}$ in which case~$\calI_i(y_i) \leq \calI_{i+1}(y_{i+1})$ by construction, or;
        \item
        $y_iRy_{i+1}$ appears in~$\relats R_i$, which means that $\calI_i(y_i) R^\Bmod \calI_i(y_{i+1}) \leq \calI_{i+1}(y_{i+1})$.
    \end{itemize}
    The second property occurs infinitely often, which creates a sequence of worlds $(\calI_i(y_i))_{i \geq k}$ in model $\Bmod$ which is increasing in the order~$(\leq ; R^\Bmod)$, which contradicts the fact that~$(\leq ; R^\Bmod)$ is terminating.
\end{proof}
\section{Appendix for \cref{sec:predmodels}}
\label{app-predmodels}

\begin{proof}[Proof of Proposition \ref{prop:birel-sat-implies-pred-sat}]
    Let $\Kmod \in \predIGL$. Following the construction from Section 8.1.1 in~\cite{Simpson94PhD}, we convert $\Kmod$ into a birelational model $\Bmod_\Kmod = (W', \leq',R',V')$ as follow where unary predicate symbol $p \in \Prop$ is considered to be a propositional variable:
   \begin{alignat*}{3}
    W' && \ \ = &\ \ D_W;\\
    \leq' && \ \ = & \ \ \leq_{D_W};\\
    R'  && \ \ = & \ \ R_{D_W};\\
    V'((w,d)) && \ \ = & \ \ \{ p \mid  d \in \Prop_w(p)\}.
\end{alignat*}
Indeed, $\leq'$ is a partial order and $V'$ is monotone. Also the frame properties (F1) and (F2) are easy to check \cite{Simpson94PhD}. Model $\Bmod_\Kmod$ is in $\birelIGL$, because $R'$ is transitive as each $R_w$ is transitive by definition of $\Kmod$ and therefore $R_{D_W}$ is transitive. Moreover, $(\leq' ; R')$ is terminating simply because $(\leq_{D_W};R_{D_w})$ is required to be terminating.

Let $w \in W$ and let $\rho$ be a $w$-environment. By a straightforward induction on $A$ we can prove that $\Kmod, w \models^{\rho[x:=d]} \sttr x A$ if and only if $\Bmod_\Kmod, (w,d) \models A$. Therefore, since $\birelIGL \models A$ we have that $\Bmod_\Kmod, (w,d) \models A$ for all $(w,d) \in D_W$ and so $\Kmod, w \models^{\rho[x:=d]} x : A$ for all $(w,d) \in D_W$. Hence $\Kmod, w \models^\rho \sttr x A$.
\end{proof}
\section{Appendix for \cref{sec:completeness}}
\label{app-countermodel}

We start this appendix by setting up the proof search strategy in a more formal manner than explained in the main text. 

\begin{definition}
\label{def:invertible-rule}
    A rule of the form
    \[
    \vlinf{}{}{S}{S_1 \quad \dots \quad S_n}
    \]
    is \demph{invertible}, if for every instance of the rule, for all $i$, the instance of $S_i$ is provable whenever the instance of $S$ is provable.
\end{definition}

One can show by coinduction that all rules with built-in contraction and built-in weakening, except $\rightrule \limp$ and $\rightrule \Box$, are invertible.

In order to structure the proof search strategy we define saturated sequents.

\begin{definition}[Saturation]
    Let $\relats R, \Gamma \seqar \Delta$ be a sequent that contains labelled formula $x:A$. We say that $x : A$ is \demph{saturated} in $\relats R, \Gamma \seqar \Delta$ if the following conditions hold based on the form of $A$:  
    \begin{itemize}
        \item $A$ equals $\bot$ or $p$ for some $p \in \Prop$;
        \item if $A = A_1 \wedge A_2$ and $x:A \in \Gamma$, then $x:A_1 \in \Gamma$ and $x:A_2 \in \Gamma$;
        \item if $A = A_1 \wedge A_2$ and $x:A \in \Delta$, then $x:A_1 \in \Delta$ or $x:A_2 \in \Delta$;
        \item if $A = A_1 \vee A_2$ and $x:A \in \Gamma$, then $x:A_1 \in \Gamma$ or $x:A_2 \in \Gamma$;
        \item if $A = A_1 \vee A_2$ and $x:A \in \Delta$, then $x:A_1 \in \Delta$ and $x:A_2 \in \Delta$;
        \item if $A = A_1 \limp A_2$ and $x:A \in \Gamma$, then $x:A_1 \in \Delta$ or $x:A_2 \in \Gamma$;
        \item if $A = \Box A_1$ and $x:A \in \Gamma$, then for all $y$ with $xRy \in \relats R$, $y : A_1 \in \Gamma$;
        \item if $A = \Diamond A_1$ and $x:A \in \Gamma$, there exists $y$ s.t.~$xRy \in \relats R$ and $y : A_1 \in \Gamma$;
        \item if $A = \Diamond A_1$ and $x:A \in \Delta$, then for all $y$ with $xRy \in \relats R$, $y : A_1 \in \Delta$.
    \end{itemize}
    A sequent $\relats R; \Gamma \seqar \Delta$ is \demph{saturated} if all its formulas are saturated and $\relats R$ is transitively closed.
\end{definition}

Now we define the proof search strategy. During bottom-up proof search we will always proceed according to the three following phases in order of priority: applications of rule $\trans$, applications of invertible rules other than $\trans$, and application of non-invertible rules. The first two together we call the invertible phase, the other the non-invertible phase. More precisely, when possible, we apply rule $\trans$. If not, if possible, we apply another invertible rule for which the principal formula is not saturated and create predecessor node(s) labelled by the premiss(s). We will argue below that each invertible phase is finite and ends in saturated sequents. When $\relats R, \Gamma \seqar \Delta$ is saturated, we create predecessor nodes for each possible rule instance of $\rightrule \limp$ and $\rightrule \Box$, i.e., for each $x : A \impl B \in \Delta$ we create predecessor $\relats R, \Gamma, x : A \seqar x : B$, and for each $x : \Box A \in \Delta$ we create predecessor node $\relats R, xRy, \Gamma \seqar y : A$, where $y$ is fresh.

Now we prove Lemma~\ref{lem:invertiblefinite} stated in the main text, i.e., we show that given a proof search tree as defined above, each invertible phase is a finite subtree with saturated sequents in its leaves.

\begin{proof}[Proof of Lemma~\ref{lem:invertiblefinite}]
    By construction of the invertible phase, each leaf is saturated. Let $\relats R, \Gamma \seqar \Delta$ be the start sequent of the invertible phase. We argue that the invertible phase terminates. First observe that, bottom-up, the number of labelled formulas in a sequent increase for each rule because we only apply a rule in the invertible phase when the concerned principal formulas is not saturated. To ensure termination, we show that the number of labelled formulas of each sequent in the invertible phase is bounded. 
    
    Let $\relats R',\Gamma' \seqar \Delta'$ be a sequent in the invertible phase. We first show that the number of different labels in $\Gamma'$ is bounded. A new label in $\Gamma'$ can only be created via the rules $\leftrule \Box$ and~$\leftrule \Diamond$. Note that rule $\rightrule \limp$ can also create a new label in $\Gamma'$, but this rule is not part of the invertible phase. Therefore, each new formula in $\Gamma'$ is a subformula of another formula in $\Gamma'$. As there are only finitely many subformulas, and we only apply the rules to non-saturated sequents, one can only apply rules $\leftrule \Box$ and $\leftrule \Diamond$ finitely many times. Hence, there are only finitely many extra labels in $\Gamma'$.

    Now for each label $x$, again by saturation and the fact that new formulas in the sequent are subformulas of the sequent, one can only create finitely many formulas labelled by $x$. Since the number of labels is bounded, the number of formulas occurring in $\relats R',\Gamma' \seqar \Delta'$ is bounded.
\end{proof}

Now we set up the proof of Proposition~\ref{prop:denier-subtree} using techniques from non-wellfounded proof theory. As mentioned in the main text, a classical countermodel-from-failed-proof search argument proceeds (very roughly) as follows: (1) Assume a formula is not provable, (2) For each rule instance there must be an unprovable premiss, (3) Continue in this way to construct an (infinite) `unprovable' branch, (4) Extract a countermodel from this branch. In our setting we will need the branch obtained through the process above to be  \emph{not progressing} in order to deduce that the structure we extract is indeed one of $\predIGL$. However the local nature of the process above does not at all guarantee that this will be the case. 

In the setting of classical $\GL$, the invertibility of the rules of $\labGL$ allow us to nonetheless carry out a na\"ive such argument. For the intuitionistic setting, where invertibe and non-invertible rules may interact in a nontrivial way, we must use more powerful machinery.

\begin{definition}
    [Proof search game]
    \label{def:proofsearchgame}
    For a system $\system$, the \demph{proof search game} is a game between two players, Prover ($\prover$) and Denier ($\denier$), played as follows:
    \begin{itemize}
        \item The game initialises on a sequent $S$ of $\system$.
        \item $\prover$ chooses a rule instance of $\system$ that has $S$ as its conclusion.
        \item $\denier$ chooses a premiss of that rules instance, and they repeat the process.
    \end{itemize}
    A \demph{play} of the game is a maximal sequence of moves as described above.
    An infinite play of the game is \demph{won} by $\prover$ (and \demph{lost} by $\denier$) just if it has a progressing thread. {(If deadlock is reached, the player with no valid move loses.)}
\end{definition}

It is not hard to see that $\prover$-strategies from $S$ are \emph{just} $\system$-preproofs of $S$, and that \emph{winning} $\prover$-strategies from $S$ are just $\system$-proofs of $S$.
What is more difficult is to show that winning $\denier$-strategies induce countermodels; this is what we do in the next subsection.

Before that, we better state the following:

\begin{proposition}
    [Determinacy, $\exists 0^\#$]
    \label{prop:determinacy}
    The proof search game for $\system$ is determined.
\end{proposition}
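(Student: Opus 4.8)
The plan is to present the proof search game as a Gale--Stewart game on Baire space whose payoff set is analytic, and then to quote analytic determinacy. First I would fix the starting sequent $S$ and observe that the tree of (partial) plays from $S$ is countably branching: at a Prover position there are at most countably many rule instances with the given conclusion (the only sources of branching being the choices of fresh label, of cut-formula, and the like, all drawn from countable sets), while at a Denier position there are only finitely many premisses to choose among. Coding the successive moves by natural numbers embeds the set of infinite plays as a closed subset of $\omega^\omega$. Finite maximal plays, which arise exactly when Prover plays an initial sequent ($\id$ or $\Lbot$) and so leaves Denier with no premiss, are assigned to Prover by a clopen condition, and genuine deadlock positions are resolved by the convention in \cref{def:proofsearchgame} (one can, as usual, pad such positions with a fixed dummy move so that every play is infinite without changing the complexity). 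This places us in the standard setting of a game on $\omega^\omega$.

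Next I would pin down the complexity of the payoff set $A \subseteq \omega^\omega$, i.e.\ the set of plays won by Prover. By \cref{def:proofsearchgame}, an infinite play $(S_i)_{i<\omega}$ lies in $A$ iff it carries a progressing thread; unfolding the definitions of trace and progress, this reads: there exists an infinite sequence of variables $(x_i)_{i\geq k}$ such that for every $i\geq k$ either $x_{i+1}=x_i$ or $x_i R x_{i+1}$ occurs in $S_i$, and the latter alternative occurs infinitely often. As a predicate of the pair consisting of the play and the candidate thread (both coded as reals), the matrix is arithmetic: it is the conjunction of a $\Pi^0_1$ ``legal trace'' clause and a $\Pi^0_2$ ``infinitely many progress points'' clause, hence Borel. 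The sole remaining quantifier is the existential over the thread, a single real quantifier, so $A$ is the projection of a Borel set, i.e.\ $A$ is $\Sigma^1_1$ (analytic). I would emphasise that, because the arena is infinitely branching and the supply of labels unbounded, $A$ is in general properly analytic and not Borel, which is exactly why Borel determinacy does not suffice here.

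Finally I would conclude by invoking (lightface) analytic determinacy, which holds under the existence of $0^\sharp$ (Martin; Harrington \cite{harrington_1978}): every $\Sigma^1_1$ game on $\omega^\omega$ is determined. Applying this to the payoff set $A$ yields a winning strategy for Prover or for Denier, which is precisely the statement that the proof search game for $\system$ is determined.

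I expect the main obstacle to be the careful verification of the complexity bound, rather than the determinacy appeal itself: making the coding of plays and of threads as reals fully precise, checking that the ``legal trace'' and ``infinitely often progressing'' clauses sit at the claimed arithmetic levels uniformly in this coding, and confirming that one genuinely lands in $\Sigma^1_1$ so that analytic (and not merely Borel) determinacy is both necessary and available. Once the payoff set is seen to be analytic, the result is an immediate instance of the cited principle.
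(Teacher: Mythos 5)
Your proposal is correct and follows exactly the paper's own (much more compressed) argument: the paper likewise observes that the payoff condition ``there exists an infinite thread that progresses infinitely often'' is lightface $\Sigma^1_1$ and then directly invokes analytic determinacy via Harrington \cite{harrington_1978}. Your additional work --- coding plays into Baire space, handling finite plays, and verifying the arithmetic complexity of the ``legal trace'' and ``infinitely often progressing'' clauses --- is just a careful unfolding of what the paper leaves implicit, not a different route.
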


This is a consequence of \emph{analytic determinacy} as the progressing condition is indeed (lightface) $\Sigma^1_1$: ``there \emph{exists} an infinite thread that progresses infinitely often'' \cite{harrington_1978}. 

One can view the proof search space of $\multiintlabGL$ as a collection of plays of the proof search game (Definition \ref{def:proofsearchgame}), where each branch represents a play. We use determinacy of the game to extract a `failed branch' and construct the countermodel from that.

\begin{proof}
[Proof of Proposition~\ref{prop:denier-subtree}]
If there is no proof of sequent $S = \relats R, \Gamma \seqar \Delta$, we know by determinacy of the game (Proposition \ref{prop:determinacy}) that there is a winning strategy for Denier ($\denier$) for $\relats R, \Gamma \seqar \Delta$. So, when navigating the proof search space from bottom-up, whenever we must make a choice at a branching rule in an invertible phase, we can simply follow the choice given by the winning strategy for $\denier$. Starting from the bottom sequent, we perform this strategy and for each non-invertible branching that we encounter, we continue the same process for each branch. As we follow the winning strategy of $\denier$, no infinite branch can be progressing.
\end{proof}

\begin{proof}[Proof of \cref{thm:countermodel-construction}]
We rely on the properties from Lemma~\ref{lem:invertiblefinite} and Proposition~\ref{prop:denier-subtree} for $S = (\relats R, \Gamma \seqar \Delta)$. By Lemma~\ref{lem:invertiblefinite}, the subtree $T$ found in Proposition~\ref{prop:denier-subtree} consists of countably many finite segments of the form $S_1 \dots S_n$ where $S_1$ is the start sequent or premiss of a non-invertible rule, and $S_n$ is saturated. We assume that the segments are indexed from index set $I$. 

To define the countermodel, we use the fact that for each segment $i$ of the form $S^i_1 \dots S^i_{n_i}$, we know that all its sequents are contained in $S^i_n = \relats R_i, \Gamma_i \seqar \Delta_i$. So for each such segment~$i$, consider $S^i_{n_i}$. We define the structure $\Kmod = (W,\leq,\{D_w\}_{w \in W}, \{\Prop_w\}_{w \in W}, \{ R_w \}_{w \in W})$ as follows:
\begin{align*}
    W & \ =  \ \{ w_i \mid i \in I \};\\
    w_i \leq w_j & \text{ \ if{f} } \ i = j \text{ or }\relats R_i,\Gamma_i \seqar \Delta_i \text{ becomes below } \relats R_j;\Gamma_j \seqar \Delta_j \text{ in the proof search tree};\\
    D_{w_i} & \ = \ \Lab(\relats R_i);\\
    xR_{w_i}y & \text{ \ if{f} } \ xRy \in \relats R_i;\\
    \Prop_{w_i}(p) & \ = \ \{ x \in \Lab(\relats R_i) \mid x : p \in \Gamma_i \}.
\end{align*}
It remains to show that $\Kmod$ is in $\predIGL$ and that it satisfies the conditions from the theorem. 

We first prove that $\Kmod$ is an intuitionistic structure. By construction of the rules in the countermodel construction, for each segments $i$ that is below a segment $j$ we have $\relats R_i \subseteq \relats R_j$ and $\Gamma_i \subseteq \Gamma_j$. Therefore, for each $w_i \leq w_j$, it holds that $D_{w_i} \subseteq D_{w_j}$, $\Prop_{w_i}(p) \subseteq \Prop_{w_j}(p)$ for each $p \in \Prop$, and $R_{w_i} \subseteq R_{w_j}$. Also note that $\leq$ is a partial order.

Each $R_{w_i}$ is transitive because sequents $\relats R_i, \Gamma_i \seqar \Delta_i$ are saturated by construction. In addition, relation $(\leq_{D_W} ; R_{D_W})$ as defined in Section \ref{sec:predmodels} is terminating as proved as follows. Suppose we have an infinite branch in the model:
\[
(w^1,y^1) \leq_{D_W} (w^2,y^1) R_{D_W}  (w^2,y^2) \leq_{D_W} (w^3,y^2) R_{D_W}  (w^3,y^3) \dots
\]
This means that there is an infinite branch in the strategy of $\denier$, say $(\relats R^k, \Gamma^k \seqar x^k : A^k)_k$ such that $y^k R y^{k+1} \in \relats R^k$ for infinitely many $k$. But this means that this branch has a progressing trace, which cannot be the case in the game strategy of $\denier$, a contradiction. Therefore, $(\leq_{D_W} ; R_{D_W})$ is terminating, and therefore $\Kmod \in \predIGL$.

Now we show that for all $w_i$-environments $\rho_{w_i}$ with $\rho_{w_i}(x)=x$ for $x \in \Lab(\relats R_i)$, we have for all labelled formulas $x : A$ and all $i \in I$:
\begin{align*}
    &\text{if } x : A \in \Gamma_i, \text{ then } \Kmod, w_i \models^{\rho_{w_i}} \STsub{x}{A}, \text{ and, }\\
    &\text{if } x : A \in \Delta_i, \text{ then } \Kmod, w_i \notmodels^{\rho_{w_i}} \STsub{x}{A}.
\end{align*}
We do so by induction on $A$. We show some cases and leave the remaining cases to the reader.
\begin{itemize}
    \item 
    For $A = p$, if $x : p \in \Gamma_i$, then $\Kmod, w_i \models^{\rho_{w_i}} x : p$ by definition of the structure and definition of $\rho$. 
    
    If $x : p \in \Delta_i$ we observe that $x : p \notin \Gamma_i$, otherwise sequent $\relats R_i, \Gamma_i \seqar \Delta_i$ would be provable which cannot be the case for a sequent in $\denier$'s strategy. Hence, $\Kmod, w_i \notmodels^{\rho_{w_i}} x : p$.
    
    \item 
    For $A = A_1 \wedge A_2$, 
    if $x : A_1 \wedge A_2 \in \Gamma_i$, then by saturation, $x : A_1 \in \Gamma_i$ and $x : A_2 \in \Gamma_i$. By induction hypothesis we have $\Kmod, w_i \models^{\rho_{w_i}} \STsub{x}{A_1}$ and $\Kmod, w_i \models^{\rho_{w_i}} \STsub{x}{A_2}$. Hence $\Kmod, w_i \models^{\rho_{w_i}} \STsub{x}{A_1 \wedge A_2}$. Similar reasoning applies when $x : A_1 \wedge A_2 \in \Delta_i$.
    
    \item 
    For $A = A_1 \impl A_2$, 
    if $x : A_1 \impl A_2 \in \Gamma_i$, let $w_j \geq w_i$ such that $\Kmod, w_j \models^{\rho_{w_i}} \STsub{x}{A_1}$. By the proof search, $\Gamma_i \subseteq \Gamma_j$, therefore $x : A_1 \impl A_2 \in \Gamma_j$. By saturation we have $x:A_1 \in \Delta_j$ or $x:A_2 \in \Gamma_j$. The first cannot occur as by induction hypothesis we would have $\Kmod, w_j \notmodels^{\rho_{w_j}} \STsub{x}{A_1}$ which is in contradiction to the assumption. (Note that $x : A_1$ has no other free variable than $x$ and indeed $\rho_{w_i}(x)=x=\rho_{w_j}(x)$.) By applying the induction hypothesis to the second we obtain $\Kmod, w_j \models^{\rho_{w_i}} \STsub{x}{A_2}$ as desired.

    If $x : A_1 \impl A_2 \in \Delta_i$, then for an immediate segment $j$ above segment $i$ in the proof search tree we have $x : A_1 \in \Gamma_j$ and $x : A_2 \in \Delta_j$. By induction hypothesis we have $\Kmod, w_j \models^{\rho_{w_j}} \STsub{x}{A_1}$ and $\Kmod, w_j \notmodels^{\rho_{w_j}} \STsub{x}{A_2}$ (and by similar reasoning from above $\Kmod, w_j \models^{\rho_{w_i}} \STsub{x}{A_1}$ and $\Kmod, w_j \notmodels^{\rho_{w_i}} \STsub{x}{A_2}$). So, since $w_i \leq w_j$, we have $\Kmod, w_i \notmodels^{\rho_{w_i}} \STsub{x}{A_1 \impl A_2}$.

    \item
    For $A = \Box A'$, 
    if $x : \Box A' \in \Gamma_i$, let $w_j \geq w_i$ and let $y \in D_{w_j} = \Lab(\relats R_j)$. Suppose 
    $R_{w_j}(\rho_{w_i}(x),y)$. Note that $\rho_{w_i}(x)=x$, and thus $xRy \in \relats R_j$.
    Note that $x : \Box A' \in \Gamma_j$. By saturation, $y : A' \in \Gamma_j$. By induction hypothesis we conclude $\Kmod, w_j \models^{\rho_{w_j}} \STsub{y}{A'}$. Since $\rho_{w_j}(y)=y$, we conclude $\Kmod, w_j \models^{\rho_{w_i}[y:=y]} \STsub{y}{A'}$. 

    If $x : \Box A' \in \Delta_i$, then for an immediate segment $j$ above $i$ in the proof search tree we have $xRy \in \relats R_j$ and $y : A' \in \Delta_j$. So by induction hypothesis $\Kmod, w_j \notmodels^{\rho_{w_j}} \STsub{y}{A'}$. Since $w_i \leq w_j$ and $x R_{w_j} y$, and since $\rho_{w_j}(y)=y$ we have $\Kmod, w_i \notmodels^{\rho_{w_i}[y:=y]} \STsub{y}{\Box A'}$. 
\end{itemize}
\end{proof}

\section{Appendix for \cref{sec:cut-elim}}
\label{app:cut-elim}

\begin{proof}
    [Proof of \cref{prop:cut-free-multi-to-dis-lab-cut-sys}]
    Replace the RHS of any $\mlIKfour$ $\infty$-proof by the disjunction of its labelled formulas (without repetitions), and locally simulate any right step by deriving the corresponding implication and cutting against it.
    Since cuts are inserted locally on only logical formulas, not interfering with relational atoms, the resulting preproof is indeed progressing.
\end{proof}

For \cref{lem:rel-cxt-from-bars-preserved,lem:push-cuts-above-bar}, we describe $\mathsf r(B)$ along with the cut-reduction cases in the next subsection, whence the arguments follow by inspection of the cases.

\begin{proof}
    [Proof of \cref{lem:invertibility-of-orleft}]
    Replace $\phi_0 \lor \phi_1$ and all its direct ancestors by $\phi_i$. The only critical cases are when it is principal, in which case simply take the $i$-premiss (and choose the corresponding branch).
    Formally, the argument is structured as a coinduction on $P$.
\end{proof}

\begin{proof}
    [Proof of \cref{prop:cut-degree-reduction}]
    Starting with a $\infty$-proof $P$, write $B_1, B_2,\dots$ for its bars of heights $1,2,\dots$ respectively.
    Define a sequence of cut-reductions  $P_0 \leadsto^*_{\vec \cutred_1} P_1 \leadsto^*_{\vec\cutred_2}  \cdots $ as follows:
    \begin{itemize}
        \item Set $P_0 := P$.
        \item $\vec \cutred_{n+1}$ is obtained by applying \cref{lem:rel-cxt-from-bars-preserved} to $P_n$ and $\vec \cutred_n \cdots \vec \cutred_1 (B_{n+1})$.
    \end{itemize}
    So, e.g., $\vec \cutred_1$ is obtained by applying \cref{lem:rel-cxt-from-bars-preserved} to $P_0$ and $B_1$; $\vec \cutred_2$ is obtained by applying \cref{lem:rel-cxt-from-bars-preserved} to $P_1$ and $\vec \cutred_1 (B_2)$; and so on.
    Since cut-reductions leave nodes beneath untouched, this process must approach a limit preproof, say $P_{\omega}$.

    We must now show that $P_\omega$ is progressing.
    Consider an infinite branch $(S_i)_{i<\omega}$ of $P_\omega$ and write $\relats R := \bigcup\limits_{i<\omega} \relcxt {S_i}$.
    Notice that, by \cref{lem:rel-cxt-from-bars-preserved}, $\relats R$ must contain the relational contexts of arbitrarily high sequents from $P$.
    Specifically, the maximal prefix of $(S_i)_{i<\omega}$ consistent with $P_{n+1}$ ends at $\vec \cutred_n \cdots \vec \cutred_1 (B_{n+1})$ and so at a sequent whose relational context contains that of some sequent in $B_{n+1}$, i.e.\ at height $n$, by \cref{lem:rel-cxt-from-bars-preserved}.

    Finally, since relational contexts are growing, bottom-up, notice that the set of sequents of $P$ whose relational contexts are contained in $\relats R$ forms a subtree $T$ of $P$.
    Since, as just discussed, $T$ contains arbitrarily high nodes of $P$, it is infinite and so has an infinite branch by K\"onig's Lemma.
    Thus $T$ has a progressing trace and so by definition $\relats R$ has an infinite path.
    Again by definition, this means $(S_i)_{i<\omega}$ has a progressing trace.
\end{proof}

\subsection{Cut-reduction cases and $\mathsf r(B)$}

\subsubsection{Key case}

This is the key cut-reduction on $\phi$-formulas again:
\[
\small
\vlderivationnc{
\vliin{\cut}{}{\relats R, \Gamma, \Gamma' \seqar \phi}{
    \vlin{\rightrule \lor}{}{\relats R, \Gamma \seqar \chi_0 \lor \chi_1}{
    \vltr{P}{\relats R, \Gamma, \seqar \chi_i }{\vlhy{\ }}{\vlhy{\ }}{\vlhy{\ }}
    }
}{
    \vltr{Q}{\relats R, \Gamma', \chi_0 \lor \chi_1 \seqar \phi}{\vlhy{\ }}{\vlhy{\ }}{\vlhy{\ }}
}
}
\qquad \leadsto \qquad 
\vlderivationnc{
\vliin{\cut}{}{\relats R, \Gamma, \Gamma' \seqar \phi}{
    \vltr{P}{\relats R, \Gamma \seqar \chi_i }{\vlhy{\ }}{\vlhy{\ }}{\vlhy{\ }}
}{
    \vltr{Q_i}{\relats R, \Gamma' , \chi_i \seqar \phi}{\vlhy{\ }}{\vlhy{\ }}{\vlhy{\ }}
}
}
\]
where $Q_i$ is obtained by \cref{lem:invertibility-of-orleft} above.

Here, if $B$ intersects $P$ and $Q$ before reduction then $\mathsf r(B)$ consists of the corresponding points after reduction.
If $B$ intersects the premiss of the cut or $\rightrule \lor$ before reduction then $\cutred (B)$ consists of the conclusion of $P$ after reduction, as well as the corresponding part of $Q$ before the reduction. 
Note in all these cases that the degree below the bar has decreased.
If $B$ is below (or including) the conclusion of the cut before reduction, then $\cutred(B)$ is the corresponding bar after reduction, i.e.\ the cut-reduction leaves the bar unchanged. 

The definition of $\cutred(B)$ is similar for all the commutative cases below, always designed to decrease the distance of a topmost $d$-cut from the bar.

\subsubsection{Commutative cases}

This is the commutation over a $\leftrule\limp $ step,
\[
\footnotesize
\vlderivationnc{
\vliin{\cut}{}{\relats R, \Gamma, \Gamma', \Gamma'', x:A\limp B \seqar \phi }{
    \vliin{\leftrule \limp}{}{\relats R, \Gamma, \Gamma', x:A\limp B \seqar \chi}{
        \vltr{P}{\relats R, \Gamma \seqar x:A}{\vlhy{\ }}{\vlhy{\ }}{\vlhy{\ }}
    }{
        \vltr{Q}{\relats R, \Gamma', x:B \seqar \chi}{\vlhy{\ }}{\vlhy{\ }}{\vlhy{\ }}
    }
}{
    \vltr{R}{\relats R, \Gamma'', \chi \seqar \phi}{\vlhy{\ }}{\vlhy{\ }}{\vlhy{\ }}
}
}
\quad \leadsto \quad
\vlderivationnc{
\vliin{\leftrule \limp}{}{\relats R, \Gamma, \Gamma', \Gamma'', x:A\limp B \seqar \phi}{
    \vltr{P}{\relats R, \Gamma \seqar x:A}{\vlhy{\ }}{\vlhy{\ }}{\vlhy{\ }}
}{
    \vliin{\cut}{}{\relats R, \Gamma', \Gamma'', x:B \seqar \phi}{
        \vltr{Q}{\relats R, \Gamma', x:B \seqar \chi}{\vlhy{\ }}{\vlhy{\ }}{\vlhy{\ }}
    }{
        \vltr{R}{\relats R, \Gamma'', \chi \seqar \phi}{\vlhy{\ }}{\vlhy{\ }}{\vlhy{\ }}
    }
}
}
\]

This is the commutation over a  $\leftrule \lor$ step,
\[
\footnotesize
\vlderivationnc{
\vliin{\cut}{}{\relats R, \Gamma, \Gamma' , \psi_0 \lor \psi_1 \seqar \phi}{
    \vliin{\leftrule \lor}{}{\relats R, \Gamma, \psi_0 \lor \psi_1 \seqar \chi}{
        \vltr{P_0}{\relats R, \Gamma, \psi_0 \seqar \chi}{\vlhy{\ }}{\vlhy{\ }}{\vlhy{\ }}
    }{
        \vltr{P_1}{\relats R, \Gamma, \psi_1 \seqar \chi}{\vlhy{\ }}{\vlhy{\ }}{\vlhy{\ }}
    }
}{
    \vltr{Q}{\relats R, \Gamma' , \chi \seqar \phi}{\vlhy{\ }}{\vlhy{\ }}{\vlhy{\ }}
}
}
\quad \leadsto \quad
\vlderivationnc{
\vliin{\leftrule \lor}{}{\relats R, \Gamma, \Gamma' , \psi_0 \lor \psi_1 \seqar \phi }{
    \vliin{\cut}{}{\relats R, \Gamma, \Gamma' , \psi_0 \seqar \phi}{
        \vltr{P_0}{\relats R, \Gamma, \psi_0 \seqar \chi}{\vlhy{\ }}{\vlhy{\ }}{\vlhy{\ }}
    }{
        \vltr{Q}{\relats R, \Gamma' , \chi \seqar \phi}{\vlhy{\ }}{\vlhy{\ }}{\vlhy{\ }}
    }
}{
    \vliin{\cut}{}{\relats R, \Gamma, \Gamma', \psi_1 \seqar \phi}{
        \vltr{P_1}{\relats R, \Gamma, \psi_1 \seqar \chi}{\vlhy{\ }}{\vlhy{\ }}{\vlhy{\ }}
    }{
        \vltr{Q}{\relats R, \Gamma' , \chi \seqar \phi}{\vlhy{\ }}{\vlhy{\ }}{\vlhy{\ }}
    }
}
}
\]

This is the commutation over a $\leftrule \Diamond$ step:
\[
\footnotesize
\vlderivationnc{
\vliin{\cut}{}{\relats R, \Gamma, \Gamma', x:\Diamond A \seqar \phi}{
    \vlin{\leftrule \Diamond}{}{\relats R, \Gamma, x: \Diamond A \seqar \chi}{
    \vltr{P}{\relats R, xRy, \Gamma , y:A \seqar \chi}{\vlhy{\ }}{\vlhy{\ }}{\vlhy{\ }}
    }
}{
    \vltr Q {\relats R, \Gamma' , \chi \seqar \phi}{\vlhy{\ }}{\vlhy{\ }}{\vlhy{\ }}
}
}
\quad \leadsto \quad
\vlderivationnc{
\vlin{\leftrule \Diamond}{}{\relats R, \Gamma, \Gamma' , x:\Diamond A \seqar \phi}{
\vliin{\cut}{}{\relats R, xRy, \Gamma, \Gamma', y:A \seqar \phi}{
    \vltr{P}{\relats R, xRy, \Gamma , y:A \seqar \chi}{\vlhy{\ }}{\vlhy{\ }}{\vlhy{\ }}
}{
    \vltr {xRy,Q} {xRy,\relats R, \Gamma' , \chi \seqar \phi}{\vlhy{\ }}{\vlhy{\ }}{\vlhy{\ }}
}
}
}
\]
where $xRy,Q$ is obtained from $Q$ by prepending $xRy$ to the LHS of each sequent.

This is the commutation over a $\leftrule \Box$ step:
\[
\small
\vlderivationnc{
\vliin{\cut}{}{\relats R, xRy, \Gamma, \Gamma', x:\Box A \seqar \phi}{
    \vlin{\leftrule \Box}{}{\relats R, xRy, \Gamma , x:\Box A \seqar \chi}{
    \vltr{P}{\relats R, xRy,\Gamma, y: A \seqar \chi }{\vlhy{\ }}{\vlhy{\ }}{\vlhy{\ }}
    }
}{
    \vltr{Q}{\relats R, xRy, \Gamma' , \chi\seqar \phi }{\vlhy{\ }}{\vlhy{\ }}{\vlhy{\ }}
}
}
\quad \leadsto\quad
\vlderivationnc{
\vlin{\leftrule \Box}{}{\relats R, xRy, \Gamma, \Gamma' , x:\Box A \seqar \phi}{
\vliin{\cut}{}{\relats R, xRy, \Gamma, \Gamma', y:A \seqar \phi}{
    \vltr{P}{\relats R, xRy,\Gamma, y: A \seqar \chi }{\vlhy{\ }}{\vlhy{\ }}{\vlhy{\ }}
}{
    \vltr{Q}{\relats R, xRy, \Gamma' , \chi\seqar \phi }{\vlhy{\ }}{\vlhy{\ }}{\vlhy{\ }}
}
}
}
\]

This is the commutation over a $\leftrule\contr$ step:
\[
\vlderivationnc{
\vliin{\cut}{}{\relats R, \Gamma, \psi ,\Gamma' \seqar \phi}{
    \vlin{\leftrule \contr}{}{\relats R, \Gamma , \psi \seqar \chi}{
    \vltr{P}{\relats R, \Gamma, \psi, \psi \seqar \chi}{\vlhy{\ }}{\vlhy{\ }}{\vlhy{\ }}
    }
}{
    \vltr{Q}{\relats R, \Gamma', \chi \seqar \phi}{\vlhy{\ }}{\vlhy{\ }}{\vlhy{\ }}
}
}
\quad\leadsto\quad
\vlderivationnc{
\vlin{\leftrule \contr}{}{\relats R, \Gamma, \psi, \Gamma' \seqar \phi}{
\vliin{\cut}{}{\relats R, \Gamma, \psi, \psi, \Gamma' \seqar \phi}{
    \vltr{P}{\relats R, \Gamma, \psi, \psi \seqar \chi}{\vlhy{\ }}{\vlhy{\ }}{\vlhy{\ }}
}{
    \vltr{Q}{\relats R, \Gamma', \chi \seqar \phi}{\vlhy{\ }}{\vlhy{\ }}{\vlhy{\ }}
}
}
}
\]
The commutations over $\leftrule \wk$ and $\leftrule \land$ are similar.

\end{document}